\newcounter{resultnum}[section]\setcounter{resultnum}{0}
\newcounter{conclusionnum}[section]\setcounter{conclusionnum}{0}
\newcounter{conditionnum}[section]\setcounter{conditionnum}{0}
\newcounter{conjecturenum}[section]\setcounter{conjecturenum}{0}
\newcounter{examplenum}[section]\setcounter{examplenum}{0}
\newcounter{exercisenum}[section]\setcounter{exercisenum}{0}
\newtheorem{lemma}{Lemma}[section]
\newcounter{lemmanum}[section]\setcounter{lemmanum}{0}
\newcounter{notationnum}[section]\setcounter{notationnum}{0}
\newtheorem{theorem}{Theorem}[section]
\newcounter{theoremnum}[section]\setcounter{theoremnum}{0}
\newtheorem{definition}{Definition}[section]
\newcounter{definitionnum}[section]\setcounter{definitionnum}{0}
\newtheorem{corollary}{Corollary}[section]
\newcounter{corollarynum}[section]\setcounter{corollarynum}{0}
\newcounter{remarknum}[section]\setcounter{remarknum}{0}
\newtheorem{proposition}{Proposition}[section]
\newcounter{propositionnum}[section]\setcounter{propositionnum}{0}
\newcounter{acknowledgementnum}[section]\setcounter{acknowledgementnum}{0}
\newcounter{algorithmnum}[section]\setcounter{algorithmnum}{0}
\newcounter{axiomnum}[section]\setcounter{axiomnum}{0}
\newcounter{casenum}[section]\setcounter{casenum}{0}
\newcounter{claimnum}[section]\setcounter{claimnum}{0}
\newcounter{summarynum}[section]\setcounter{summarynum}{0}
\newcounter{problemnum}[section]\setcounter{problemnum}{0}
\newenvironment{proof}[1][]{\textbf{Proof.} }{}
\begin{document}
\title{\textbf{Fedosov Quantization of Lagrange--Finsler and
Hamilton--Cartan Spaces and Einstein Gravity Lifts on (Co) Tangent Bundles}}
\date{November 9, 2008}
\author{\textbf{Mihai Anastasiei}\thanks{anastas@uaic.ro} \\
\textsl{{\small Faculty of Mathematics, University "Al. I. Cuza" Ia\c{s}i,}}
\\
\textsl{{\small 11, Carol I Boulevard, Ia\c{s}i, Romania, 700506 }} \\
\textsl{\small and } \\
\textsl{{\small Mathematical Institute "O. Mayer", Romanian Academy Ia\c{s}i
Branch,}} \\
\textsl{{\small 8, Carol I Boulevard, Ia\c{s}i, Romania, 700506 } } \and
\textbf{Sergiu I. Vacaru}\thanks{Sergiu.Vacaru@gmail.com ;\
http://www.scribd.com/people/view/1455460-sergiu} \\
\textsl{\small The Fields Institute for Research in Mathematical Science} \\
\textsl{\small 222 College Street, 2d Floor, Toronto \ M5T 3J1, Canada} \\
\textsl{\small and } \\
\textsl{{\small Faculty of Mathematics, University "Al. I. Cuza" Ia\c{s}i,}}
\\
\textsl{{\small 11, Carol I Boulevard, Ia\c{s}i, Romania, 700506 }}
}
\maketitle

\begin{abstract}
We provide a method of converting Lagrange and Finsler spaces and
their Legendre transforms to Hamilton and Cartan spaces into
almost K\"{a}hler structures on tangent and cotangent bundles. In
particular cases, the Hamilton spaces contain nonholonomic lifts
of (pseudo) Riemannian / Einstein metrics on effective phase
spaces. This allows us to define the corresponding Fedosov
operators and develop deformation quantization schemes for
nonlinear mechanical and gravity models on Lagrange-- and
Hamilton--Fedosov manifolds.

\vskip0.3cm \textbf{Keywords:}\ Deformation quantization; quantum gravity;
Finsler, Lagrange, Hamilton and Cartan spaces, almost K\"{a}hler geometry.

\vskip3pt

MSC:\ 83C99, 53D55, 53B40, 53B35

PACS:\ 04.20.-q, 02.40.-k, 02.90.+g, 02.40.Yy
\end{abstract}


\section{ Introduction}

To construct a quantum theory for a given classical model is
usually understood that it is necessary to elaborate a
quantization procedure adapted to certain fundamental field
equations and corresponding Lagrangians/ Hamiltonians and theirs
symmetries, constraints and locality. In various attempts to
develop quantum versions of gravity and nonlinear physical
theories, one provided different computation schemes when it is
supposed that all constraints can be solved, gauge symmetries can
be represented by shift symmetries and there are certain canonical
forms for the Poisson brackets. Nevertheless, such quantization
usually destroy global symmetries, locality and may result into
violation of local Lorentz symmetry.

There were proposed different sophisticate constructions with formal and
partial solutions for quantum gravity and field interactions theories. We
cite here the BRST quantization methods for non--Abelian and open gauge
algebras \cite{bff,wholt,grl,henn}, deformation quantization \cite%
{fed1,fed2,konts1,konts2}, quantization of general Lagrange structures and,
in general, BRST quantization without Lagrangians and Hamiltonians \cite%
{lyakh1,lyakh2}, $W$--geometry and Moyal deformations of gravity via strings
and branes \cite{castro1,castro2,castro3} and quantum loops and spin
networks \cite{rovelli,asht3,thiem1}.

In the above--mentioned approaches, it is necessary to quantize curved phase
spaces in a manner that is explicitly covariant on phase--space coordinates.
Indeed, for instance, the key ingredients of Fedosov and BRST methods, see %
\cite{lyakh1,lyakh2,gls,bgl} and references therein, is the embedding of the
system into the cotangent bundle over its phase space. There were also
elaborated such models following standard geometric constructions over
Riemannian manifolds and formal schemes with symplectic groupoids and
contravariant connections \cite{gls,karabeg}.

A rigorous geometric approach to deformation quantization of gravity, gauge
theories and geometric mechanics models with constraints and related
generalized Lagrange--Finsler theories, see \cite%
{esv,vqgr1,vqgr2,vqgr3,vqgr4}, shows that the quantization schemes have to
be developed for nonholonomic manifolds\footnote{%
i.e. manifolds endowed with nonholonomic (equivalent, anholonomic,
or non-integrable distributions), see details and references in
\cite {vr1,vr1a,vr2,bejf,vsgg,vrfg}} and tangent and cotangent
bundles endowed with nonlinear connection (N--connection)
structure. The natural step in this direction is to apply the
methods of the geometry of Hamilton and Cartan spaces and
generalizations \cite{mhss,mhh} (such spaces are
respectively dual to the Lagrange and Finsler spaces and generalizations %
\cite{ma1987,ma}).

The aim of this work is to show how Karabegov's approach to Fedosov
deformation quantization \cite{karabeg,karabeg1,karabeg2} can be naturally
extended for almost K\"{a}hler manifolds endowed with canonical geometric
structures generated by semi--Riemannian and/or Einstein metrics and
Lagrange--Finsler and Hamilton--Cartan fundamental generating functions.

This paper is motivated by the following results: In Refs. \cite%
{vrfg,vqgr1,vqgr2,vstav,vsgg}, we concluded that classical and quantum
gravity models on (co) tangent bundles positively result in generalized
Finsler like theories with violation of local Lorentz symmetry. The
conclusion was supported also by a series of works on definition of spinors
and field interactions on (in general, higher order) locally anisotropic
spacetimes \cite{vfs,vhs}, on low energy limits of (super) string theory %
\cite{vstrf,vncsup,mavr} and possible Finsler like phenomenological
implications and symmetry restriction of quantum gravity \cite{glsf,mign,ggp}%
. Here, we emphasize that the nonholonomic quantum deformation formalism can
be re--defined for nonholonomic (pseudo) Riemannian, or Riemann--Cartan,
manifolds with fibred structure. Such nonholonomic spaces, under well
defined conditions and for corresponding Lagrange--Finsler variables,
possess local Lorentz invariance for classical theories and seem to preserve
it for nonholonomic quantum deformations, see results from \cite{vqgr3,vqgr4}%
.

The work is organized as follows: In Section 2 we recall the basis
of the Lagrange--Finsler and Hamilton--Cartan geometry,
nonholonomic lifts of Einstein metrics on cotangent bundles and
almost K\"{a}hler models of such spaces. We also collect there
some geometric constructions that we need for further
considerations. Section 3 is devoted to the formalism of canonical
nonlinear connections and distinguished connections induced by
Lagrange and Hamilton fundamental functions. We introduce the
concept of Hamilton--Fedosov spaces and define the corresponding
almost symplectic structures. In Section 4 we consider a
generalization of the concept of connection to that of
connection--pair resulting in Fedosov--Hamilton operators--pairs
which is necessary for definition of deformation quantization
models being invariant under symplectic morphisms and Legendre
transform. We provide  Fedosov's theorems for connection--pairs
which allows us to develop an approach to geometric quantization
of Hamilton and Einstein (generalized on cotangent bundles) spaces
in Section 5. We speculate on possible quantum generalizations of
gravitational field equations on phase spaces and their
deformation quantization. Finally, in Section 6 we conclude the
results.

\section{Lagrange--Finsler and Hamilton--Cartan Geometry and Einstein Spaces}

In this section, we outline some results from the geometry of
Lagrange--Finsler \cite{ma1987,ma} and Hamilton--Cartan
\cite{mhss,mhh} spaces.

\subsection{Canonical geometric objects on Lagrange and Hamilton spa\-ces}

Let us consider a real sufficiently smooth manifold $M, \dim M=n\geq
2.$ We label the local coordinates $x=\{x^{i}\},$ with ''base'' indices $%
i,j,...=1,2,...n,$ and write by $TM$ and $T^{\ast }M,$ respectively, the
total spaces of tangent and cotangent bundles $\pi :TM\rightarrow M$ and $%
\pi ^{\ast }:T^{\ast }M\rightarrow M,$ with local coordinates $%
u=(x,y)=\{u^{\alpha }=(x^{i},y^{a})\},$ for ''fiber'' indices $%
a,b,...=n+1,...,n+n$ on any local cart $U\subset TM,$ and $\ ^{\ast
}u=(x,p)=\{\ ^{\ast }u^{\alpha }=(x^{i},p_{a})\}$ on any local cart $\
^{\ast }U\subset T^{\ast }M,$ when $p_{a}$ are dual to $y^{a}.$ In order to
apply the Einstein summation rule for contracting base and fiber indices, we
shall use identifications of type $y^{i}\doteqdot y^{n+i},$ i.e. we suppose
that indices like $a,b,...$ can split into respective $n+i,n+j.$

\begin{definition}
A generalized Hamilton space is defined by a pair
$GH^{n}=(M,g^{ij}(x,p)),$ where $g^{ij}(x,p)$ is a contravariant
symmetric tensor field, non-degenerate
and of constant signature on $\widetilde{T^{\ast }M}=T^{\ast }M/\{0\},$ for $%
\{0\}$ being the null section of $T^{\ast }M.$
\end{definition}

We note that a contravariant tensor of type
\begin{equation}
g^{ij}(x,p)=e_{\ i^{\prime }}^{i}(x,p)e_{\ j^{\prime }}^{j}(x,p)g^{i^{\prime
}j^{\prime }}(x)  \label{eq01}
\end{equation}%
includes, for some vielbein fields $e_{\ i^{\prime }}^{i},$ the dual tensor
\ $g^{i^{\prime }j^{\prime }}(x)$ as the inverse of a (semi) Riemann metric $%
g_{i^{\prime }j^{\prime }}(x)$ on $M.$ In this paper, we consider
that any classical solution of the Einstein equations defines a
space-time manifold $M$ and a corresponding deformation
quantization procedure on $T^{\ast }M$ defines the values $e_{\
i^{\prime }}^{i}$ and other fundamental quantum geometric objects
(like almost symplectic structure and generalized connection) with
nontrivial dependence on variables $p_{a}.$ In general, we shall
provide our constructions for (pseudo) Riemannian spaces with (co)
metrics of type $g^{i^{\prime }j^{\prime }}(x)$ and $g^{ij}(x,p)$
and discuss if there are any important particular properties for
Einstein manifolds (spaces) when $g_{i^{\prime }j^{\prime }}(x)$
is subjected to the condition to solve on $M$ the gravitational
field equations with nonzero cosmological constant $\lambda ,$
\begin{equation}
\ _{\shortmid }R_{i^{\prime }j^{\prime }}(x)=\lambda g_{i^{\prime }j^{\prime
}}(x),  \label{einstm}
\end{equation}%
where $\ _{\shortmid }R_{i^{\prime }j^{\prime }}(x)$ is the Ricci tensor for
the Levi--Civita connection $\nabla =\{\ _{\shortmid }\Gamma _{\ j^{\prime
}k^{\prime }}^{i^{\prime }}(x)\}$ completely defined by $g_{i^{\prime
}j^{\prime }}.$

For simplicity, we shall work with a more particular class of spaces when $%
g^{ij}(x,p)$ is defined by a Hamilton function $H(x,p):$

\begin{definition}
A Hamilton space $H^{n}=(M,H(x,p))$ is defined by a function
$T^{\ast }M\ni (x,p)\rightarrow H(x,p)\in \mathbb{R},$ i.e. by
fundamental Hamilton function, $\ $which is differentiable on
$\widetilde{T^{\ast }M}$ and continuous on the null section $\pi
^{\ast }:T^{\ast }M\rightarrow M$ and such that the (Hessian,
equivalently, fundamental) tensor field
\begin{equation}
\ ^{\ast }g^{ab}(x,p)=\frac{\partial ^{2}H}{\partial p_{a}\partial p_{b}}
\label{hm}
\end{equation}
is non-degenerate and of constant signature on $\widetilde{T^{\ast }M}.$
\end{definition}

Let $L(x,y)$ be a regular differentiable Lagrangian on $U\subset TM,$ with
non-degenerate Hessian (equivalently, fundamental tensor field)
\begin{equation}
g_{ab}(x,y)=\frac{\partial ^{2}L}{\partial y^{a}\partial y^{b}}.  \label{lm}
\end{equation}

\begin{definition}
A Lagrange space $L^{n}=(M,L(x,y))$ is defined by a function $TM\ni
(x,y)\rightarrow L(x,p)\in \mathbb{R},$ i.e. a fundamental Lagrange
function, which is differentiable on $\widetilde{TM}$ and continuous on
the null section of $\pi :TM\rightarrow M$ and such that the (Hessian) tensor field $%
g_{ab}(x,y)$ (\ref{lm}) is non-degenerate and of constant
signature on $\widetilde{TM}.$
\end{definition}

We can define the Legendre transform $L\rightarrow H,$%
\begin{equation}
H(x,p)=p_{a}y^{a}-L(x,y),  \label{legt1}
\end{equation}%
where $y=\{y^{a}\}$ are solutions of the equations $p_{a}=\partial
L(x,y)/\partial y^{a},$ and (inversely) the Legendre transform $H\rightarrow
L,$%
\begin{equation}
L(x,y)=p_{a}y^{a}-H(x,p),  \label{legt2}
\end{equation}
where $p=\{p_{a}\}$ is the solution of the equations $y^{a}=\partial
H(x,p)/\partial p_{a}.$ \footnote{%
In some monographs (for instance, see \cite{ma,mhss}), it is considered the
factor $1/2$ in the right sides of (\ref{hm}) and (\ref{lm}). We emphasize
that in this paper (for simplicity) the Hamilton and Lagrange functions will
be supposed to be regular and related mutually by Legendre transforms.}

Following terminology from \cite{mhss}, we say that $\ N_{i}^{a}$ (\ref{clnc}
) and $\ ^{\ast }N_{ij}$ (\ref{chnc}) are $\mathcal{L}$--dual if $L$ and $H$
are related by Legendre transform (\ref{legt1}), or (\ref{legt2}). In the
following constructions, we shall consider that to Legendre transform there
are associated the diffeomorphisms
\begin{equation*}
\varphi :TM\supset U\rightarrow \ ^{\ast }U\subset T^{\ast }M,
(x^i,y^a)\rightarrow \left( x^i, p_a = \frac{\partial L(x,y)}{\partial y^a} \right)
\end{equation*}
and
\begin{equation*}
\psi :T^{\ast }M\supset \ ^{\ast }U\rightarrow U\subset TM,
(x^i,p_a)\rightarrow \left(x^i,y^a = \frac{\partial H(x,p)}{\partial p_a}\right),
\end{equation*}
allowing to define respectively pull--back and push--forward of geometric
objects (functions, vectors, differential forms, connections, tensors...)
from $\ ^{\ast }U$ to $U$ and from $\ U$ to $\ ^{\ast }U,$ i.e. we define $%
\mathcal{L}$--dual geometric objects. For instance, for a differentiable
function $\ _{1}f$ on $U,$ we define a differentiable function $\
_{1}f^{\ast }\doteqdot \ _{1}f\circ \psi =\ _{1}f\circ \varphi ^{-1}$ on $\
^{\ast }U$ and (inversely) for a differentiable function $\ ^{2}f$ on $\
^{\ast }U,$ we have a differentiable $\left( \ ^{2}f\right) ^{0}\doteqdot \
^{2}f\circ \varphi =\ ^{2}f\circ \psi ^{-1}$ on $U.$ Note that $H =L\circ
\varphi^{-1}$ and $L =H\circ \psi^{-1}.$ Similarly, for any vector field $X$ on $%
U, $ we get a vector field $\ ^{\ast }X\doteqdot T\varphi \circ
X\circ \varphi ^{-1}=T\psi ^{-1}\circ X\circ \psi $ on $\ ^{\ast
}U$ and (inversely) for a vector field $\ ^{\ast }X$ on $\ ^{\ast
}U,$ we get a vector field $\ ^{\circ }X\doteqdot T\psi \circ
X\circ \psi ^{-1}=T\varphi ^{-1}\circ \ ^{\ast }X\circ \varphi $
on $U,$ where, for example, $T\varphi $ is the tangent map to
$\varphi .$ Dualizing the vector constructions, we obtain that for
any 1--form $\omega $ on $U,$ there is 1-form $\ ^{\ast }\omega
\doteqdot (T\varphi )^{\ast }\circ \omega \circ \varphi
^{-1}=(T\psi ^{-1})^{\ast }\circ \omega \circ \psi $ on $\ ^{\ast
}U$ and (inversely) for any form $\ ^{\ast }\omega $ on $\ ^{\ast
}U,$ we can consider $\ ^{\circ }\omega \doteqdot (T\psi )^{\ast
}\circ \ ^{\ast }\omega \circ \psi ^{-1}=(T\varphi
^{-1})\circ \ ^{\ast }\omega \circ \varphi $ on $U,$ where, for example, $%
(T\psi )^{\ast }$ denotes the cotangent map to $(T\psi ).$

Let $vTM$ and $vT^*M$ be the vertical distributions on $TM$ and
$T^*M$, respectively.

\begin{definition}
\label{defnc}Any Whitney sums
\begin{equation}
TTM=hTM\oplus vTM  \label{whitney}
\end{equation}%
and
\begin{equation}
TT^{\ast }M=hT^{\ast }M\oplus vT^{\ast }M  \label{whitneyd}
\end{equation}%
define respectively nonlinear connection (N--connection) structures
paramet\-riz\-ed by the local vector fields
\begin{equation*}
e_i = \frac{\partial}{\partial x^i} - N_{i}^{a}(x,y)\frac{\partial }{%
\partial y^{a}} \mbox{ \ on \ }TM
\end{equation*}%
and
\begin{equation*}
\ ^{\ast }e_i= \frac{\partial}{\partial x^i} + \ ^{\ast }N_{ia}(x,p)\frac{%
\partial }{\partial p_{a}}\mbox{ \ on \ }T^{\ast }M.
\end{equation*}
\end{definition}

One says that a N--connection defines on $TM,$ or $T^{\ast }M,$ a
conventional horizontal (h) and vertical (v) splitting (decomposition).

Let consider a regular curve $c(\tau )$ with real parameter $\tau ,$ when $%
c:\tau \in \lbrack 0,1]\rightarrow x^{i}(\tau )\subset U.$ It can
be lifted to $\pi ^{-1}(U)\subset \widetilde{TM}$ as
$\widetilde{c}(\tau ):\tau \in \lbrack 0,1]\rightarrow \left(
x^{i}(\tau ),y^{i}(\tau )=\frac{dx^{i}}{d\tau }\right) $ since the
vector field $\frac{dx^{i}}{d\tau }$ does not vanish on
$\widetilde{TM}.$ Following techniques from variational calculus,
one proves:

\begin{theorem}
\label{th1}The Euler--Lagrange equations,
\begin{equation}
\frac{d}{d\tau }\frac{\partial L}{\partial y^{i}}-\frac{\partial L}{\partial
x^{i}}=0,  \label{eleq}
\end{equation}%
are equivalent to the Hamilton--Jacobi equations,%
\begin{equation}
\frac{dx^{i}}{d\tau }=\frac{\partial H}{\partial p_{i}}\mbox{ and }\frac{%
dp_{i}}{d\tau }=-\frac{\partial H}{\partial x^{i}},  \label{hjeq}
\end{equation}%
and to the nonlinear geodesic (semi--spray) equations%
\begin{equation}
\frac{d^{2}x^{i}}{d\tau ^{2}}+2G^{i}(x,y)=0,  \label{ngeq}
\end{equation}%
where%
\begin{equation*}
G^{i}=\frac{1}{2}g^{ij}\left( \frac{\partial ^{2}L}{\partial y^{j}\partial
x^{k}}y^{k}-\frac{\partial L}{\partial x^{j}}\right),
\end{equation*}%
for $g^{ij}$ being the inverse to $g_{ij}$ (\ref{lm}).
\end{theorem}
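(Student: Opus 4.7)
The plan is to establish the two equivalences separately, both starting from a careful reading of the Legendre transform and the chain rule applied along the lifted curve $\widetilde{c}(\tau) = (x^i(\tau), y^i(\tau) = dx^i/d\tau)$.

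First I would show that the Euler--Lagrange equations (\ref{eleq}) are equivalent to the nonlinear geodesic equations (\ref{ngeq}). Expanding the total derivative along $\widetilde{c}(\tau)$ gives
\begin{equation*}
\frac{d}{d\tau}\frac{\partial L}{\partial y^i}
= \frac{\partial^2 L}{\partial y^i\partial x^j}\,\frac{dx^j}{d\tau}
+ \frac{\partial^2 L}{\partial y^i\partial y^j}\,\frac{dy^j}{d\tau}
= \frac{\partial^2 L}{\partial y^i\partial x^j}\,y^j + g_{ij}\,\frac{d^2x^j}{d\tau^2},
\end{equation*}
using the identification $y^j = dx^j/d\tau$ and the definition (\ref{lm}) of $g_{ij}$. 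Substituting into (\ref{eleq}) and contracting with the inverse Hessian $g^{ki}$, which exists by non-degeneracy of $L$, isolates $d^2x^k/d\tau^2$ and produces precisely $2G^k(x,y)$ with the coefficient given in the statement. The implication goes both ways because the inverse Hessian acts invertibly at every step.

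Next I would show that (\ref{eleq}) is equivalent to the Hamilton--Jacobi system (\ref{hjeq}) via the Legendre transform (\ref{legt1}). Differentiating $H(x,p) = p_a y^a - L(x,y)$ with respect to $p_i$, where $y=y(x,p)$ is determined implicitly by $p_a = \partial L/\partial y^a$, yields
\begin{equation*}
\frac{\partial H}{\partial p_i}
= y^i + p_a\,\frac{\partial y^a}{\partial p_i} - \frac{\partial L}{\partial y^a}\,\frac{\partial y^a}{\partial p_i}
= y^i,
\end{equation*}
since the last two terms cancel by the very definition of $p_a$. Along the lifted curve this gives the first half of (\ref{hjeq}). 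A parallel computation of $\partial H/\partial x^i$ at fixed $p$, again using cancellation via $p_a = \partial L/\partial y^a$, yields $\partial H/\partial x^i = -\partial L/\partial x^i$. Hence (\ref{eleq}) rewrites as $dp_i/d\tau = \partial L/\partial x^i = -\partial H/\partial x^i$, which is the second half of (\ref{hjeq}). The converse direction reverses these identifications using the inverse Legendre transform (\ref{legt2}).

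The only real delicate step is the implicit-function bookkeeping in the Legendre computation: one has to keep straight which variables are held fixed when partial derivatives are taken, since $L$ is a function of $(x,y)$ while $H$ is a function of $(x,p)$ and the passage between them is through the non-trivial relation $p_a = \partial L/\partial y^a$. Regularity of $L$ (equivalently, of $H$, through (\ref{hm})) guarantees that this change of variables is a local diffeomorphism via $\varphi$ and $\psi$, so every chain-rule manipulation and every inversion of $g_{ij}$ or $\,^{\ast}g^{ab}$ is legitimate. Once this is handled carefully, the three equivalences follow without further effort.
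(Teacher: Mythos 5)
Your proof is correct and follows exactly the standard route the paper itself gestures at when it writes ``Following techniques from variational calculus, one proves'' without supplying details: the chain-rule expansion of $\frac{d}{d\tau}\frac{\partial L}{\partial y^{i}}$ along the lift $y^{i}=dx^{i}/d\tau$ followed by contraction with $g^{ki}$ for the semi--spray form, and the cancellations $\partial H/\partial p_{i}=y^{i}$, $\partial H/\partial x^{i}=-\partial L/\partial x^{i}$ from the Legendre transform for the Hamilton--Jacobi form. You have in effect supplied the details the paper omits, and your attention to which variables are held fixed in the Legendre computation is exactly the right point to be careful about.
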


Let us consider on $T^{\ast }M$ the canonical symplectic structure
\begin{equation}
\theta \doteqdot dp_{i}\wedge dx^{i}.  \label{csymps}
\end{equation}%

The Hamiltonian $H$ defines an unique vector field on $T^*M$:

\begin{equation*} X_{H}=\frac{\partial H}{\partial
p_{i}}\frac{\partial }{\partial x^{i}}- \frac{\partial H}{\partial
x^{i}}\frac{\partial }{\partial p_{i}}
\end{equation*}
by the equation
\begin{equation*} i_{X_{H}}\theta =-dH,
\end{equation*}
where $i_{X_{H}}$ denotes the interior product by $X_H$. The same
holds for any function on $T^{\ast }M.$

By Theorem \ref{th1} one has:

\begin{corollary}
\label{chje}The Hamilton--Jacobi equations (\ref{hjeq}) are equivalent to
\begin{equation*}
\frac{dx^{i}}{d\tau }=\{H,x^{i}\}\mbox{ and }\frac{dp_{a}}{d\tau }
=\{H,p_{a}\},
\end{equation*}
where the Poisson structure is defined by brackets
\begin{equation}
\{\ ^{1}f,\ ^{2}f\}=\theta (X_{\ ^{1}f},X_{\ ^{2}f})  \label{poisbr}
\end{equation}
for any functions $\ ^{1}f(x,p)$ and $\ ^{2}f(x,p)$ on $T^{\ast
}M.$
\end{corollary}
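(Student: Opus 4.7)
The plan is to verify the corollary by direct computation, since once the Hamiltonian vector field $X_f$ of an arbitrary function $f$ on $T^{*}M$ is written out in coordinates, the identification $\{H,x^i\}=dx^i/d\tau$ and $\{H,p_a\}=dp_a/d\tau$ reduces to reading off the components of $X_H$.

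First I would unpack the defining relation $i_{X_f}\theta=-df$ for a generic smooth $f(x,p)$. Writing $X_f=A^i\partial_{x^i}+B_i\partial_{p_i}$ and using $\theta=dp_j\wedge dx^j$, one computes
\begin{equation*}
i_{X_f}\theta=B_j\,dx^j-A^j\,dp_j,
\end{equation*}
which, matched against $-df=-(\partial f/\partial x^j)dx^j-(\partial f/\partial p_j)dp_j$, yields $A^j=\partial f/\partial p_j$ and $B_j=-\partial f/\partial x^j$. This recovers the given formula for $X_H$ and, specialized to the coordinate functions, gives $X_{x^i}=-\partial/\partial p_i$ and $X_{p_a}=\partial/\partial x^a$.

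Next I would plug these into the Poisson bracket (\ref{poisbr}). Using again $\theta=dp_j\wedge dx^j$ and the general identity $(dp_j\wedge dx^j)(X_H,Y)=dp_j(X_H)\,dx^j(Y)-dp_j(Y)\,dx^j(X_H)$, a short calculation gives
\begin{equation*}
\{H,x^i\}=\theta(X_H,X_{x^i})=\frac{\partial H}{\partial p_i},\qquad
\{H,p_a\}=\theta(X_H,X_{p_a})=-\frac{\partial H}{\partial x^a}.
\end{equation*}
By Theorem \ref{th1}, the right-hand sides are exactly $dx^i/d\tau$ and $dp_a/d\tau$ along a solution of the Hamilton--Jacobi equations (\ref{hjeq}), so the two systems of equations coincide. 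The equivalence is therefore immediate, and since each step (coordinate expression for $X_f$, evaluation of $\theta$, matching with (\ref{hjeq})) is a one-line symbolic manipulation, there is no genuine obstacle; the only point requiring attention is consistent sign bookkeeping in $i_{X_f}\theta=-df$ versus $\theta=dp_i\wedge dx^i$, which is what fixes the correct signs in the two Hamilton--Jacobi equations.
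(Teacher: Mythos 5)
Your computation is correct and is exactly the ``standard calculus in geometric mechanics'' that the paper's one-line proof appeals to: the sign conventions $i_{X_f}\theta=-df$ with $\theta=dp_i\wedge dx^i$ are handled consistently, and the resulting identities $\{H,x^i\}=\partial H/\partial p_i$, $\{H,p_a\}=-\partial H/\partial x^a$ match the right-hand sides of (\ref{hjeq}). Nothing further is needed.
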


\begin{proof}
It can be obtained by  a standard calculus in geometric mechanics.
$\Box $
\end{proof}

\vskip5pt

The following theorem holds:

\begin{theorem}
\label{thcnc}The are canonical N--connections defined respectively by
regular Lagrange $L(x,y)$ and/ or Hamilton $H(x,p)$ fundamental functions:
\begin{equation}
\ N_{i}^{a}\doteqdot \frac{\partial G^{a}}{\partial y^{i}}  \label{clnc}
\end{equation}
and
\begin{equation}
\ ^{\ast }N_{ij}\doteqdot \frac{1}{2}\left[ \{\ ^{\ast}g_{ij},H\}-\frac{\partial ^{2}H}{\partial p_{k}\partial x^{(i}}\
^{\ast }g_{j)k}\right] , \label{chnc}
\end{equation}
where, $^{\ast}g_{ij}$ is the inverse to $^{\ast}g^{ij}$(\ref{hm})
and, for instance, $a_{(ij)}=a_{ij}+a_{ji}$ denotes symmetrization
of indices.
\end{theorem}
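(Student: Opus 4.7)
The plan is to verify that each of the stated formulas transforms under the bundle automorphisms induced by a change of chart $x^{i}\mapsto x^{i'}(x)$ on $M$ as an N-connection coefficient must. For (\ref{clnc}), the required law is
$\frac{\partial x^{i'}}{\partial x^i}\, N_{i'}^{a'} = \frac{\partial x^{a'}}{\partial x^a}\, N_i^a - \frac{\partial y^{a'}}{\partial x^i}$,
so that the basis $e_i$ of Definition \ref{defnc} transforms covariantly and the horizontal distribution in the Whitney sum (\ref{whitney}) is well defined; a dual rule must hold for (\ref{chnc}) on $T^{\ast}M$, under $p_{a'}=(\partial x^a/\partial x^{a'})\, p_a$, so that $^{\ast}e_i$ realises the splitting (\ref{whitneyd}). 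Once these two transformation laws are established, Definition \ref{defnc} is satisfied and the theorem follows.

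For the Lagrange half, I would start from Theorem \ref{th1}: the Euler--Lagrange system (\ref{eleq}) arises from a variational principle depending only on the scalar $L$, so its equivalent form (\ref{ngeq}) is covariant under chart changes on $M$. Writing (\ref{ngeq}) in both unprimed and primed charts and using the fiber rule $y^{a}=(\partial x^{a}/\partial x^{a'})\, y^{a'}$ to absorb the second $x$-derivatives, one reads off the classical semispray transformation
$2G^{a'} = 2G^a\, \frac{\partial x^{a'}}{\partial x^a} - \frac{\partial^2 x^{a'}}{\partial x^b \partial x^c}\, y^b y^c$.
Differentiating this identity with respect to $y^{j'}$ at fixed $x'$ and applying the chain rule, the inhomogeneous term generates precisely the offset $-\partial y^{a'}/\partial x^{j'}$ demanded by the N-connection transformation law. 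Hence $N_j^a=\partial G^a/\partial y^j$ defines an N-connection on $TM$.

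For the Hamilton half, the cleanest route is $\mathcal{L}$-duality: when $L$ and $H$ are related by (\ref{legt1})--(\ref{legt2}), the Lagrange N-connection can be pushed forward by $\varphi$ to $T^{\ast}M$. Substituting $y^a=\partial H/\partial p_a$, using $\partial L/\partial x^i=-\partial H/\partial x^i$, and rewriting inverse Hessians via $^{\ast}g_{ab}$, the expression $\partial G^a/\partial y^i$ rearranges into two pieces: a Poisson bracket of $^{\ast}g_{ij}$ with $H$, arising from the $\partial^2 L/\partial y^j \partial x^k \cdot y^k$ contribution in $G^a$, and a correction term involving $\partial^2 H/\partial p_k \partial x^{(i}$, arising from $\partial L/\partial x^j$ together with the Jacobian of $\varphi^{-1}$. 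Alternatively, one verifies the dual transformation law for (\ref{chnc}) directly. In either route the main obstacle is a second-derivative anomaly: the bracket $\{\,^{\ast}g_{ij}, H\}$ by itself is not an N-connection coefficient, because the hidden $\partial/\partial x^i$ inside the bracket is not a covariant operation on $T^{\ast}M$ and produces an inhomogeneous piece proportional to $\partial^2 x^{i'}/\partial x\, \partial x$. The role of the corrective term in (\ref{chnc}), and the reason for the symmetrisation over $(i,j)$, is to cancel exactly this anomaly; verifying this cancellation is the key combinatorial step of the proof.
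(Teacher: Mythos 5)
Your proposal is correct and takes essentially the same route as the paper: the paper's proof consists of the single remark that one verifies the coefficients (\ref{clnc}) and (\ref{chnc}) satisfy the conditions of Definition \ref{defnc}, deferring the computations to Ch.~9 of \cite{ma} and Ch.~5 of \cite{mhss}. Your sketch --- deriving the semispray transformation law from the covariance of (\ref{ngeq}) and differentiating in $y^{j}$ for the Lagrange case, and using $\mathcal{L}$-duality together with the cancellation of the second-derivative anomaly by the symmetrised corrective term for the Hamilton case --- is precisely the content of that deferred verification.
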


\begin{proof}
We can verify respectively that on any open sets $U\subset TM$ and
$U^{\ast }\subset T^{\ast }M$ $\ $coefficients (\ref{clnc}) and
(\ref{chnc}) satisfy the conditions of Definition \ref{defnc}. For
details see Ch. 9 in \cite{ma} and Ch. 5 in \cite{mhss}. $\Box $
\end{proof}

On (co) tangent bundles endowed with N--connection structure, it
is convenient to elaborate a covariant calculus adapted to this
structure, i.e. preserving the conventional splitting of tensors
and other geometric objects (like connections, differential forms
etc) into horizontal (h) and vertical (v) components. In brief,
such distinguished (by N--connection) components are called
respectively d--objects, d--field (for some physical fields of
tensor, spinor nature ...),  d--tensors, d--vectors, d--forms,
d--connections etc, see details in Refs.
\cite{mhss,mhh,ma1987,ma,vfs,vsgg}.

\vskip5pt

\begin{proposition}
\label{pnaf}There are canonical frame structures (local N--adapted
(co--)bases ) defined by canonical N--connections:%
\begin{eqnarray}
\mathbf{e}_{\alpha } &=&(\mathbf{e}_{i}=\frac{\partial }{\partial x^{i}}%
-N_{i}^{a}\frac{\partial }{\partial y^{a}},e_{b}=\frac{\partial }{\partial
y^{b}}),\mbox{ on }TM,  \label{dder} \\
\ ^{\ast }\mathbf{e}_{\alpha } &=&(\ ^{\ast }\mathbf{e}_{i}=\frac{\partial }{%
\partial x^{i}}+\ ^{\ast }N_{ia}\frac{\partial }{\partial p_{a}},\ ^{\ast
}e^{b}=\frac{\partial }{\partial p_{b}}),\mbox{ on }T^{\ast }M,
\label{cdder}
\end{eqnarray}%
and their dual (coframe) structures%
\begin{eqnarray}
\mathbf{e}^{\alpha } &=&(e^{i}=dx^{i},\mathbf{e}^{b}=dy^{b}+N_{i}^{b}dx^{i}),%
\mbox{ on }(TM)^{\ast },  \label{ddif} \\
\ ^{\ast }\mathbf{e}^{\alpha } &=&(\ ^{\ast }e^{i}=dx^{i},\ ^{\ast }\mathbf{e%
}_{p}=dp_{b}-\ ^{\ast }N_{ib}dx^{i}),\mbox{ on }(T^{\ast }M)^{\ast },
\label{cddif}
\end{eqnarray}%
when $\mathbf{e}_{\alpha }\rfloor \mathbf{e}^{\beta }=\delta _{\alpha
}^{\beta }$ and $\ ^{\ast }\mathbf{e}_{\alpha }\rfloor \ ^{\ast }\mathbf{e}%
^{\beta }=\delta _{\alpha }^{\beta },$ where by $\rfloor $ we note the
interior products and $\delta _{\alpha }^{\beta }$ being the Kronecker delta
symbol.
\end{proposition}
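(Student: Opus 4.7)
The plan is to verify Proposition \ref{pnaf} by a direct local computation in each coordinate chart. Theorem \ref{thcnc} already provides the N--connection coefficients $N_i^a$ and $\ ^*N_{ij}$ and guarantees that they satisfy Definition \ref{defnc}, so once the pointwise duality and the horizontal/vertical character of the claimed (co)frames are established in a single chart, the global conclusion will follow from the N--connection transformation law.

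For the tangent bundle case, I would start from the natural coordinate bases $(\partial/\partial x^i,\partial/\partial y^a)$ and $(dx^i, dy^a)$. By the Whitney sum (\ref{whitney}), the N--connection splits every tangent vector into a horizontal and a vertical part; the horizontal lift of $\partial/\partial x^i$ is precisely $\mathbf{e}_i = \partial/\partial x^i - N_i^a \partial/\partial y^a$, while $e_b = \partial/\partial y^b$ already spans $vTM$. The dual coframe must annihilate the wrong component of each vector, which forces the correction $\mathbf{e}^b = dy^b + N_i^b dx^i$. The duality relations $\mathbf{e}_\alpha \rfloor \mathbf{e}^\beta = \delta_\alpha^\beta$ then reduce to four cases: $\mathbf{e}_i \rfloor dx^j = \delta_i^j$ and $e_b \rfloor \mathbf{e}^c = \delta_b^c$ are immediate, while $\mathbf{e}_i \rfloor \mathbf{e}^b = -N_i^b + N_i^b = 0$ and $e_a \rfloor dx^j = 0$ display the cancellation engineered by the matching signs in (\ref{dder}) and (\ref{ddif}).

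The cotangent bundle statement is proved analogously, using the splitting (\ref{whitneyd}) and the coefficients (\ref{chnc}); the only subtlety is the sign flip $dp_b - \ ^*N_{ib} dx^i$, which is dictated by the covariant placement of the fiber index in $p_b$ and is exactly what makes the cross pairing $\ ^*\mathbf{e}_i \rfloor \ ^*\mathbf{e}_b = \ ^*N_{ib} - \ ^*N_{ib} = 0$ cancel. The main obstacle I expect to face is not the duality computation itself, which is essentially algebraic, but rather verifying that $\mathbf{e}_i$ and $\ ^*\mathbf{e}_i$ transform as globally well--defined sections of $hTM$ and $hT^*M$ under changes of base coordinates; this is precisely the inhomogeneous N--connection transformation rule. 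Since the required behavior follows by chain--rule differentiation of the defining formulas (\ref{clnc}) and (\ref{chnc}), this step is routine but tedious, and I would refer to the detailed verifications in Ch.~9 of \cite{ma} and Ch.~5 of \cite{mhss} rather than reproducing them.
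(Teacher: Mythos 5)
Your proposal is correct and follows essentially the same route as the paper, which simply observes that the frames are built by construction to be adapted to the Whitney splittings and to depend linearly on the N--connection coefficients; your explicit duality computations (including the sign cancellations $-N_i^b+N_i^b=0$ and $\ ^{\ast}N_{ib}-\ ^{\ast}N_{ib}=0$) and your appeal to the N--connection transformation law for global well--definedness just spell out what the paper leaves implicit. No gaps.
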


\begin{proof}
It follows by construction under the condition that such frames should
depend linearly on coefficients of respective N--connections. $\Box $
\end{proof}

\vskip5pt

One says that certain geometric objects are defined on $TM$ (or
$T^{\ast }M)$ in N--adapted form [equivalently, in distinguished
form, in brief,
d--form] if they are given by coefficients defined with respect to frames $%
\mathbf{e}_{\alpha }$ (\ref{dder}) and coframes $\mathbf{e}^{\alpha }$ (\ref%
{ddif}) and their tensor products (with respect to frames $\ ^{\ast }\mathbf{%
e}_{\alpha }$ (\ref{cdder}) and coframes $\ ^{\ast }\mathbf{e}^{\alpha }$ (%
\ref{cddif}) and their tensor products). We shall use ''boldface''
letters in order to emphasize that certain spaces (or geometric
objects) are in N--adapted form.

\begin{definition}
The N--lifts of the fundamental tensor fields $\ ^{\ast }g^{ab}$ (\ref{hm})
and $g_{ab}$ (\ref{lm}) are respectively
\begin{equation}
\ ^{\ast }\mathbf{g}=\ ^{\ast }\mathbf{g}_{\alpha \beta }\ ^{\ast }\mathbf{e}%
^{\alpha }\otimes \ ^{\ast }\mathbf{e}^{\beta }=\ ^{\ast
}g_{ij}(x,p)e^{i}\otimes e^{j}+\ ^{\ast }g^{ab}(x,p)\ ^{\ast }\mathbf{e}%
_{a}\otimes \ ^{\ast }\mathbf{e}_{b},  \label{dmctb}
\end{equation}%
on $T^{\ast }M,$ where $\ ^{\ast }g_{ij}$ is inverse to $\ ^{\ast }g^{ab},$
and
\begin{equation*}
\mathbf{g}=\mathbf{g}_{\alpha \beta }\ \mathbf{e}^{\alpha }\otimes \mathbf{e}%
^{\beta }=g_{ij}(x,y)e^{i}\otimes e^{j}+g_{ab}(x,y)\mathbf{e}^{a}\otimes
\mathbf{e}^{b},
\end{equation*}%
on $TM,$ where $g_{ij}$ is stated by $g_{ab}$ following $g_{ij}=g_{n+i\
n+j}. $
\end{definition}

The following proposition holds:

\begin{proposition}
The canonical N--connections $\mathbf{N}$ (\ref{clnc}) and $\ \ ^{\ast }%
\mathbf{N}$ (\ref{chnc}) define respectively the canonical almost complex
structures $\mathbf{J,}$ on $TM,$ and$\ \ ^{\ast }\mathbf{J,}$ on $T^{\ast
}M.$
\end{proposition}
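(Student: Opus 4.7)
The strategy is to build $\mathbf{J}$ and $^{\ast}\mathbf{J}$ explicitly in the N--adapted bases from Proposition \ref{pnaf} and then to verify two things: (i) $\mathbf{J}^{2}=-\mathrm{Id}$, and (ii) the construction is chart--independent. The key geometric input is that the two N--connection splittings (\ref{whitney}) and (\ref{whitneyd}) supply horizontal and vertical subbundles of equal rank $n$, so an almost complex structure naturally arises as a signed interchange of the two factors.

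For $TM$, using the convention $y^{i}\doteqdot y^{n+i}$ fixed at the start of Section 2.1, I would set
\[
\mathbf{J}(\mathbf{e}_{i}):=-e_{n+i},\qquad \mathbf{J}(e_{n+i}):=\mathbf{e}_{i},
\]
and extend $\mathbb{R}$--linearly to all of $TTM$. The identity $\mathbf{J}^{2}=-\mathrm{Id}$ is then immediate on the frame and hence globally. Chart--independence reduces to the well--known transformation behaviour of $N_{i}^{a}$ coded in (\ref{clnc}): on overlaps, $\mathbf{e}_{i}$ and $e_{n+i}$ transform by the \emph{same} Jacobian $\partial x^{i}/\partial x^{i'}$, so the naive swap is an honest $(1,1)$--tensor.

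For $T^{\ast}M$ the situation is subtler, since the horizontal frame $^{\ast}\mathbf{e}_{i}$ carries a covariant index while the vertical frame $^{\ast}e^{j}$ carries a contravariant one; under coordinate changes on $M$ they transform by mutually inverse Jacobians, so no naive swap is invariantly defined. I would use the Hessian $^{\ast}g^{ab}$ of Definition 2.2 together with the identification $a=n+j$ to bridge the gap, setting
\[
{}^{\ast}\mathbf{J}({}^{\ast}\mathbf{e}_{i}):=-\,{}^{\ast}g_{ij}\,{}^{\ast}e^{j},\qquad {}^{\ast}\mathbf{J}({}^{\ast}e^{j}):={}^{\ast}g^{ij}\,{}^{\ast}\mathbf{e}_{i},
\]
where $^{\ast}g_{ij}$ is the inverse of $^{\ast}g^{ij}$. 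A one--line contraction $-\,{}^{\ast}g_{ij}\,{}^{\ast}g^{jk}=-\delta_{i}^{k}$ yields $({}^{\ast}\mathbf{J})^{2}=-\mathrm{Id}$, and the output is canonical since it depends only on $H$ through $^{\ast}g^{ab}$ and the canonical N--connection (\ref{chnc}).

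The main obstacle I anticipate is the chart--independence check on $T^{\ast}M$: one must verify that the mutually inverse Jacobian factors picked up by $^{\ast}\mathbf{e}_{i}$ and $^{\ast}e^{j}$ are absorbed exactly by the tensorial transformation of $^{\ast}g^{ab}$, so that $^{\ast}\mathbf{J}$ is a globally well--defined $(1,1)$--tensor field. This is essentially a bookkeeping argument resting on the transformation law of $^{\ast}\mathbf{N}$ established in Theorem \ref{thcnc} together with the tensorial character of the Hessian of $H$; after this check the canonicity and functional dependence on $H$ (respectively $L$) give the proposition. As a byproduct, pairing each $\mathbf{J}$, $^{\ast}\mathbf{J}$ with the corresponding N--lift in (\ref{dmctb}) sets up the almost K\"ahler structures that the subsequent sections exploit.
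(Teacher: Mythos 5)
Your proposal is correct and follows essentially the same route as the paper: define $\mathbf{J}$ and $^{\ast}\mathbf{J}$ on the N--adapted frames of Proposition \ref{pnaf} by a signed interchange of horizontal and vertical parts (using the Hessian $^{\ast}g^{ab}$ and its inverse to mediate the index mismatch on $T^{\ast}M$) and verify $\mathbf{J}\circ\mathbf{J}=-\mathbf{I}$ by direct contraction. Your explicit form of $^{\ast}\mathbf{J}$, with $^{\ast}g^{ij}$ appearing in the second defining relation, is the index--consistent version that agrees with the d--tensor expression in Proposition \ref{propctf}, and your added attention to chart--independence only strengthens the globality claim that the paper asserts without comment.
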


\begin{proof}
On $TM$ one introduces the linear operator $\mathbf{J}$ acting on $\mathbf{e}%
_{\alpha }=(\mathbf{e}_{i},e_{b})$ (\ref{dder}) as follows:
\begin{equation*}
\mathbf{J}(\mathbf{e}_{i})=-\mathbf{e}_{n+i} \mbox{\ and \
}\mathbf{J}(e_{n+i})=\mathbf{e}_{i}.
\end{equation*}

It is clear that $\mathbf{J}$ defines globally an almost complex structure ($%
\mathbf{J\circ J=-I}$ for $\mathbf{I}$ being the unity matrix) on $TM$
completely determined for Lagrange spaces by a $L(x,y).$ Now we provide the
proof for $T^{\ast }M$. Let us introduce a linear operator $^{\ast }\mathbf{J%
}$ acting on $^{\ast }\mathbf{e}_{\alpha }=( ^{\ast }\mathbf{e}_{i}, ^{\ast
}e^{b})$ (\ref{cdder}) following formulas
\begin{equation*}
^{\ast }\mathbf{J}( ^{\ast }\mathbf{e}_{i})=-g_{ia} ^{\ast
}e^{n+i} \mbox{\ and \ } \mathbf{^{\ast}{J}}(^{\ast}e^{n+i})=\
^{\ast }\mathbf{e}_{i}.
\end{equation*}
Then $\ ^{\ast }\mathbf{J}$ defines globally an almost complex structure ( $%
\ ^{\ast }\mathbf{J\circ \ ^{\ast }J}=$ $-\mathbf{\ I}$ for $\mathbf{I}$
being the unity matrix) on $T^{\ast }M$ completely determined for Hamilton
spaces by a $H(x,p).$ $\Box $
\end{proof}

\vskip5pt

\begin{definition}
The Neijenhuis tensor field for the almost complex structure $\ ^{\ast }%
\mathbf{J}$ on $T^{\ast }M,$ or $\mathbf{J}$ on $TM,$ defined by a
N--connection (equivalently, the curvature of N--connecti\-on) is
\begin{eqnarray}
\ ^{^{\ast }\mathbf{J}}\mathbf{\Omega (X,Y)} &=& \mathbf{-[X,Y]+[\ ^{\ast
}JX,\ ^{\ast }JY]-\ ^{\ast }J[\ ^{\ast }JX,Y]-\ ^{\ast }J[X,\ ^{\ast }JY],}
\notag \\
\ ^{\mathbf{J}}\mathbf{\Omega (X,Y)} &=& \mathbf{\
-[X,Y]+[JX,JY]-J[JX,Y]-J[X,JY],}  \label{neijt}
\end{eqnarray}%
for any d--vectors $\mathbf{X}$ and $\mathbf{Y.}$
\end{definition}

Hereafter, for simplicity and if one shall not result in ambiguities, we
shall present the N--adapted component formulas for geometric objects on $%
T^{\ast }M$ (those for $TM$ being similar), or inversely.

With respect to N--adapted bases, the components of the Neijenhuis tensor $\
^{^{\ast }\mathbf{\ J}}\mathbf{\Omega}$ involve the coefficients $^{\ast
}\Omega _{ija}: $
\begin{equation}
\ ^{\ast }\Omega _{ija}=\frac{\partial \ ^{\ast }N_{ia}}{\partial x^{j}}-%
\frac{\partial \ ^{\ast }N_{ja}}{\partial x^{i}}+\ ^{\ast }N_{ib}\frac{%
\partial \ ^{\ast }N_{ja}}{\partial p_{b}}-\ ^{\ast }N_{jb}\frac{\partial \
^{\ast }N_{ia}}{\partial p_{b}} .  \label{nccurv}
\end{equation}
They define the coefficients of the N--connection curvature. One gets a
complex structure i.e $\ ^{^{\ast }\mathbf{\ J}}\mathbf{\Omega}=0$
 under some
quite complicated conditions on $g^{ab}(x,p)$ and $N_{ia}$ which will be not
written here.

It should be noted here that the N--adapted (co--) bases (\ref{dder})--(\ref%
{cddif}) are nonholonomic with nontrivial anholonomy coefficients. For
instance,
\begin{equation}
\lbrack \mathbf{e}_{\alpha },\mathbf{e}_{\beta }]=\mathbf{e}_{\alpha }%
\mathbf{e}_{\beta }-\mathbf{e}_{\beta }\mathbf{e}_{\alpha }=W_{\alpha \beta
}^{\gamma }\mathbf{e}_{\gamma }  \label{anhrel}
\end{equation}%
with (antisymmetric) anholonomy coefficients $W_{ia}^{b}=\partial
_{a}N_{i}^{b}$ and $W_{ji}^{a}=\Omega _{ij}^{a}, $ with :
\begin{equation*}
\Omega^a _{ij}=\frac{\partial N^a_{i}}{\partial x^{j}}-
\frac{\partial N^a_{j}}{\partial x^{i}}+N^b_{i}\frac{
\partial N^a_{j}}{\partial p_{b}}-N^b_{j}\frac{\partial
N^a_{i}}{\partial p_{b}} .
\end{equation*}

\subsection{Almost K\"{a}hler Lagrange--Hamilton structures}

We can adapt to N--connections various geometric structures on
$TM$ and $ T^{\ast }M.$ For instance, we can consider:

\begin{definition}
One calls an almost symplectic structure on $T^{\ast }M$ a nondegenerate
N--adapted 2--form
\begin{equation*}
\ ^{\intercal }\mathbf{\theta }=\frac{1}{2}\ ^{\intercal }\mathbf{\theta }%
_{\alpha \beta }(u)\ ^{\ast }\mathbf{e}^{\alpha }\wedge \ ^{\ast }\mathbf{e}%
^{\beta }.
\end{equation*}
\end{definition}

The following Proposition holds:

\begin{proposition}
\label{pr01}For any $\ ^{\intercal }\theta $ on $T^{\ast }M,$ there is a
unique N--connection $\ ^{\ast }\mathbf{N}=\{\ ^{\ast }N_{ia}\}$ satisfying
the conditions:%
\begin{equation}
\ ^{\intercal }\theta =(h\ ^{\ast }\mathbf{X},v\ ^{\ast }\mathbf{Y})=0%
\mbox{
and }\ ^{\intercal }\theta \doteq h\ ^{\intercal }\theta +v\ ^{\intercal }\theta ,
\label{aux02}
\end{equation}%
for any $\ ^{\ast }\mathbf{X}=h\ ^{\ast }\mathbf{X}+v\ ^{\ast }\mathbf{X,}$ $%
\ ^{\ast }\mathbf{Y}=h\ ^{\ast }\mathbf{Y}+v\ ^{\ast }\mathbf{Y}$ and $h\
^{\intercal }\theta (\ ^{\ast }\mathbf{X,\ ^{\ast }Y})\doteqdot \
^{\intercal }\theta (h\ ^{\ast }\mathbf{X,}h\ ^{\ast }\mathbf{Y}),$ $v\
^{\intercal }\theta (\ ^{\ast }\mathbf{X,\ ^{\ast }Y})\doteqdot \
^{\intercal }\theta (v\ ^{\ast }\mathbf{X,}v\ ^{\ast }\mathbf{Y}).$
\end{proposition}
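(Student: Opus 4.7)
The idea is to work in the natural coordinate basis on $TT^{\ast}M$, isolate the mixed horizontal--vertical part of $\ ^{\intercal}\mathbf{\theta}$, and recognize condition (\ref{aux02}) as a pointwise linear algebraic system for the N--connection coefficients. First I would expand the nondegenerate 2--form on a local chart $\ ^{\ast}U\subset T^{\ast}M$ in the natural cobasis $(dx^{i},dp_{a})$,
\[
\ ^{\intercal}\mathbf{\theta} \;=\; \tfrac{1}{2}A_{ij}\,dx^{i}\wedge dx^{j} \;+\; B_{i}{}^{a}\,dx^{i}\wedge dp_{a} \;+\; \tfrac{1}{2}C^{ab}\,dp_{a}\wedge dp_{b},
\]
with smooth coefficients $A_{ij}, B_{i}{}^{a}, C^{ab}$ on $\ ^{\ast}U$ and $C^{ab}=-C^{ba}$. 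For any candidate coefficients $\{\ ^{\ast}N_{ia}(x,p)\}$ the prospective N--adapted cobasis (\ref{cddif}) is $\ ^{\ast}\mathbf{e}_{a}=dp_{a}-\ ^{\ast}N_{ja}\,dx^{j}$, so $dp_{a}=\ ^{\ast}\mathbf{e}_{a}+\ ^{\ast}N_{ja}\,dx^{j}$. Substituting and using antisymmetry of $C^{ab}$ to combine the two contributions that produce $dx^{i}\wedge\ ^{\ast}\mathbf{e}_{a}$ terms, the coefficient of the mixed h--v piece comes out to be $B_{i}{}^{a}-C^{ab}\,\ ^{\ast}N_{ib}$.

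Next I would reinterpret (\ref{aux02}). Since $h\ ^{\ast}\mathbf{X}$ spans the horizontal subspace determined by the candidate $\ ^{\ast}N_{ia}$ and $v\ ^{\ast}\mathbf{Y}$ spans $vT^{\ast}M$, the condition $\ ^{\intercal}\mathbf{\theta}(h\ ^{\ast}\mathbf{X},v\ ^{\ast}\mathbf{Y})=0$ is exactly the vanishing of the mixed coefficient above for all $i,a$, i.e. the algebraic system
\[
C^{ab}\,\ ^{\ast}N_{ib} \;=\; B_{i}{}^{a}.
\]
Nondegeneracy of $\ ^{\intercal}\mathbf{\theta}$ forces the vertical--vertical block $C^{ab}$ to be invertible: any vector in the kernel of $C^{ab}$, regarded as vertical, would pair trivially with the entire vertical subspace, and by the $B_{i}{}^{a}$--compatibility forced by the above equation it would also pair trivially with the horizontal complement, contradicting nondegeneracy. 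Denoting the inverse by $C_{ab}$, the unique solution $\ ^{\ast}N_{ib}=C_{ba}\,B_{i}{}^{a}$ depends smoothly on $(x,p)$, and hence determines a bona fide N--connection via the Whitney sum (\ref{whitneyd}).

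Finally I would check the full splitting. With this $\ ^{\ast}N_{ia}$ the expansion of $\ ^{\intercal}\mathbf{\theta}$ in the N--adapted cobasis contains only pure $\ ^{\ast}e^{i}\wedge\ ^{\ast}e^{j}$ and pure $\ ^{\ast}\mathbf{e}_{a}\wedge\ ^{\ast}\mathbf{e}_{b}$ terms, which is precisely the decomposition $\ ^{\intercal}\mathbf{\theta}=h\ ^{\intercal}\mathbf{\theta}+v\ ^{\intercal}\mathbf{\theta}$ demanded by (\ref{aux02}); uniqueness of $\ ^{\ast}N_{ia}$ is inherited from unique solvability of the linear system. The main (and essentially only) technical step is the invertibility of the vertical block $C^{ab}$: this is where the almost symplectic hypothesis does real work, by translating global nondegeneracy of $\ ^{\intercal}\mathbf{\theta}$ into invertibility of its restriction to the canonical vertical polarization of $T^{\ast}M$.
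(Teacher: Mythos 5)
Your proposal follows the same route as the paper's own proof: isolate the mixed horizontal--vertical component of $\ ^{\intercal }\theta $ (you do it by expanding in the natural cobasis $(dx^{i},dp_{a})$, the paper by pairing the adapted frame vectors $\ ^{\ast }\mathbf{e}_{i}$ with $\partial /\partial p_{a}$), read off the pointwise linear system $C^{ab}\ ^{\ast }N_{ib}=B_{i}{}^{a}$ for the unknown coefficients, and solve it by inverting the vertical--vertical block. Up to that point the two arguments are identical in substance, and your final check that the remaining terms organize into $h\ ^{\intercal }\theta +v\ ^{\intercal }\theta $ matches the paper's equation for the split form.

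The gap is in your justification that $C^{ab}$ is invertible. The argument is circular: you exclude a kernel vector of $C^{ab}$ by claiming it ``would also pair trivially with the horizontal complement by the $B_{i}{}^{a}$--compatibility forced by the above equation'' --- but that equation is exactly the system whose solvability is at stake, so you may not assume it holds. Moreover, the invertibility cannot be recovered from nondegeneracy of $\ ^{\intercal }\theta $ alone: the canonical symplectic form $dp_{i}\wedge dx^{i}$ is nondegenerate, yet its vertical--vertical block vanishes identically ($C^{ab}=0$) while $B_{i}{}^{a}\neq 0$, so the linear system has no solution at all and the vertical distribution is Lagrangian rather than symplectic. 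What is actually needed is the separate hypothesis that the restriction of $\ ^{\intercal }\theta $ to the vertical distribution is nondegenerate, i.e. $\mathrm{rank}\,|\ ^{\intercal }\theta (\partial /\partial p_{b},\partial /\partial p_{a})|=n$; the paper simply asserts this rank condition as part of ``nondegeneracy'' without deriving it. Once that condition is granted, your existence, uniqueness and smoothness conclusions go through exactly as you describe; without it, the ``essentially only technical step'' of your plan is the one that fails.
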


\begin{proof}
For $\ ^{\ast }\mathbf{X=\ ^{\ast }e}_{\alpha }=(\ ^{\ast }\mathbf{e}_{i},\
^{\ast }e^{a})$ and $\ ^{\ast }\mathbf{Y=\ ^{\ast }e}_{\beta }=(\ ^{\ast }%
\mathbf{e}_{l},\ ^{\ast }e^{b}),$ where $\ ^{\ast }\mathbf{e}_{\alpha }$ is
a N--adapted basis\ of type (\ref{cdder}), we write the first equation in (%
\ref{aux02}) in the form%
\begin{equation*}
\mathbf{\ ^{\intercal }}\theta =\mathbf{\ ^{\intercal }}\theta (\mathbf{\
^{\ast }e}_{i},\mathbf{\ ^{\ast }}e^{a})=\mathbf{\ ^{\intercal }}\theta (%
\frac{\partial }{\partial x^{i}},\frac{\partial }{\partial p_{a}})-\mathbf{\
^{\ast }}N_{ib}\mathbf{\ ^{\intercal }}\theta (\frac{\partial }{\partial
p_{b}},\frac{\partial }{\partial p_{a}})=0.
\end{equation*}%
    Such conditions  uniquely define $\mathbf{\
^{\ast }}N_{i b}$  because $\mathbf{\ ^{\intercal }}\theta $ is
non--degenerate, that is $rank|\mathbf{\ ^{\intercal }}\theta (\frac{%
\partial }{\partial p_{b}},\frac{\partial }{\partial p_{a}})|=n.$
Setting locally
\begin{equation}
\mathbf{\ ^{\intercal }}\theta =\frac{1}{2}\mathbf{\ ^{\intercal }}\theta
_{ij}(u)e^{i}\wedge e^{j}+\frac{1}{2}\mathbf{\ ^{\intercal }}\theta ^{ab}(u)%
\mathbf{\ ^{\intercal }e}_{a}\wedge \mathbf{\ ^{\ast }e}_{b},  \label{aux03}
\end{equation}%
where the first term is for $h\mathbf{\ ^{\intercal }}\theta $ and the
second term is $v\mathbf{\ ^{\intercal }}\theta ,$ we get the second formula
in (\ref{aux02}). Finally, we note that in this proposition the constructed
N--connection $\mathbf{\ ^{\ast }}N_{ib},$ in general, is not a canonical
one (\ref{chnc}). $\square $
\end{proof}

\vskip3pt

In a similar form, as in Proposition \ref{pr01}, we can construct
a unique N--connection $\ \mathbf{N}=\{\ N_{i}^{a}\}$ for any
almost symplectic structure $\theta $ on $TM$ (from
formal point of view, we have to omit in formulas the symbols ''*'' and ''$%
\mathbf{\ ^{\intercal }}$''and use variables $y^{a}$ instead of $p_{a}).$

A N--connection $\ ^{\ast }\mathbf{N}$ (\ref{whitneyd}) defines a unique
decomposition of a d--vector $\ ^{\ast }\mathbf{X=\ ^{\ast }}X^{h}+\ ^{\ast
}X^{v}$ on $T^{\ast }M,$ for $\mathbf{\ ^{\ast }}X^{h}=h\ ^{\ast }\mathbf{X}$
and $\mathbf{\ ^{\ast }}X^{v}=v\ ^{\ast }\mathbf{X},$ where the projectors $%
h $ and $v$ defines respectively the distributions $^{\ast }\mathbf{N}$ and $%
^{\ast }\mathbf{V}$. They have the properties
\begin{equation*}
h+v=\mathbf{I},h^{2}=h,v^{2}=v,h\circ v= v\circ h=0.
\end{equation*}%
This allows us to introduce on $T^{\ast }M$ the almost product operator
\begin{equation*}
\ ^{\ast }\mathbf{P}\doteqdot I-2v=2h-I
\end{equation*}%
acting on $\ ^{\ast }\mathbf{e}_{\alpha }=(\ ^{\ast }\mathbf{e}_{i},\ ^{\ast
}e^{b})$ (\ref{cdder}) following formulas
\begin{equation*}
\ ^{\ast }\mathbf{P}(\ ^{\ast }\mathbf{e}_{i})=\ ^{\ast }\mathbf{e}_{i}%
\mbox{\ and \ }\ ^{\ast }\mathbf{P}(\ ^{\ast }e^{b})=-\ ^{\ast
}e^{b}.
\end{equation*}
In a similar form, a N--connection $\ \mathbf{N}$ (\ref{whitney})
induces an almost product structure $\mathbf{P}$
on $TM.$ One uses also the almost tangent operators $\ $%
\begin{eqnarray*}
\mathbb{J(}\mathbf{e}_{i}\mathbb{)} &=&e_{n+i}\mbox{\ and
\ }\ \mathbb{J}\left( e_{a}\right) =0,\mbox{ \ or \ }\mathbb{J=}\frac{%
\partial }{\partial y^{i}}\otimes dx^{i}; \\
\ ^{\ast }\mathbb{J(}\ ^{\ast }\mathbf{e}_{i}\mathbb{)} &=&\ ^{\ast }g_{ib}\
^{\ast }e^{b}\mbox{\ and
\ }\ ^{\ast }\mathbb{J}\left( \ ^{\ast }e^{b}\right) =0,\mbox{ \ or \ }%
\mathbb{J=}\ ^{\ast }g_{ia}\frac{\partial }{\partial p_{a}}\otimes dx^{i}.
\end{eqnarray*}%
The operators $\ ^{\ast }\mathbf{P,}\ ^{\ast }\mathbf{J}$ and $\ ^{\ast }%
\mathbb{J}$ are respectively $\mathcal{L}$--dual to $\ \mathbf{P,}\ \mathbf{J%
}$ and $\ \mathbb{J}$ if and only if $\ ^{\ast }\mathbf{N}$ and $\ \mathbf{N}
$ are $\mathcal{L}$--dual.

For the above--introduced almost complex and almost product operators, it is
straightforward to prove

\begin{proposition}
\label{propctf}Let $\left( \mathbf{N,}\ ^{\ast }\mathbf{N}\right) $ be a
pair of $\mathcal{L}$--dual N--connections. Then, we can construct canonical
d--tensor fields (defined respectively by $L(x,y)$ and $H(x,p)$ related by
Legendre transforms (\ref{legt1}) and/or (\ref{legt2})$):$%
\begin{equation*}
\mathbf{J=}-\delta _{i}^{a}e_{a}\otimes e^{i}+\delta _{a}^{i}\mathbf{e}%
_{i}\otimes \mathbf{e}^{a},\ ^{\ast }\mathbf{J=-\ ^{\ast }}g_{ia}\mathbf{\
^{\ast }}e^{a}\otimes \mathbf{\ ^{\ast }}e^{i}+\mathbf{\ ^{\ast }}g^{ia}%
\mathbf{\ ^{\ast }e}_{i}\otimes \mathbf{\ ^{\ast }e}_{a}
\end{equation*}%
corresponding to the $\mathcal{L}$--dual pair of almost complex structures $%
\left( \mathbf{J,}\ ^{\ast }\mathbf{J}\right) ,$
\begin{equation*}
\mathbf{P=e}_{i}\otimes e^{i}-e_{a}\otimes \mathbf{e}^{a},\ ^{\ast }\mathbf{%
P=\ ^{\ast }e}_{i}\otimes \mathbf{\ ^{\ast }}e^{i}-\mathbf{\ ^{\ast }}%
e^{a}\otimes \mathbf{\ ^{\ast }e}_{a}
\end{equation*}%
corresponding to the $\mathcal{L}$--dual pair of almost product structures $%
\left( \mathbf{P,}\ ^{\ast }\mathbf{P}\right) ,$ and almost symplectic
structures
\begin{equation}
\theta =g_{aj}(x,y)\mathbf{e}^{a}\wedge e^{i}\mbox{ and }\mathbf{\ ^{\ast }}%
\theta =\delta _{i}^{a}\mathbf{\ ^{\ast }e}_{a}\wedge \mathbf{\ ^{\ast }}%
e^{i}  \label{sympf}
\end{equation}
\end{proposition}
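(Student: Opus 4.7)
The plan is to verify each claimed formula in two steps: first, re-express the abstractly defined operators $\mathbf{J}$, $\ ^{\ast }\mathbf{J}$, $\mathbf{P}$, $\ ^{\ast }\mathbf{P}$ in component form with respect to the N--adapted frames (\ref{dder})--(\ref{cddif}); second, confirm that the pairs $(\mathbf{J},\ ^{\ast }\mathbf{J})$, $(\mathbf{P},\ ^{\ast }\mathbf{P})$, $(\theta ,\ ^{\ast }\theta )$ are $\mathcal{L}$--dual in the sense that each member is the push--forward (respectively pull--back) of the other under the Legendre diffeomorphisms $\varphi $ and $\psi $.

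First I would unfold the pointwise definitions already established for the almost complex structures. On $TM$, the operator $\mathbf{J}$ sends $\mathbf{e}_{i}\mapsto -e_{n+i}$ and $e_{n+i}\mapsto \mathbf{e}_{i}$; using the identification $a\leftrightarrow n+i$ and reading the matrix of $\mathbf{J}$ as a $(1,1)$--tensor in the basis $\mathbf{e}_{\alpha }\otimes \mathbf{e}^{\beta }$, this yields $\mathbf{J}=-\delta _{i}^{a}e_{a}\otimes e^{i}+\delta _{a}^{i}\mathbf{e}_{i}\otimes \mathbf{e}^{a}$ at once. The same procedure applied to $\ ^{\ast }\mathbf{J}$, with the extra factors $\ ^{\ast }g_{ia}$ and $\ ^{\ast }g^{ia}$ arising because the cotangent--side action is mediated by the fundamental tensor (\ref{hm}), produces the second formula. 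For $\mathbf{P}$ and $\ ^{\ast }\mathbf{P}$, one just reads off the N--adapted matrix of $2h-I$, which equals $+\mathbf{I}$ on the horizontal subspace and $-\mathbf{I}$ on the vertical subspace.

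For the almost symplectic forms I would use the natural construction $\theta (\mathbf{X},\mathbf{Y}):=\mathbf{g}(\mathbf{JX},\mathbf{Y})$ with $\mathbf{g}$ the N--lift (\ref{dmctb}) of the Lagrange metric; inserting the explicit matrix of $\mathbf{J}$ and skew--symmetrizing, only the cross h--v terms survive, giving $\theta =g_{aj}\,\mathbf{e}^{a}\wedge e^{j}$, which is non--degenerate because $g_{ab}$ is. The dual construction on $T^{\ast }M$ uses the N--adapted rewriting of the canonical symplectic form $dp_{i}\wedge dx^{i}$ in the co--basis (\ref{cddif}): with $\ ^{\ast }\mathbf{e}_{a}=dp_{b}-\ ^{\ast }N_{ib}dx^{i}$ and $\ ^{\ast }e^{i}=dx^{i}$, the mixed h--v piece is precisely $\delta _{i}^{a}\ ^{\ast }\mathbf{e}_{a}\wedge \ ^{\ast }e^{i}$, which matches (\ref{sympf}).

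The main obstacle will be the $\mathcal{L}$--duality claim, i.e.\ that $\varphi ^{\ast }(\ ^{\ast }\mathbf{J})=\mathbf{J}$, $\varphi ^{\ast }(\ ^{\ast }\mathbf{P})=\mathbf{P}$, $\varphi ^{\ast }(\ ^{\ast }\theta )=\theta $, with the analogous identities under $\psi $. The key input is that by Theorem \ref{thcnc} (and the definition of $\mathcal{L}$--duality between $N_{i}^{a}$ and $\ ^{\ast }N_{ij}$), the tangent map $T\varphi $ maps $\mathbf{e}_{\alpha }$ onto $\ ^{\ast }\mathbf{e}_{\alpha }$ up to the fundamental tensor factors $g_{ia}$, $g^{ia}$, because $p_{a}=\partial L/\partial y^{a}$ gives $\partial p_{a}/\partial y^{b}=g_{ab}$. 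The compatibility of the Legendre transform with the two Hessians $g_{ab}$ and $\ ^{\ast }g^{ab}$, both arising as second derivatives of mutually dual generating functions (\ref{legt1})--(\ref{legt2}), then guarantees that the explicit coordinate expressions above transform into one another under $\varphi ^{\ast }$ and $\psi ^{\ast }$, which completes the proof.
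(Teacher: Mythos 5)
Your proposal is correct and follows the route the paper itself intends: the paper gives no written proof, asserting only that the claim is ``straightforward to prove'' from the pointwise definitions of $\mathbf{J},\ ^{\ast}\mathbf{J},\ \mathbf{P},\ ^{\ast}\mathbf{P}$ and the compatibility $\mathbf{\theta}(\mathbf{X},\mathbf{Y})\doteqdot \mathbf{g}(\mathbf{JX},\mathbf{Y})$ given just above, and your direct component verification in the N--adapted (co)frames is exactly that computation. The only place where you remain at the level of a sketch is the $\mathcal{L}$--duality of the three pairs under $\varphi$ and $\psi$, but you correctly identify the essential inputs ($\partial p_{a}/\partial y^{b}=g_{ab}$ and the mutual inversion of the Hessians of $L$ and $H$), which is no less than the paper provides.
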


Let us consider an important example:

 A Finsler manifold (space)
$F^{n}=(M,F(x,y))$ is a particular case of Lagrange space, when
the regular Lagrangian $L=F^{2}$ is defined by a fundamental
Finsler function $F(x,y)$ satisfying the conditions: 1. the
positive function $F$ is differentiable function on
$\widetilde{TM}$
continuous on the null section of projection $\pi :TM\rightarrow M,$ 2. $%
F(x,\lambda y)=|\lambda |F(x,y),$ i.e. it is 1-homogeneous on the fibres of $%
TM,$ and 3. the Hessian (\ref{lm}) defined in this case by $F^{2}$ is
positively defined on $\widetilde{TM}.$ 

It is used also the notion of Cartan space $C^{n}=(M,C(x,p))$ for $%
H=C^{2}(x,p)$ as a particular (1-homogeneous on fiber coordinates) case of
Hamilton space when $C$ satisfies the same conditions as $F$ but with
respect to coordinates $p_{a}$ (in brief, we can say that Cartan spaces are
Finsler spaces on $T^{\ast }M,$ see details in \cite{mhss}). In a similar
manner as for Lagrange and Hamilton spaces, we can introduce the concept of $%
\mathcal{L}$--dual geometric objects on Finsler and Cartan spaces. For
simplicity, in this work we shall emphasize the bulk constructions for
Hamilton spaces considering that by Legendre transform we can generate
similar ones for Lagrange spaces and, in particular, for respective Finsler
and Cartan geometries.

\begin{definition}
\label{defaks}An almost Hermitian model of a cotangent bundle \ $T^{\ast }M$
(or tangent bundle $TM)$ equipped with a N--connection structure $\mathbf{\
^{\ast }N}$ (or $\mathbf{N})$ is defined by a triple $\mathbf{\ ^{\ast }H}%
^{2n}=(T^{\ast }M,\mathbf{\ ^{\ast }}\theta ,\mathbf{\ ^{\ast }J}),$ where $%
\mathbf{\ ^{\ast }\theta (\ ^{\ast }X,\ ^{\ast }Y)}\doteqdot \mathbf{\
^{\ast }g}\left( \mathbf{\ ^{\ast }JX,Y}\right) $ (or $\mathbf{H}%
^{2n}=(TM,\theta ,\mathbf{J}),$ where $\mathbf{\theta (X,Y)}\doteqdot
\mathbf{g}\left( \mathbf{JX,Y}\right) ).$ A space $\mathbf{\ ^{\ast }H}^{2n}$
is almost K\"{a}hler, denoted $\mathbf{\ ^{\ast }K}^{2n}$ if $d\mathbf{\
^{\ast }\theta }=0.$
\end{definition}

The following theorem holds:

\begin{theorem}
The Lagrange and Hamilton spaces can be represented as almost K\"{a}hler
spaces on, respectively, on $TM$ and $T^{\ast }M$ endowed with canonical
N--connection structures $\mathbf{N}$ (\ref{clnc}) and $\ ^{\ast }\mathbf{N}$
(\ref{chnc}).
\end{theorem}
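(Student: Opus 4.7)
By Definition \ref{defaks}, proving that $\mathbf{H}^{2n}$ and $\mathbf{\ ^{\ast}H}^{2n}$ are almost K\"ahler amounts to verifying two things: (i) the almost Hermitian triple structure, i.e., the compatibility $\mathbf{\ ^{\ast}\theta}(\mathbf{\ ^{\ast}X},\mathbf{\ ^{\ast}Y}) = \mathbf{\ ^{\ast}g}(\mathbf{\ ^{\ast}JX},\mathbf{\ ^{\ast}Y})$ (and its $TM$--analogue), and (ii) the closedness $d\mathbf{\ ^{\ast}\theta}=0$ and $d\theta=0$ of the induced almost symplectic 2--forms (\ref{sympf}). Part (i) is already given to us for free: Proposition \ref{propctf} was proved precisely for the canonical pair $(\mathbf{N},\mathbf{\ ^{\ast}N})$ of (\ref{clnc})--(\ref{chnc}), and the d--tensor expressions for $\mathbf{J}$, $\mathbf{\ ^{\ast}J}$, $\theta$, $\mathbf{\ ^{\ast}\theta}$ there are manufactured so that the Hermitian relation $\mathbf{\ ^{\ast}\theta}(\cdot,\cdot) = \mathbf{\ ^{\ast}g}(\mathbf{\ ^{\ast}J}\cdot,\cdot)$ holds by construction. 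Hence the entire content of the theorem is the closedness statement (ii).

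I would handle the Hamilton case first, as it is the cleaner one. Expand $\mathbf{\ ^{\ast}\theta} = \delta_{i}^{a}\mathbf{\ ^{\ast}e}_{a}\wedge \mathbf{\ ^{\ast}}e^{i}$ using the canonical N--adapted coframe (\ref{cddif}):
\begin{equation*}
\mathbf{\ ^{\ast}\theta}\;=\;\delta_{i}^{a}\bigl(dp_{a}-\mathbf{\ ^{\ast}}N_{ja}\,dx^{j}\bigr)\wedge dx^{i} \;=\; dp_{i}\wedge dx^{i}\;-\;\mathbf{\ ^{\ast}}N_{ji}\,dx^{j}\wedge dx^{i}.
\end{equation*}
The key observation is that the canonical coefficient (\ref{chnc}) is symmetric in $(i,j)$: the bracket $\{\mathbf{\ ^{\ast}}g_{ij},H\}$ is symmetric because $\mathbf{\ ^{\ast}}g_{ij}$ is, and the remaining term carries an explicit symmetrization $\mathbf{\ ^{\ast}}g_{j)k}$. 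Therefore the antisymmetric contraction $\mathbf{\ ^{\ast}}N_{ji}\,dx^{j}\wedge dx^{i}$ vanishes, and $\mathbf{\ ^{\ast}\theta}$ coincides with the canonical Liouville--symplectic form $\theta = dp_{i}\wedge dx^{i}$ of (\ref{csymps}). Since $\theta$ is exact, $d\mathbf{\ ^{\ast}\theta}=0$.

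For the Lagrange case I would take the Legendre--transform route rather than redo the computation directly. Under the diffeomorphism $\varphi:TM\to T^{\ast}M$ induced by the regular Lagrangian $L$, the $\mathcal{L}$--duality of $(\mathbf{N},\mathbf{\ ^{\ast}N})$, $(\mathbf{J},\mathbf{\ ^{\ast}J})$ and the associated N--adapted (co)frames (developed at length in Section 2) implies that the canonical d--objects in Proposition \ref{propctf} are $\varphi$--related. In particular, the 2--form $\theta$ on $TM$ is the pull--back $\varphi^{\ast}\mathbf{\ ^{\ast}\theta}$, so $d\theta = d(\varphi^{\ast}\mathbf{\ ^{\ast}\theta}) = \varphi^{\ast}(d\mathbf{\ ^{\ast}\theta}) = 0$. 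Equivalently and as an independent check, one can directly verify $\theta = -d\vartheta_{L}$ with Poincar\'e--Cartan 1--form $\vartheta_{L} = (\partial L/\partial y^{i})\,dx^{i}$, using $g_{ij}=\partial^{2}L/\partial y^{i}\partial y^{j}$ and the spray identity $N_{i}^{j}=\partial G^{j}/\partial y^{i}$ from Theorem \ref{th1}; the $dx\wedge dx$ cross terms cancel by the same symmetry mechanism as in the Hamilton case.

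\textbf{Expected obstacle.} The genuine technical content lies entirely in showing that the N--adapted 2--form (\ref{sympf}) is closed, and ultimately in the symmetry of $\mathbf{\ ^{\ast}}N_{ij}$ and $g_{ij}N_{k}^{j}$ in their base indices. This symmetry is not automatic for an arbitrary N--connection; it is a specific consequence of the formulas (\ref{clnc})--(\ref{chnc}) that derive the N--connection from $L$ or $H$. Should the direct Lagrange computation be preferred over the pull--back argument, the delicate point is confirming that the antisymmetric part of $g_{ij}(\partial G^{j}/\partial y^{k})$ in $(i,k)$ matches $\partial^{2}L/\partial x^{(i}\partial y^{k)}$ appearing from $d\vartheta_{L}$, which is a short but non--trivial Hessian identity that must be invoked explicitly.
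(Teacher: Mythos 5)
Your proof is correct and follows essentially the same route as the paper: the paper's own argument is precisely that $\theta =d\omega $ and $\ ^{\ast }\theta =d(p_{i}dx^{i})$ are exact, hence closed, so the almost K\"{a}hler condition of Definition 2.8 holds. You additionally supply the verification the paper leaves implicit --- that the symmetry of the canonical coefficients $\ ^{\ast }N_{ij}$ from (\ref{chnc}) (and the corresponding Hessian identity for $g_{ij}N_{k}^{j}$ on $TM$) kills the $dx\wedge dx$ terms --- which is the only nontrivial content of the exactness claim.
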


\begin{proof}
It follows from the existence on $TM$ and $T^{\ast }M$ of canonical
1--forms, respectively, defined by a regular Lagrangian $L$ and Hamiltonian $%
H$ related by a Legendre transform,
\begin{equation*}
\omega =\frac{\partial L}{\partial y^{i}}e^{i}\mbox{ and }\ ^{\ast }\omega
=p_{i}dx^{i},
\end{equation*}%
for which
\begin{equation*}
\theta =d\omega \mbox{ and }\ \mathbf{^{\ast }\theta =}d\ ^{\ast }\omega ,
\end{equation*}%
see (\ref{csymps}). As a result, we get that $d\mathbf{\ \theta }=0$ and $d%
\mathbf{\ ^{\ast }\theta }=0,$ which correspond to the Definition \ref%
{defaks}. $\Box $
\end{proof}

\vskip5pt

In this paper, we shall work with almost K\"{a}hler models on (co) tangent
bundles defined canonically by (pseudo) Riemannian metrics on base
manifolds, see (\ref{eq01}), and/or (effective, or for regular mechanics)
Lagrangians (Hamiltonians). Finally, we emphasize that realistic classical
and quantum models are elaborated in explicit form for some classes of
linear connections defined to satisfy certain physical principles and
constructed geometrically to be adapted, or not, to a N--connection
structure. We shall perform such classical and quantum constructions in the
following sections.

\section{Nonlinear Connections and Almost Symplectic Geometry}

In this section, we consider the almost symplectic geometry induced by
regular Hamiltonians and corresponding canonical N--connections defined
naturally, for gravitational and/or geometric mechanics models, on (co)
tangent bundles.

\subsection{Canonical N--connections and d--connections for Lagran\-ge and
Hamilton spaces}

Let $D$ be a linear connection on $TM$ when for a
$\mathcal{L}$--duality between the tangent and corresponding
cotangent bundles there are defined pull--back and push--forward
maps. We can define a linear connection $\ ^{\ast }D$ on $T^{\ast
}M$ as follows:
\begin{equation*}
^{\ast }D_{\ ^{\ast }X}\ ^{\ast }Y\doteqdot (D_{\ ^{\circ }X}\
^{\circ }Y)^{\ast },
\end{equation*}
for any vector fields $\ ^{\ast }X$ and $\ ^{\ast }Y$ on $T^{\ast
}M.$ Inversely, for any linear connection $ ^{\ast }D$ on $T^{\ast
}M,$ we get a linear connection $^{\circ } D$ on $TM,$ following
the rule
\begin{equation*}
 ^{\circ }D_{X}Y\doteqdot (^{\ast }D_{ ^{\ast }X}\ ^{\ast
}Y)^{\circ },
\end{equation*}
for any vector fields $X$ and $Y$ on $TM.$

\begin{definition}
A linear connection $\mathbf{D}$  or ($\ ^{\ast }\mathbf{D})$ on
$TM$ (or $ T^{\ast }M)$ is a distinguished connection
(d--connection) if it is compatible with the almost product
structure $\mathbf{DP}=0$ (or $\ ^{\ast } \mathbf{D\ ^{\ast
}P}=0).$
\end{definition}

For $\mathcal{L}$--dual Lagrange and Hamilton spaces, one follows that $%
\mathbf{DP}=0$ induces $\ ^{\ast }\mathbf{D\ ^{\ast }P}=0,$ and
inversely. The coefficients of d--connections can be defined with
respect to N--adapted frames,
\begin{equation*}
\mathbf{D}_{\mathbf{e}_{\beta }}\mathbf{e}_{\gamma }\doteqdot \mathbf{\Gamma
}_{\ \beta \gamma }^{\alpha }\mathbf{e}_{\alpha }\mbox{ and }\mathbf{D}_{\
^{\ast }\mathbf{e}_{\beta }}\ ^{\ast }\mathbf{e}_{\gamma }\doteqdot \ ^{\ast
}\mathbf{\Gamma }_{\ \beta \gamma }^{\alpha }\ ^{\ast }\mathbf{e}_{\alpha }
\end{equation*}%
with corresponding N--adapted splitting,
\begin{equation*}
\mathbf{D}_{\mathbf{e}_{k}}\mathbf{e}_{j}\doteqdot L_{\ jk}^{i}\mathbf{e}%
_{i},\mathbf{D}_{\mathbf{e}_{k}}e_{b}\doteqdot \acute{L}_{\ bk}^{a}e_{a},%
\mathbf{D}_{e_{c}}\mathbf{e}_{j}\doteqdot \acute{C}_{\ jc}^{i}\mathbf{e}_{i},%
\mathbf{D}_{e_{c}}e_{b}\doteqdot C_{\ bc}^{a}e_{a}
\end{equation*}%
and%
\begin{eqnarray*}
\ ^{\ast }\mathbf{D}_{\ ^{\ast }\mathbf{e}_{k}}\ ^{\ast }\mathbf{e}_{j}
&\doteqdot &\ ^{\ast }L_{\ jk}^{i}\ ^{\ast }\mathbf{e}_{i},\ ^{\ast }\mathbf{%
D}_{\mathbf{e}_{k}}\ ^{\ast }e^{b}\doteqdot -\ ^{\ast }\acute{L}_{a\ k}^{\
b}\ ^{\ast }e^{a}, \\
\ ^{\ast }\mathbf{D}_{\ ^{\ast }e^{c}}\ ^{\ast }\mathbf{e}_{j} &\doteqdot &\
^{\ast }\acute{C}_{\ j}^{i\ c}\ ^{\ast }\mathbf{e}_{i},\ ^{\ast }\mathbf{D}%
_{\ ^{\ast }e^{c}}\ ^{\ast }e^{b}\doteqdot -\ ^{\ast }C_{a}^{\ bc}\ ^{\ast
}e^{a},
\end{eqnarray*}%
when
\begin{equation*}
\mathbf{\Gamma }_{\ \beta \gamma }^{\alpha }=\{L_{\ jk}^{i},\acute{L}_{\
bk}^{a},\acute{C}_{\ jc}^{i},C_{\ bc}^{a}\}\mbox{ and }\ ^{\ast }\mathbf{%
\Gamma }_{\ \beta \gamma }^{\alpha }=\{\ ^{\ast }L_{\ jk}^{i},\ ^{\ast }%
\acute{L}_{a\ k}^{\ b},\ ^{\ast }\acute{C}_{\ j}^{i\ c},\ ^{\ast }C_{a}^{\
bc}\}
\end{equation*}%
define corresponding h-- and v--splitting of covariant derivatives
\begin{equation*}
\mathbf{D=}\left( \mathbf{\ }_{h}\mathbf{D,\ }_{v}\mathbf{D}\right)
\mbox{
and }\ ^{\ast }\mathbf{D=}\left( \mathbf{\ }_{h}^{\ast }\mathbf{D,\ }%
_{v}^{\ast }\mathbf{D}\right) ,
\end{equation*}%
where $\ _{h}\mathbf{D}=\{L_{\ jk}^{i},\acute{L}_{\ bk}^{a}\},\ _{v}\mathbf{D%
}=\{\acute{C}_{\ jc}^{i},C_{\ bc}^{a}\}$ and $\ _{h}^{\ast }\mathbf{D}=\{\
^{\ast }L_{\ jk}^{i},\ ^{\ast }\acute{L}_{a\ k}^{\ b}\},$ $\ _{v}^{\ast }%
\mathbf{D}=\{\ ^{\ast }\acute{C}_{\ j}^{i\ c},\ ^{\ast }C_{a}^{\ bc}\}.$

We shall work with a more special class of d--connections:

\begin{definition}
\label{defnlinc}A linear connection $\ ^{n}\mathbf{D}$ (or $\ ^{\ast n}%
\mathbf{D)}$ on $TM$ (or $T^{\ast }M)$ is N--linear if it preserves under
parallelism the Whitney sum $\mathbf{N}$ (\ref{whitney}) (or $\ ^{\ast }%
\mathbf{N}$ (\ref{whitneyd})) i.e it is a $d$-connection and is compatible
with the almost tangent structure $\mathbb{J}$ (or $\ ^{\ast }\mathbb{J)}$
i.e. $\ ^{n}\mathbf{D}\mathbb{J}=0$ (or $\ ^{\ast n}\mathbf{D}\ ^{\ast }%
\mathbb{J}=0).$
\end{definition}

This is a class of N--linear connections characterized by the property:

\begin{proposition}
The N--adapted coefficients of N--linear connections\newline
$\ ^{n}\mathbf{D}\doteqdot \{\ ^{n}\Gamma _{\ \beta \gamma }^{\alpha }=(L_{\
jk}^{i},\acute{C}_{\ jc}^{i})\}$ and $\ ^{\ast n}\mathbf{D}\doteqdot \{\
^{n\ast }\Gamma _{\ \beta \gamma }^{\alpha }=(\ ^{\ast }L_{\ jk}^{i},\
^{\ast }\acute{C}_{\ j}^{i\ c})\}$ of respective $\mathcal{L}$--dual
Lagrange and Hamilton spaces are:
\begin{equation}
\ ^{n}\mathbf{\Gamma }_{\ \beta \gamma }^{\alpha }=\{L_{\ jk}^{i},\acute{L}%
_{\ n+jk}^{n+i}=L_{\ jk}^{i},\acute{C}_{\ jc}^{i},C_{\ n+jc}^{n+i}=\acute{C}%
_{\ jc}^{i}\}  \label{nlinc}
\end{equation}%
and
\begin{equation}
\ ^{\ast n}\mathbf{\Gamma }_{\ \beta \gamma }^{\alpha }=\{\ ^{\ast }L_{\
jk}^{i},\ ^{\ast }\acute{L}_{n+j\ k}^{\ n+i}=\ ^{\ast }L_{\ jk}^{i},\ ^{\ast
}\acute{C}_{\ j}^{i\ c},\ ^{\ast }C_{n+j}^{\ n+i\ c}=\ ^{\ast }\acute{C}_{\
j}^{i\ c}\}.  \label{nlincd}
\end{equation}
\end{proposition}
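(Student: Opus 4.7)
The plan is to unfold the defining N-linearity condition $\ ^{n}\mathbf{D}\mathbb{J}=0$ (i.e.\ that the almost tangent structure $\mathbb{J}$ is parallel with respect to $\ ^{n}\mathbf{D}$) directly on the N-adapted frame $(\mathbf{e}_{i},e_{a})$ and read off the claimed identifications. Since $\ ^{n}\mathbf{D}$ is already a d-connection, its coefficients split into the four blocks $\{L_{\ jk}^{i},\acute{L}_{\ bk}^{a},\acute{C}_{\ jc}^{i},C_{\ bc}^{a}\}$, and the task is precisely to see that the pair $(\acute{L},C)$ is forced to coincide, under the identification $y^{i}\doteqdot y^{n+i}$, with $(L,\acute{C})$.

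First I would write out $(\mathbf{D}_{X}\mathbb{J})Y=\mathbf{D}_{X}(\mathbb{J}Y)-\mathbb{J}(\mathbf{D}_{X}Y)=0$ for each pair $(X,Y)\in\{\mathbf{e}_{i},e_{a}\}^{2}$, using $\mathbb{J}(\mathbf{e}_{i})=e_{n+i}$ and $\mathbb{J}(e_{a})=0$ from the definition of $\mathbb{J}$. The two cases $Y=e_{b}$ are trivially satisfied because both sides lie in the kernel of $\mathbb{J}$. The case $X=\mathbf{e}_{k}$, $Y=\mathbf{e}_{j}$ yields
\[
\acute{L}_{\ n+j\,k}^{n+i}\,e_{n+i}=\mathbf{D}_{\mathbf{e}_{k}}e_{n+j}=\mathbb{J}(L_{\ jk}^{i}\mathbf{e}_{i})=L_{\ jk}^{i}\,e_{n+i},
\]
and the case $X=e_{c}$, $Y=\mathbf{e}_{j}$ gives $C_{\ n+j\,c}^{n+i}=\acute{C}_{\ jc}^{i}$ by an identical computation. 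Combined with the d-connection splitting this is exactly (\ref{nlinc}).

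For the cotangent-bundle statement I would repeat the same unfolding with $\ ^{\ast n}\mathbf{D}$ on the dual N-adapted frame $(\ ^{\ast}\mathbf{e}_{i},\ ^{\ast}e^{b})$ and with $\ ^{\ast}\mathbb{J}(\ ^{\ast}\mathbf{e}_{i})=\ ^{\ast}g_{ib}\ ^{\ast}e^{b}$, $\ ^{\ast}\mathbb{J}(\ ^{\ast}e^{b})=0$. The main obstacle here is book-keeping: unlike $\mathbb{J}$ (whose components are constant after the identification $y^{i}\doteqdot y^{n+i}$), $\ ^{\ast}\mathbb{J}$ carries the metric factor $\ ^{\ast}g_{ib}$, so $\ ^{\ast n}\mathbf{D}(\ ^{\ast}\mathbb{J}\ ^{\ast}Y)$ picks up a Leibniz term in $\ ^{\ast n}\mathbf{D}(\ ^{\ast}g_{ib})$. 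To dispose of this cleanly I would invoke $\mathcal{L}$-duality: by Proposition \ref{propctf} the pair $(\mathbf{J},\ ^{\ast}\mathbf{J})$ and the tangent/cotangent canonical structures are intertwined by $\varphi,\psi$, so the tangent-bundle identifications already established push forward to the identifications $\ ^{\ast}\acute{L}_{n+j\ k}^{\ n+i}=\ ^{\ast}L_{\ jk}^{i}$ and $\ ^{\ast}C_{n+j}^{\ n+i\ c}=\ ^{\ast}\acute{C}_{\ j}^{i\ c}$; alternatively, one notes that the $\ ^{\ast n}\mathbf{D}$-compatibility with the almost product structure $\ ^{\ast}\mathbf{P}$ (built into the d-connection hypothesis) together with the relation $\ ^{\ast}\mathbf{g}_{ij}=\ ^{\ast}g_{ij}$ in the h-block forces the $\ ^{\ast n}\mathbf{D}$-derivative of $\ ^{\ast}g_{ib}$ to drop out of the relevant Leibniz terms, leaving the same algebraic identification. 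This yields (\ref{nlincd}) and completes the proof.
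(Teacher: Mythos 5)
Your tangent--bundle computation is correct and is precisely the ``straightforward computation'' the paper leaves implicit: since $\mathbb{J}(\mathbf{e}_{i})=e_{n+i}$ has constant components in the N--adapted frame, unfolding $\mathbf{D}_{X}(\mathbb{J}Y)=\mathbb{J}(\mathbf{D}_{X}Y)$ for $Y=\mathbf{e}_{j}$ gives $\acute{L}_{\ n+jk}^{n+i}=L_{\ jk}^{i}$ and $C_{\ n+jc}^{n+i}=\acute{C}_{\ jc}^{i}$ at once, and the cases $Y=e_{b}$ are vacuous because a d--connection preserves the vertical distribution. (Note only that the paper's proof runs in the sufficiency direction --- it checks that coefficients of the stated form satisfy Definition \ref{defnlinc} --- whereas you argue necessity; on $TM$ the two are equivalent and equally easy.)

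The genuine gap is in the cotangent--bundle half: you correctly locate the obstacle but neither of your two proposed ways around it works. Writing out $\ ^{\ast n}\mathbf{D}_{\ ^{\ast}\mathbf{e}_{k}}(\ ^{\ast}\mathbb{J}\ ^{\ast}\mathbf{e}_{j})=\ ^{\ast}\mathbb{J}(\ ^{\ast n}\mathbf{D}_{\ ^{\ast}\mathbf{e}_{k}}\ ^{\ast}\mathbf{e}_{j})$ with the paper's sign convention $\ ^{\ast}\mathbf{D}_{\mathbf{e}_{k}}\ ^{\ast}e^{b}\doteqdot -\ ^{\ast}\acute{L}_{a\ k}^{\ b}\ ^{\ast}e^{a}$ yields $\ ^{\ast}\mathbf{e}_{k}(\ ^{\ast}g_{ja})=\ ^{\ast}g_{jb}\,\ ^{\ast}\acute{L}_{a\ k}^{\ b}+\ ^{\ast}g_{ia}\,\ ^{\ast}L_{\ jk}^{i}$, which determines $\ ^{\ast}\acute{L}$ from $\ ^{\ast}L$ but collapses to the claimed identification $\ ^{\ast}\acute{L}_{n+j\ k}^{\ n+i}=\ ^{\ast}L_{\ jk}^{i}$ only if $\ _{h}^{\ast}D_{k}\ ^{\ast}g_{ja}=0$; the v--block likewise needs $\ _{v}^{\ast}D^{c}\ ^{\ast}g_{ja}=0$. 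Compatibility with the almost product structure $\ ^{\ast}\mathbf{P}$ is exactly the block--diagonality of a d--connection and imposes nothing on $\ ^{\ast n}\mathbf{D}\ ^{\ast}g$, so the Leibniz term does not ``drop out.'' Nor does $\mathcal{L}$--duality transport the coefficient identities verbatim: the Legendre map sends $\partial/\partial y^{a}$ to $g_{ab}\,\partial/\partial p_{b}$, so the push--forward of the N--adapted frame on $TM$ differs from $\ ^{\ast}\mathbf{e}_{\alpha}$ by the Hessian, and the transported connection coefficients acquire precisely the derivative--of--metric terms you are trying to avoid --- this is the same mechanism that puts the factor $\ ^{\ast}g_{ib}$ into $\ ^{\ast}\mathbb{J}$ in the first place. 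To close the argument you must either add metric compatibility of $\ ^{\ast n}\mathbf{D}$ as a hypothesis (it holds for the canonical d--connection of Theorem \ref{cnlinc}, which is the case actually used later in the paper), or carry out the direct coefficient verification with the role of $\ ^{\ast}\mathbf{e}_{k}(\ ^{\ast}g_{ja})$ and $\ ^{\ast}e^{c}(\ ^{\ast}g_{ja})$ made explicit.
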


\begin{proof}
By a straightforward computation for coefficients (\ref{nlinc}) and (\ref%
{nlincd}), when, for instance, $\ ^{\ast n}\mathbf{D}_{\alpha }=(\
_{h}^{\ast }D_{i},\ _{v}^{\ast }D^{a}),$ for $\ _{h}^{\ast }D=\{\ ^{\ast
}L_{\ jk}^{i}\}$ and $\ _{v}^{\ast }D=\{\ ^{\ast }\acute{C}_{\ j}^{i\ c}\},$
we can verify that the conditions considered in Definition \ref{defnlinc}
are satisfied. $\Box $
\end{proof}

\vskip5pt

The connection 1--form of N--linear connection $\ ^{\ast n}\mathbf{\Gamma }%
_{\ \beta }^{\alpha }\mathbf{=}\ ^{\ast n}\mathbf{\Gamma }_{\ \beta \gamma
}^{\alpha }\ ^{\ast }\mathbf{e}^{\gamma } = $ $\{\ ^{\ast n}\mathbf{\Gamma }%
_{\ j}^{i}\}$ is defined by
\begin{equation}
\ ^{\ast n}\mathbf{\Gamma }_{\ j}^{i}=\ ^{\ast }L_{\ jk}^{i}\ ^{\ast
}e^{k}+\ ^{\ast }\acute{C}_{\ j}^{i\ c}\ ^{\ast }\mathbf{e}_{c},
\label{nlc1f}
\end{equation}%
where the v--components $\ ^{\ast n}\mathbf{\Gamma }_{\ b}^{a}$ are
identified with the h--components $\ ^{\ast n}\mathbf{\Gamma }_{\ j}^{i}$
following formulas $\ ^{\ast n}\mathbf{\Gamma }_{\ n+j}^{n+i}=\ ^{\ast n}%
\mathbf{\Gamma }_{\ b}^{a}.$ The following theorem holds:

\begin{theorem}
\label{tcsteq}On a Hamilton space, the structure equations for $\ ^{\ast n}%
\mathbf{\Gamma }_{\ j}^{i}$ are
\begin{eqnarray*}
d\ ^{\ast }e^{k}-\ ^{\ast }e^{j}\wedge \ ^{\ast n}\mathbf{\Gamma }_{\ j}^{k}
&=&-\ ^{\ast }\mathcal{T}^{k}, \\
d\ ^{\ast }\mathbf{e}_{c}+\ ^{\ast }\mathbf{e}_{a}\wedge \ ^{\ast n}\mathbf{%
\Gamma }_{\ c}^{a} &=&-\ ^{\ast }\mathcal{T}_{c}, \\
d\ ^{\ast n}\mathbf{\Gamma }_{\ j}^{i}-\ ^{\ast n}\mathbf{\Gamma }_{\
k}^{i}\wedge \ ^{\ast n}\mathbf{\Gamma }_{\ j}^{k} &=&-\ ^{\ast }\mathcal{R}%
_{\ j}^{i},
\end{eqnarray*}%
where the 2--form of torsion $\mathcal{T}^{\alpha }=\{\mathcal{T}^{k},%
\mathcal{T}_{c}\}$ is computed
\begin{eqnarray*}
\ ^{\ast }\mathcal{T}^{k} &=&\frac{1}{2}\ ^{\ast }T_{\ ij}^{k}\ ^{\ast
}e^{i}\wedge \ ^{\ast }e^{j}+\ ^{\ast }\acute{C}_{\ j}^{k\ c}\ ^{\ast
}e^{j}\wedge \ ^{\ast }\mathbf{e}_{c},\  \\
\ ^{\ast }\mathcal{T}_{a} &=&\frac{1}{2}\ ^{\ast }\Omega _{ija}\ ^{\ast
}e^{i}\wedge \ ^{\ast }e^{j}+\frac{1}{2}\ ^{\ast }P_{aic}\ ^{\ast
}e^{i}\wedge \ ^{\ast }\mathbf{e}_{c}+\frac{1}{2}\ ^{\ast }S_{a}^{\ bc}\
^{\ast }\mathbf{e}_{b}\wedge \ ^{\ast }\mathbf{e}^{c},
\end{eqnarray*}%
for the N--connection curvature $\ ^{\ast }\Omega _{ija}$ (\ref{nccurv}) and
\begin{eqnarray}
\ ^{\ast }T_{\ ij}^{k} &=&\ ^{\ast }L_{\ ij}^{k}-\ ^{\ast }L_{\ ji}^{k},\
^{\ast }S_{a}^{\ bc}=\ ^{\ast }\acute{C}_{\ a}^{b\ c}-\ ^{\ast }\acute{C}_{\
a}^{c\ b},\   \notag \\
\ ^{\ast }P_{aic} &=&\ ^{\ast }g_{ae}\left( \ ^{\ast }L_{\ ic}^{e}-\ ^{\ast
}e^{e}(\ ^{\ast }N_{ic})\right) ,  \label{tcncs}
\end{eqnarray}
and the 2--form of curvature $\ ^{\ast }\mathcal{R}_{\ \beta }^{\alpha
}=\left( \ ^{\ast }\mathcal{R}_{\ j}^{i},\ ^{\ast }\mathcal{R}_{\
b}^{a}\right) ,$ with $\ ^{\ast }\mathcal{R}_{\ j}^{i}=\ ^{\ast }\mathcal{R}%
_{\ n+j}^{n+i},$ is computed%
\begin{equation*}
\ ^{\ast }\mathcal{R}_{\ j}^{i}=\frac{1}{2}\ ^{\ast }R_{\ jkm}^{i}\ ^{\ast
}e^{k}\wedge \ ^{\ast }e^{m}+\ ^{\ast }P_{\ jk}^{i\ \ c}\ ^{\ast
}e^{k}\wedge \ ^{\ast }\mathbf{e}_{c}+\frac{1}{2}\ ^{\ast }S_{\ j}^{i\ bc}\
^{\ast }\mathbf{e}_{b}\wedge \ ^{\ast }\mathbf{e}_{c},
\end{equation*}%
where
\begin{eqnarray}
\ ^{\ast }R_{\ jkm}^{i} &=&\ ^{\ast }e_{m}(\ ^{\ast }L_{\ jk}^{i})-\ ^{\ast
}e_{k}(\ ^{\ast }L_{\ jm}^{i})  \label{ccncs} \\
&&+\ ^{\ast }L_{\ jk}^{o}\ ^{\ast }L_{\ om}^{i}-\ ^{\ast }L_{\ jm}^{o}\
^{\ast }L_{\ ok}^{i}+\ ^{\ast }\acute{C}_{\ j}^{i\ o}\ ^{\ast }\Omega _{oka},
\notag \\
\ ^{\ast }P_{\ jk}^{i\ \ c} &=&\ ^{\ast }e^{c}(\ ^{\ast }L_{\ jk}^{i})-\
_{h}^{\ast }D_{k}(\ ^{\ast }\acute{C}_{\ j}^{i\ c})+\ ^{\ast }\acute{C}_{\
j}^{i\ o}\ ^{\ast }P_{\ ko}^{c},  \notag \\
\ ^{\ast }S_{\ j}^{i\ bc} &=&\ ^{\ast }e^{c}(\ ^{\ast }\acute{C}_{\ j}^{i\
b})-\ ^{\ast }e^{b}(\ ^{\ast }\acute{C}_{\ j}^{i\ c})+\ ^{\ast }\acute{C}_{\
j}^{k\ b}\ ^{\ast }\acute{C}_{\ k}^{i\ c}-\ ^{\ast }\acute{C}_{\ j}^{k\ c}\
^{\ast }\acute{C}_{\ k}^{i\ b},  \notag
\end{eqnarray}%
for ''non--boldface'' $\ ^{\ast }e^{c}=\partial /\partial p_{c}.$
\end{theorem}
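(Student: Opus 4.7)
The plan is to derive the three structure equations by direct exterior calculus applied to the N-adapted coframe $(\ ^{\ast}e^i,\ ^{\ast}\mathbf{e}_a)$ of (\ref{cddif}), using the connection 1-form (\ref{nlc1f}) together with the identification $\ ^{\ast n}\mathbf{\Gamma}_{\ n+j}^{n+i}=\ ^{\ast n}\mathbf{\Gamma}_{\ j}^{i}$ forced by Definition \ref{defnlinc} and the compatibility with $\ ^{\ast}\mathbb{J}$. Once the exterior derivatives of the coframe are known in N-adapted form, the computation reduces to substitution and collection of coefficients in the canonical basis $\{\ ^{\ast}e\wedge \ ^{\ast}e,\ ^{\ast}e\wedge\ ^{\ast}\mathbf{e},\ ^{\ast}\mathbf{e}\wedge \ ^{\ast}\mathbf{e}\}$.

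First, since $\ ^{\ast}e^{i}=dx^{i}$, we have $d\ ^{\ast}e^{i}=0$, while for $\ ^{\ast}\mathbf{e}_{a}=dp_{a}-\ ^{\ast}N_{ia}\,dx^{i}$ a direct computation, together with definition (\ref{nccurv}), gives
$$
d\ ^{\ast}\mathbf{e}_{a}\;=\;-\tfrac{1}{2}\ ^{\ast}\Omega_{ija}\,\ ^{\ast}e^{i}\wedge \ ^{\ast}e^{j}\;-\;\ ^{\ast}e^{c}(\ ^{\ast}N_{ia})\,\ ^{\ast}e^{i}\wedge\ ^{\ast}\mathbf{e}_{c}.
$$
Plugging these into $d\ ^{\ast}e^{k}-\ ^{\ast}e^{j}\wedge\ ^{\ast n}\mathbf{\Gamma}_{\ j}^{k}$ and $d\ ^{\ast}\mathbf{e}_{c}+\ ^{\ast}\mathbf{e}_{a}\wedge\ ^{\ast n}\mathbf{\Gamma}_{\ c}^{a}$, and expanding $\ ^{\ast n}\mathbf{\Gamma}_{\ j}^{i}=\ ^{\ast}L_{\ jk}^{i}\ ^{\ast}e^{k}+\ ^{\ast}\acute{C}_{\ j}^{i\ c}\ ^{\ast}\mathbf{e}_{c}$, separates naturally into the three canonical wedge-types. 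The h--h piece of the first equation gives the antisymmetrization $\ ^{\ast}T_{\ ij}^{k}=\ ^{\ast}L_{\ ij}^{k}-\ ^{\ast}L_{\ ji}^{k}$, the h--v piece reproduces the $\ ^{\ast}\acute{C}_{\ j}^{k\ c}$ term of $\ ^{\ast}\mathcal{T}^{k}$, and the second equation then delivers $\ ^{\ast}\Omega_{ija}$ and $\ ^{\ast}S_{a}^{\ bc}$ from the non-holonomic contribution of $d\ ^{\ast}\mathbf{e}_{a}$, while the mixed piece assembles $\ ^{\ast}P_{aic}=\ ^{\ast}g_{ae}(\ ^{\ast}L_{\ ic}^{e}-\ ^{\ast}e^{e}(\ ^{\ast}N_{ic}))$ of (\ref{tcncs}).

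For the second structure equation I would differentiate (\ref{nlc1f}), again using $d\ ^{\ast}e^{k}=0$ and the formula for $d\ ^{\ast}\mathbf{e}_{c}$, and then subtract the quadratic term $\ ^{\ast n}\mathbf{\Gamma}_{\ k}^{i}\wedge\ ^{\ast n}\mathbf{\Gamma}_{\ j}^{k}$ expanded in the same basis. Matching coefficients on $\ ^{\ast}e^{k}\wedge \ ^{\ast}e^{m}$, $\ ^{\ast}e^{k}\wedge\ ^{\ast}\mathbf{e}_{c}$, and $\ ^{\ast}\mathbf{e}_{b}\wedge \ ^{\ast}\mathbf{e}_{c}$ produces respectively the three blocks $\ ^{\ast}R_{\ jkm}^{i}$, $\ ^{\ast}P_{\ jk}^{i\ \ c}$, $\ ^{\ast}S_{\ j}^{i\ bc}$ of (\ref{ccncs}); the term $\ ^{\ast}\acute{C}_{\ j}^{i\ o}\ ^{\ast}\Omega_{oka}$ inside $\ ^{\ast}R_{\ jkm}^{i}$ arises precisely from the $\ ^{\ast}\Omega_{ija}\,\ ^{\ast}e^{i}\wedge \ ^{\ast}e^{j}$ contribution in $d\ ^{\ast}\mathbf{e}_{c}$ hitting the vertical piece of $\ ^{\ast n}\mathbf{\Gamma}$.

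The main obstacle is bookkeeping rather than conceptual: the N-adapted coframe is non-holonomic, so exterior derivatives of vertical forms simultaneously feed into two distinct wedge-types, and the identification of the v--v block of $\ ^{\ast n}\mathbf{\Gamma}$ with the h--h block must be tracked consistently in $\Gamma\wedge\Gamma$. Once the coframe derivatives and the h--v symmetry of the N-linear connection are written down, the verification is a routine matching of coefficients; this parallels the Lagrange-space calculation in Ch.~9 of \cite{ma} and Ch.~5 of \cite{mhss}, transported here to $T^{\ast}M$ via the $\mathcal{L}$-duality set up in Section~2. $\Box$
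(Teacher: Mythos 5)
Your proposal is correct and takes essentially the same route as the paper, whose entire proof consists of the remark that the theorem follows by ``a straightforward differential computation for the 1--form (\ref{nlc1f})'' --- you simply carry out that Cartan structure--equation computation explicitly, with the right identification of which wedge--type produces which coefficient block. One small caveat: re-check the sign of the mixed term in your expression for $d\ ^{\ast }\mathbf{e}_{a}$, since direct expansion of $d\left( dp_{a}-\ ^{\ast }N_{ia}dx^{i}\right) $ with $dp_{c}=\ ^{\ast }\mathbf{e}_{c}+\ ^{\ast }N_{jc}dx^{j}$ yields $+\ ^{\ast }e^{c}(\ ^{\ast }N_{ia})\ ^{\ast }e^{i}\wedge \ ^{\ast }\mathbf{e}_{c}$ rather than the minus sign you wrote (though the paper's own sign conventions in (\ref{nccurv}) and (\ref{tcncs}) are not entirely self--consistent, so this does not affect the validity of the method).
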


\begin{proof}
It is a straightforward differential computation for 1--form
(\ref{nlc1f}). $ \Box $
\end{proof}

\vskip5pt

There is a ''$\mathcal{L}$--dual'' Theorem for Lagrange spaces \cite{ma},
see generalizations of nonholonomic manifolds and deformation quantization
of gravity in Ref. \cite{vrfg,vqgr4}, similarly to Theorem \ref{tcsteq}.
From formal point of view, we have to change $H$ into $L$ and consider the
constructions on $TM,$ omitting the labels ''*'' and using coordinates $%
(x^{i},y^{a})$ instead of $(x^{i},p_{a}).$

\subsection{Hamilton--Fedosov spaces and almost K\"{a}hler structures}

There are canonical N--linear connections on $TM$ and $T^{\ast }M$
completely defined, respectively, by the fundamental Lagrange (see \cite{ma}
and, for applications to geometric quantization, \cite{vqgr1,vqgr4}) and
Hamilton functions.

\begin{theorem}
\label{cnlinc}There exists a canonical N--linear connection\newline
$\ ^{\ast }\widehat{\mathbf{D}}\doteqdot \{\ ^{\ast }\widehat{\mathbf{\Gamma
}}_{\ \beta \gamma }^{\alpha }=(\ ^{\ast }\widehat{L}_{\ jk}^{i},\ ^{\ast }%
\widehat{\acute{C}}_{\ j}^{i\ c})\}$ on a Hamilton space $H^{n}=(M,H(x,p))$
endowed with canonical N--connection $\ ^{\ast }N_{ij}$ (\ref{chnc})
satisfying the conditions: 1) $\ _{h}^{\ast }\widehat{D}_{i}\ ^{\ast
}g^{kj}=0$ and $\ _{v}^{\ast }\widehat{D}_{a}\ ^{\ast }g^{kj}=0,$ 2) $\
^{\ast }T_{\ ij}^{k}=0$ and $\ ^{\ast }S_{a}^{\ bc}=0$ and 3) $\ ^{\ast }%
\widehat{\mathbf{D}}$ is completely defined by $H(x,p),$ i.e. by d--metric (%
\ref{dmctb}).
\end{theorem}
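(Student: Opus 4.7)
My plan is to construct $\ ^{\ast }\widehat{\mathbf{D}}$ explicitly by Christoffel--type formulas in each of the h-- and v--blocks separately, then verify the three listed properties and uniqueness. The ansatz I would propose, written with respect to the canonical N--adapted (co)frames (\ref{cdder})--(\ref{cddif}) attached to $\ ^{\ast }N_{ij}$ from (\ref{chnc}), is
\begin{equation*}
\ ^{\ast }\widehat{L}_{\ jk}^{i}=\tfrac{1}{2}\ ^{\ast }g^{ih}\bigl(\ ^{\ast }\mathbf{e}_{k}(\ ^{\ast }g_{hj})+\ ^{\ast }\mathbf{e}_{j}(\ ^{\ast }g_{hk})-\ ^{\ast }\mathbf{e}_{h}(\ ^{\ast }g_{jk})\bigr),
\end{equation*}
\begin{equation*}
\ ^{\ast }\widehat{\acute{C}}_{\ j}^{i\ c}=-\tfrac{1}{2}\ ^{\ast }g_{jh}\bigl(\ ^{\ast }e^{c}(\ ^{\ast }g^{hi})+\ ^{\ast }e^{i}(\ ^{\ast }g^{hc})-\ ^{\ast }e^{h}(\ ^{\ast }g^{ic})\bigr),
\end{equation*}
with the v--coefficients identified to the h--ones via $\ ^{\ast }L^{n+i}_{\ n+j\, k}=\ ^{\ast }\widehat{L}^{i}_{\ jk}$ and $\ ^{\ast }C_{n+j}^{\ n+i\, c}=\ ^{\ast }\widehat{\acute{C}}^{i\ c}_{\ j}$, as demanded for N--linear connections in Proposition after Definition \ref{defnlinc}. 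All ingredients entering these formulas depend solely on $H(x,p)$: the fundamental (co)tensor $\ ^{\ast }g^{ab}$ via (\ref{hm}), its inverse $\ ^{\ast }g_{ij}$, and the canonical N--connection $\ ^{\ast }N_{ij}$ via (\ref{chnc}); this already gives property (3).

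Next I would verify compatibility (1). By construction, the h--formula is the Christoffel symbol of $\ ^{\ast }g_{jk}$ with respect to the horizontal derivations $\ ^{\ast }\mathbf{e}_{i}$, so $\ _{h}^{\ast }\widehat{D}_{i}\ ^{\ast }g_{jk}=0$, which is equivalent by the Leibniz rule to $\ _{h}^{\ast }\widehat{D}_{i}\ ^{\ast }g^{kj}=0$. Analogously, the v--formula is the Christoffel symbol of $\ ^{\ast }g^{ij}$ with respect to the vertical derivations $\ ^{\ast }e^{a}=\partial /\partial p_{a}$, yielding $\ _{v}^{\ast }\widehat{D}_{a}\ ^{\ast }g^{kj}=0$. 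For the torsion conditions (2), the symmetry of the h--ansatz in its lower indices $(j,k)$ gives $\ ^{\ast }\widehat{L}^{i}_{\ jk}=\ ^{\ast }\widehat{L}^{i}_{\ kj}$, and by formula (\ref{tcncs}) this is exactly $\ ^{\ast }T^{k}_{\ ij}=0$; likewise, the symmetry of the v--ansatz in $(i,c)$ gives $\ ^{\ast }\widehat{\acute{C}}^{b\ c}_{\ a}=\ ^{\ast }\widehat{\acute{C}}^{c\ b}_{\ a}$, i.e.\ $\ ^{\ast }S^{\ bc}_{a}=0$.

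Finally I would prove uniqueness in the usual way: if $\ ^{\ast }\widetilde{\mathbf{D}}$ is any other N--linear connection satisfying (1) and (2), then the difference $\mathbf{Z}\doteqdot \ ^{\ast }\widetilde{\mathbf{D}}-\ ^{\ast }\widehat{\mathbf{D}}$ is a d--tensor; conditions (1) make $\mathbf{Z}$ skew--symmetric in the metric indices within each of the h-- and v--blocks, while conditions (2) together with the identification $^{\ast }L^{n+i}_{\ n+j\, k}=\ ^{\ast }L^{i}_{\ jk}$ force $\mathbf{Z}$ to be symmetric in the corresponding pair of lower indices, which in each block yields $\mathbf{Z}=0$ by the standard ``swap three cyclic indices'' argument familiar from the Levi--Civita uniqueness proof. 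The step I expect to require the most care is the bookkeeping tying the v--block identifications (coming from N--linearity, Definition \ref{defnlinc}, and the identification $\ ^{\ast }L^{n+i}_{\ n+j\, k}=\ ^{\ast }L^{i}_{\ jk}$) to the antisymmetry/symmetry patterns used in uniqueness; beyond that, the argument is a routine dual of Miron--An\-as\-tas\-iei's canonical d--connection construction on Lagrange spaces, with the cotangent bundle adaptations already prepared by Theorem \ref{tcsteq} and formulas (\ref{tcncs})--(\ref{ccncs}).
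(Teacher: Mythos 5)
Your construction is essentially the paper's own proof: the paper posits exactly your h--coefficients and the v--coefficients $\ ^{\ast }\widehat{\acute{C}}_{\ j}^{i\ c}=-\frac{1}{2}\ ^{\ast }g_{js}\ ^{\ast }e^{c}(\ ^{\ast }g^{si})$ --- to which your three--term vertical Christoffel expression collapses, since $\ ^{\ast }e^{c}(\ ^{\ast }g^{hi})=\partial ^{3}H/\partial p_{c}\partial p_{h}\partial p_{i}$ is totally symmetric --- and then verifies conditions 1)--3) by direct computation in the N--adapted frames. Your Levi--Civita--style uniqueness argument is a welcome expansion of the paper's unargued remark that the conditions are satisfied ``only for such coefficients.''
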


\begin{proof}
Let us consider the N--adapted coefficients
\begin{eqnarray}
\ ^{\ast }\widehat{L}_{\ jk}^{i} &=&\frac{1}{2}\ ^{\ast }g^{is}\left( \
^{\ast }e_{j}(\ ^{\ast }g_{sk})+\ ^{\ast }e_{k}(\ ^{\ast }g_{js})-\ ^{\ast
}e_{s}(\ ^{\ast }g_{jk})\right) ,  \notag \\
\ ^{\ast }\widehat{\acute{C}}_{\ j}^{i\ c} &=&-\frac{1}{2}\ ^{\ast }g_{js}\
^{\ast }e^{c}(\ ^{\ast }g^{si})  \label{clnchs}
\end{eqnarray}%
defined with respect to N--adapted bases $\ ^{\ast }\mathbf{e}_{\alpha }$ (%
\ref{cdder}) and coframes $\ ^{\ast }\mathbf{e}^{\alpha }$ (\ref{cddif})
defined by $\ ^{\ast }N_{ij}$ (\ref{chnc}). By a direct computation, we can
verify that the conditions of this Theorem are satisfied only for such
coefficients and their coordinate/frame transform. $\Box $
\end{proof}

\vskip5pt

The coefficients (\ref{clnchs}) are just the Christoffel symbols on (co)
tangent space $T^{\ast }M,$ defined by $H.$ Nevertheless, for a Hamilton
space endowed with canonical N--connection, d--connection and metric
structure, there are nontrivial torsion components induced by the
nonholonomic distribution defined by $H,$ see $\ ^{\ast }P_{aic}$ (\ref%
{tcncs}).

From Theorem \ref{cnlinc}, one follows an important property:

\begin{corollary}
The canonical d--connection $\ ^{\ast }\widehat{\mathbf{D}}$ is an almost
symplectic d--connection satisfying the conditions
\begin{equation}
\ ^{\ast }\widehat{\mathbf{D}}\mathbf{\ ^{\ast }\theta }=0\mbox{\  and\  }\
^{\ast }\widehat{\mathbf{D}}\mathbf{\ ^{\ast }J}=0  \label{asccond}
\end{equation}%
and being completely defined by a Hamiltonian $H(x,p)$ for $\mathbf{\ ^{\ast
}\theta (\ ^{\ast }X,\ .)}\doteqdot \mathbf{\ ^{\ast }g}\left( \mathbf{\
^{\ast }JX,\ .}\right) .$
\end{corollary}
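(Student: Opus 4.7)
The plan is to establish $\ ^{\ast}\widehat{\mathbf{D}}\,^{\ast}\mathbf{J}=0$ first and then deduce $\ ^{\ast}\widehat{\mathbf{D}}\,^{\ast}\theta=0$ from it, exploiting that by clauses 1) and 3) of Theorem \ref{cnlinc} the canonical d--connection is already metric compatible, $\ ^{\ast}\widehat{\mathbf{D}}\,^{\ast}\mathbf{g}=0$, and N--linear in the sense of Definition \ref{defnlinc}, so that the horizontal--vertical splitting (\ref{whitneyd}) is preserved. The second reduction is automatic: applying the Leibniz rule to $\ ^{\ast }\theta (\ ^{\ast }X,\ ^{\ast }Y)=\ ^{\ast }\mathbf{g}(\ ^{\ast }\mathbf{J}\,^{\ast }X,\ ^{\ast }Y)$ gives
\[
(\ ^{\ast}\widehat{\mathbf{D}}_{\ ^{\ast}Z}\,^{\ast}\theta )(\ ^{\ast}X,\ ^{\ast}Y)=(\ ^{\ast}\widehat{\mathbf{D}}_{\ ^{\ast}Z}\,^{\ast}\mathbf{g})(\ ^{\ast}\mathbf{J}\,^{\ast}X,\ ^{\ast}Y)+\ ^{\ast}\mathbf{g}\bigl((\ ^{\ast}\widehat{\mathbf{D}}_{\ ^{\ast}Z}\,^{\ast}\mathbf{J})\ ^{\ast}X,\ ^{\ast}Y\bigr),
\]
and both terms drop by metric compatibility and the first step once it is in hand.

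For the core step, I would take the explicit N--adapted expression from Proposition \ref{propctf},
\[
\ ^{\ast }\mathbf{J}=-\,^{\ast }g_{ia}\,\ ^{\ast }e^{a}\otimes\ ^{\ast }e^{i}+\,^{\ast }g^{ia}\,\ ^{\ast }\mathbf{e}_{i}\otimes\ ^{\ast }\mathbf{e}_{a},
\]
and hit it with $\ ^{\ast}\widehat{\mathbf{D}}$ termwise. The derivatives landing on the coefficients $\ ^{\ast}g_{ia},\ ^{\ast}g^{ia}$ vanish by Theorem \ref{cnlinc} (the covariant case follows from the contravariant one via $0=\ ^{\ast}\widehat{\mathbf{D}}(\ ^{\ast}g^{ik}\,^{\ast}g_{kj})$), while the derivatives acting on the N--adapted basis and cobasis elements are governed by the single pair $\{\ ^{\ast}\widehat{L}_{\ jk}^{i},\ ^{\ast}\widehat{\acute{C}}_{\ j}^{i\ c}\}$, because by the identification (\ref{nlincd}) the horizontal and vertical connection coefficients coincide. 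Pairing the first (h) and second (v) summand of $\ ^{\ast}\mathbf{J}$ and using that $\ ^{\ast}g_{ia}$ and $\ ^{\ast}g^{ia}$ are mutually inverse leads to a termwise cancellation, so $\ ^{\ast}\widehat{\mathbf{D}}\,^{\ast}\mathbf{J}=0$. Conceptually, this is the statement that a metric--compatible N--linear d--connection preserves the almost tangent structure $\ ^{\ast}\mathbb{J}$ (Definition \ref{defnlinc}) and therefore also the almost complex structure $\ ^{\ast}\mathbf{J}$, which differs from $\ ^{\ast}\mathbb{J}$ only by parallel metric coefficients.

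The complete determination by $H$ is then just a traceback of the construction: (\ref{hm}) produces $\ ^{\ast}g^{ab}$ from $H$, (\ref{chnc}) builds $\ ^{\ast}N_{ij}$ from $\ ^{\ast}g_{ij}$ and $H$, and (\ref{clnchs}) writes $\ ^{\ast }\widehat{L}_{\ jk}^{i}$ and $\ ^{\ast }\widehat{\acute{C}}_{\ j}^{i\ c}$ purely from $\ ^{\ast}g^{ij}$ and N--adapted derivatives; via Proposition \ref{propctf} the same input determines $\ ^{\ast}\mathbf{J}$ and $\ ^{\ast}\theta$. The main obstacle I anticipate is not any geometric subtlety but the index bookkeeping in the cancellation above: one must carefully separate how $\ ^{\ast}\widehat{\mathbf{D}}$ acts on the cobasis $\ ^{\ast}e^{a}$ versus the basis $\ ^{\ast}\mathbf{e}_{a}$, and match the upper versus lower index positions in $\ ^{\ast}g^{ia}$ versus $\ ^{\ast}g_{ia}$. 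That matching is precisely what the N--linearity condition $\ ^{\ast n}\mathbf{D}\,^{\ast}\mathbb{J}=0$ is designed to encode, so once the equalities (\ref{nlincd}) are invoked no further computation is required.
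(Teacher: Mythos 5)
Your proposal is correct and follows essentially the same route as the paper's proof, which likewise verifies the compatibility conditions (\ref{asccond}) by a direct N--adapted covariant computation on the d--tensor expressions for $\ ^{\ast}\mathbf{J}$ and $\ ^{\ast}\theta$ from Proposition \ref{propctf} using the canonical coefficients (\ref{clnchs}). You merely supply the organizational details the paper leaves implicit, namely proving $\ ^{\ast}\widehat{\mathbf{D}}\,^{\ast}\mathbf{J}=0$ first from metric compatibility and the identifications (\ref{nlincd}), and then obtaining $\ ^{\ast}\widehat{\mathbf{D}}\,^{\ast}\theta=0$ by the Leibniz rule.
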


\begin{proof}
A Hamilton space can be equivalently transformed into almost K\"{a}hler
space $\mathbf{\ ^{\ast }K}^{2n}$ , see Definition \ref{defaks}. Considering
the d--tensor fields associated to $\mathbf{\ ^{\ast }\theta }$ and $\mathbf{%
\ ^{\ast }J,}$ see Proposition \ref{propctf}, and a covariant N--adapted
calculus defined by the canonical d--connection coefficients (\ref{clnchs})
we can verify that the almost symplectic structure compatibility conditions (%
\ref{asccond}) are satisfied. $\Box $
\end{proof}

\vskip5pt

In Ref. \cite{esv}, it was introduced the concept of Lagrange--Fedosov
manifold as nonholonomic manifold with the N--connection and almost
symplectic structure defined by a fundamental (in genera, effective)
Lagrange function $L(x,y).$ On cotangent bundles, we can consider

\begin{definition}
A Hamilton--Fedosov space is a cotangent bundle endowed with
canonical N--connection and the almost K\"{a}hler structure
induced by a fundamental Hamilton function $H(x,p).$
\end{definition}

There are Hamilton--Fedosov spaces defined completely by a lift (\ref{eq01})
of a (pseudo) Riemannian metric on base $M.$

\begin{theorem}
Any Einstein manifold associated to a solution of (\ref{einstm}), for a lift
(\ref{eq01}) on $T^{\ast }M,$ defines canonically a Hamilton--Fedosov space.
\end{theorem}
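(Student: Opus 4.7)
The plan is to reduce the claim to the previously established almost K\"{a}hler model theorem by explicitly manufacturing a regular Hamiltonian $H(x,p)$ on $T^{\ast}M$ whose Hessian reproduces the lift (\ref{eq01}) associated with the Einstein solution $g_{i'j'}(x)$ of (\ref{einstm}). Concretely, I would begin by fixing vielbeins $e^{i}_{\ i'}$ that intertwine $g_{i'j'}(x)$ with a chosen contravariant tensor $g^{ij}(x,p)$; the simplest admissible choice is $e^{i}_{\ i'}=e^{i}_{\ i'}(x)$, independent of $p_{a}$, in which case (\ref{eq01}) reduces to the pullback $g^{ij}(x)$ of the (semi)Riemannian cometric and the candidate Hamiltonian is the quadratic
\begin{equation*}
H(x,p)\doteqdot \tfrac{1}{2}\,g^{ij}(x)\,p_{i}p_{j}.
\end{equation*}
A direct computation of $\partial^{2}H/\partial p_{a}\partial p_{b}$ shows that this $H$ satisfies (\ref{hm}) with Hessian equal to $g^{ab}(x)=g^{ab}(x,p)$, which is non-degenerate and of constant signature because $g_{i'j'}(x)$ is, so $(M,H)$ is a genuine Hamilton space in the sense of Definition 2.2.

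Next I would feed this Hamilton space into the machinery already built in Section~2. By Theorem~\ref{thcnc}, $H$ canonically determines the N--connection coefficients $\ ^{\ast}N_{ij}$ via (\ref{chnc}), and Proposition~\ref{propctf} then supplies the $\mathcal{L}$-dual almost complex structure $\ ^{\ast}\mathbf{J}$ and the almost symplectic 2-form $\ ^{\ast}\boldsymbol{\theta}$ of (\ref{sympf}). The almost K\"{a}hler property $d\,\ ^{\ast}\boldsymbol{\theta}=0$ is inherited from the identification $\ ^{\ast}\boldsymbol{\theta}=d\,^{\ast}\omega$ with the canonical Liouville 1-form $^{\ast}\omega=p_{i}\,dx^{i}$, as in the proof of the preceding theorem on almost K\"{a}hler models. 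Thus the cotangent bundle $T^{\ast}M$, equipped with this canonical N--connection together with $(\ ^{\ast}\boldsymbol{\theta},\ ^{\ast}\mathbf{J})$, meets the requirements of the Hamilton--Fedosov definition.

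Finally I would observe that the Einstein condition (\ref{einstm}) has not been used in the structural part of the argument; it enters only to single out the class of base metrics $g_{i'j'}(x)$ to which the construction is applied, and to guarantee that the lift (\ref{eq01}) descends from a genuinely Einsteinian datum. One may therefore state the conclusion as: the assignment $g_{i'j'}(x)\mapsto (H,\ ^{\ast}\mathbf{N},\ ^{\ast}\boldsymbol{\theta},\ ^{\ast}\mathbf{J})$ is canonical and well-defined on every cart trivialising the vielbein system.

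The main obstacle I anticipate is not the formal chain of invocations but the genuine freedom in choosing $e^{i}_{\ i'}(x,p)$. For fully $p$-dependent vielbeins, asking that $g^{ab}(x,p)$ of (\ref{eq01}) actually arises as a Hessian $\partial^{2}H/\partial p_{a}\partial p_{b}$ is a non-trivial integrability condition (the symmetric tensor in $p$-derivatives must satisfy compatibility relations with respect to its $p_{c}$-derivative, a Frobenius-type constraint). My plan sidesteps this by selecting $p$-independent vielbeins, for which the quadratic ansatz trivially works; a fuller treatment would either justify that any admissible $e^{i}_{\ i'}(x,p)$ compatible with the deformation scheme likewise yields a regular Hamiltonian, or fix a canonical gauge once and for all so that this ambiguity is absorbed into the Legendre transform.
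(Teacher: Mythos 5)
Your proposal is correct and follows essentially the same route as the paper's own (admittedly sketchy) proof: fix the vielbeins in (\ref{eq01}), produce an effective Hamiltonian whose Hessian reproduces the lifted cometric, and then invoke the canonical N--connection (\ref{chnc}) and the almost K\"{a}hler machinery of Section 2; your quadratic $H=\tfrac{1}{2}g^{ij}(x)p_ip_j$ is exactly the paper's ``$H=g^{ab}p_ap_b$'' particular case with $p$--independent (hence trivially $0$--homogeneous) vielbeins. The Hessian--integrability obstruction you flag for genuinely $p$--dependent $e^{i}_{\ i'}(x,p)$ is real, and the paper does not resolve it either --- it concedes that the general case requires Eisenhart--type and nonsymmetric--metric techniques and defers the details --- so your explicit identification of that gap is, if anything, sharper than the published argument.
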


\begin{proof}
We sketch the idea for such constructions. Let us fix any values $e_{\ i^{\prime
}}^{i}(x,p)$ in (\ref{eq01}) and associate $g^{ab}(x,p)$ to a $\ ^{\ast
}g^{ab}(x,p)$ (\ref{hm}). This define correspondingly the values $\ ^{\ast }%
\mathbf{g}$ (\ref{dmctb}) and $\ ^{\ast }\mathbf{N}$ (\ref{chnc}).
As a result, we construct an effective Hamilton space, which can
be modelled as a canonical almost symplectic structure as we
described above. For classical configurations, the values $e_{\
i^{\prime }}^{i}$ can be $\delta _{i^{\prime }}^{i},$ but for
quantum models they should defined by a scheme of
de--quantization, or semi--classical approximation in quantum
gravity.

There is a particular case of  Cartan--Fedosov spaces with
0-homogeneous on variables "p" components $e^i_{i'}(x,p) $
resulting in a similar homogeneity for $g^{ab} (x,p)$ and $H =
g^{ab}p_ap_b$ when $^*N$ is determined from (15) with $ ^*g$  from (21).

The problem is more sophisticate in the case of  general Hamilton--Fedosov spaces. For certain physical important four dimensional spaces (used in general relativity) with nonholonomic splitting of dimensions as 2+2 and, for instance, if $dim M=2,$ we can fix such local coordinate systems when (3) is integrable for  certain solvable partial differential equations for $e^i_{i'}(x,p),$ but this may not hold true for other parameterizations and higher dimensions. A general  approach should include the case of Eisenhart--Hamilton spaces and their Fedosov quantum
deformation analogs, with both symmetric and nonsymmetric
components for (3) resulting because of any general quantum
nonholonomic Legendre transform. Such constructions should dub on cotangent bundles those for Eisenhart-Lagrange/- Finsler
spaces (see Chapter 8 in [33] and Refs. [52,53]). The length of this paper does not allow us to present a detailed proof because
it is connected with a sophisticate geometric techniques for
nonlinear connections and nonsymmetric metrics arising both in the
case of nonholonomic Ricci flows and quantum nonholonomic
deformations and/or symplectic transform in gravity and geometric
mechanics, see recent results in Refs. [54-56].
 $
\Box $
\end{proof}

\vskip5pt

Finally, we note that similar almost symplectic models can be performed for
Cartan spaces when the Hamiltonian is homogeneous on vertical coordinates.

\section{Fedosov Operator--Pairs for Hamilton Spaces}

In this section, we shall apply the method of deformation
quantization elaborated in Refs. \cite{fed1,fed2,karabeg1} to
define two classes of canonical operators which are necessary to
quantize the Hamilton--Fedosov spaces and related subspaces on
cotangent bundles defined by lifts of Einstein metrics. We shall
address precisely the question how the geometry of cotangent
bundles and related deformation quantization change under
symplectic transforms and elaborate a formalism which preserves
the form of Hamilton--Jacobi equations both on classical and
quantum level.

\subsection{Canonical Fedosov--Hamilton operators}

The formalism of deformation quantization can be developed by
using the space $C^{\infty }(\mathbf{\ ^{\ast }K}^{2n})[[v]]$ of
formal series in the variable $v$ with coefficients from
$C^{\infty }(\mathbf{\ ^{\ast }K}^{2n})$ on a almost Poisson
manifold $(\mathbf{\ ^{\ast }K}^{2n},\{\cdot ,\cdot \}),$ see the
almost symplectic form $ \mathbf{ ^{\ast }}\theta $ and the
Poisson brackets (\ref{poisbr}).  Using the associative algebra
structure on $C^{\infty }( \mathbf{\ ^{\ast }K}^{2n})[[v]]$ with a
$v$--linear and $v$--adically continuous star product
\begin{equation}
\ ^{1}f\ast \ ^{2}f=\sum\limits_{r=0}^{\infty }\ _{r}C(\ ^{1}f,\ ^{2}f)\
v^{r},  \label{starp}
\end{equation}%
where $\ _{r}C,r\geq 0,$ are bilinear operators on $C^{\infty
}(\mathbf{\ ^{\ast }K}^{2n})$ with $\ _{0}C(\ ^{1}f,\ ^{2}f)=\
^{1}f\ ^{2}f$ and $\ _{1}C(\ ^{1}f,\ ^{2}f)-\ _{1}C(\ ^{2}f,\
^{1}f)=i\{\ ^{1}f,\ ^{2}f\};$\ $\ $for $i$ being the complex
unity, we can construct a formal Wick product
\begin{equation}
\ ^{1}a\circ \ ^{2}a\ (z)\doteqdot \exp \left( i\frac{v}{2}\ \mathbf{^{\ast
}\Lambda }^{\alpha \beta }\frac{\partial ^{2}}{\partial z^{\alpha }\partial
z_{[1]}^{\beta }}\right) \ ^{1}a(z)\ ^{2}a(z_{[1]})\mid _{z=z_{[1]}},
\label{fpr}
\end{equation}%
for two elements $a$ and $b$ defined by series of the type
\begin{equation}
a(v,z)=\sum\limits_{r\geq 0,|\{\alpha \}|\geq 0}\ a_{r,\{\alpha
\}}(u)z^{\{\alpha \}}\ v^{r},  \label{formser}
\end{equation}%
where by $\{\alpha \}$ we label a multi--index and $\ \mathbf{^{\ast
}\Lambda }^{\alpha \beta }\doteqdot \ \mathbf{^{\ast }}\theta ^{\alpha \beta
}-i\ \mathbf{^{\ast }g}^{\alpha \beta },$ where $\ \mathbf{^{\ast }}\theta
^{\alpha \beta }$ is the symplectic form (\ref{sympf}), with ''up'' indices,
and $\ \mathbf{^{\ast }g}^{\alpha \beta }$ is the inverse to d--tensor (\ref%
{dmctb}). This defines a formal Wick algebra $\ \mathbf{^{\ast }W}_{u}$
associated with the tangent space $T_{u}\mathbf{\ ^{\ast }K}^{2n},$ for $%
u\in \mathbf{\ ^{\ast }K}^{2n},$ where the local coordinates on $\mathbf{\
^{\ast }K}^{2n}$ are parameterized in the form $u=\{u^{\alpha }\}$ and the
local coordinates on $T_{u}\mathbf{\ ^{\ast }K}^{2n}$ are labelled $%
(u,z)=(u^{\alpha },z^{\beta }),$ where $z^{\beta }$ are fiber coordinates.

We trivially extend the fibre product (\ref{fpr}) to the space of $\mathbf{\
^{\ast }W}$--valued N--adapted differential forms $\mathbf{\ ^{\ast }}%
\mathcal{W}\otimes \Lambda ,$ where by $\Lambda $ we note the usual exterior
product of the scalar forms and $\mathbf{\ ^{\ast }}\mathcal{W}$ is the
sheaf of smooth sections of $\mathbf{\ ^{\ast }W.}$ There is a standard
grading on $\Lambda $ noted $\deg _{a}.$ We also introduce gradings $\deg
_{v},\deg _{s},\deg _{a}$ on$\mathbf{\ ^{\ast }}\mathcal{W}\otimes \Lambda $
defined on homogeneous elements $v,z^{\alpha },\mathbf{\ ^{\ast }e}^{\alpha
},$ when $\deg _{v}(v)=1,$ $\deg _{s}(z^{\alpha })=1,$ $\deg _{a}(\mathbf{\
^{\ast }e}^{\alpha })=1,$ and all other gradings of the elements $%
v,z^{\alpha },\mathbf{\ ^{\ast }e}^{\alpha }$ are set to zero. As a result,
the product $\circ $ from (\ref{fpr}) on $\ \mathbf{^{\ast }}\mathcal{W}%
\otimes \mathbf{\Lambda }$ is bi-graded, written as w.r.t the grading $%
Deg=2\deg _{v}+\deg _{s}$ and the grading $\deg _{a}.$

The canonical d--connection $\ ^{\ast }\widehat{\mathbf{D}}\doteqdot \{\
^{\ast }\widehat{\mathbf{\Gamma }}_{\ \beta \gamma }^{\alpha }=(\ ^{\ast }%
\widehat{L}_{\ jk}^{i},\ ^{\ast }\widehat{\acute{C}}_{\ j}^{i\ c})\}$ with
coefficients (\ref{clnchs}) can be extended to an operator
\begin{equation}
\ ^{\ast }\widehat{\mathbf{D}}\left( a\otimes \lambda \right) \doteqdot
\left( \ ^{\ast }\mathbf{e}_{\alpha }(a)-u^{\beta }\ \ ^{\ast }\widehat{%
\mathbf{\Gamma }}\mathbf{_{\alpha \beta }^{\gamma }\ }^{z\ast }\mathbf{e}%
_{\alpha }(a)\right) \otimes (\ ^{\ast }\mathbf{e}^{\alpha }\wedge \lambda
)+a\otimes d\lambda ,  \label{cdcop}
\end{equation}%
on $\ \mathbf{^{\ast }}\mathcal{W}\otimes \Lambda ,$ where the N--adapted
basis $^{z\ast }\mathbf{e}_{\alpha }$ is $\ ^{\ast }\mathbf{e}_{\alpha }$
redefined in $z$--variables. This canonical almost symplectic d--connection $%
\ ^{\ast }\widehat{\mathbf{D}}$ is a N--adapted $\deg _{a}$--graded
derivation of the distinguished algebra $\left( \ \mathbf{^{\ast }}\mathcal{W%
}\otimes \mathbf{\Lambda ,\circ }\right) ,$ in brief, called d--algebra:
this follows from formulas (\ref{fpr}) and (\ref{cdcop})).

\begin{definition}
The Fedosov--Hamilton operators $\ ^{\ast }\mathbf{\delta }$ and $\ ^{\ast }%
\mathbf{\delta }^{-1}$ on$\ \ ^{\ast }\mathcal{W}\otimes \mathbf{\Lambda ,}$
are defined%
\begin{equation}
\ ^{\ast }\mathbf{\delta }(a)=\ ^{\ast }\mathbf{e}^{\alpha }\wedge \mathbf{\
}^{z\ast }\mathbf{e}_{\alpha }(a),\ \mbox{and\ }\ \ ^{\ast }\mathbf{\delta }%
^{-1}(a)=\left\{
\begin{array}{c}
\frac{i}{p+q}z^{\alpha }\ \ ^{\ast }\mathbf{e}_{\alpha }(a),\mbox{ if }p+q>0,
\\
{\qquad 0},\mbox{ if }p=q=0,%
\end{array}%
\right.  \label{feddop}
\end{equation}%
where any $a\in \ \mathbf{^{\ast }}\mathcal{W}\otimes \mathbf{\Lambda }$ is
homogeneous w.r.t. the grading $\deg _{s}$ and $\deg _{a}$ with $\deg
_{s}(a)=p$ and $\deg _{a}(a)=q.$
\end{definition}

The d--operators (\ref{feddop}) define the formula $a=(\ ^{\ast }\mathbf{%
\delta }\ \ ^{\ast }\mathbf{\delta }^{-1}+\ ^{\ast }\mathbf{\delta }^{-1}\ \
^{\ast }\mathbf{\delta }+\sigma )(a),$ where $a\longmapsto \sigma (a)$ is
the projection on the $(\deg _{s},\deg _{a})$--bihomogeneous part of $a$ of
degree zero, $\deg _{s}(a)=\deg _{a}(a)=0;$ $\ ^{\ast }\mathbf{\delta }$ is
also a $\deg _{a}$--graded derivation of the d--algebra $\left( \ \mathbf{%
^{\ast }}\mathcal{W}\otimes \mathbf{\Lambda ,\circ }\right) .$

\subsection{Fedosov--Hamilton operator--pairs}

Having defined d--operators (\ref{feddop}), we can perform a Fedosov type
quantization of Hamilton spaces in a $\mathcal{L}$--dual form to Lagrange
spaces \cite{vqgr1,vqgr4}. \ Nevertheless, such constructions would not
reflect completely the symplectic properties of Hamilton spaces and their
quantum deformations. For a cotangent bundle $\pi ^{\ast }:T^{\ast
}M\rightarrow M,$ a N--connection $\ ^{\ast }\mathbf{N}$ (\ref{chnc}) is a
supplementary distribution $hT^{\ast }M$ of the vertical distribution $%
vT^{\ast }M=\ker $ $\ ^{\intercal }\pi ^{\ast },$ where $\ ^{\intercal }\pi
^{\ast }$ is the tangent map of $\pi ^{\ast }.$ It is often more convenient
to consider a N--connection as an almost product structure $^{\ast }\mathbf{%
P,}$ see Proposition \ref{propctf}, such that $vT^{\ast }M=\ker(I+^{\ast }%
\mathbf{P}).$ If $f\in Diff(T^{\ast }M),$ the push--forward of $\ ^{\ast }%
\mathbf{N}$ by $f$ generally fails to be a connection. This constrains us to
extend the definition of connection, see details in Chapter 8 of \cite{mhss}.

\begin{definition}
A connection--pair $\ ^{\ast }\phi $ is an almost product structure $\
^{\ast }\phi $ on $T^{\ast }M$ such that the horizontal bundle $%
hT^{\ast }M\doteqdot \ker (I-\ ^{\ast }\phi )$ is supplementary to $vT^{\ast
}M$ and the oblique bundle $wT^{\ast }M\doteqdot \ker (I+\ ^{\ast }\phi )$
define the nonholonomic splitting (Whitney sum)
\begin{equation}
TT^{\ast }M=hT^{\ast }M\oplus wT^{\ast }M.  \label{cpspl}
\end{equation}
\end{definition}

We can consider preferred $\ ^{\ast }\phi $--adapted frame and coframe
structures induced by the coefficients of N--connection:

\begin{proposition}
There are a canonical connection--pair $\ ^{\ast }\phi $ and associated
frames and coframes:%
\begin{eqnarray}
\ ^{\phi }\mathbf{e}_{\alpha } &=&\left( \ ^{\phi }e_{i}=\frac{\partial }{%
\partial x^{i}}+\ ^{\ast }N_{ia}\frac{\partial }{\partial p_{a}},\ ^{w}e^{b}=%
\frac{\partial }{\partial p_{b}}-\ ^{\ast }g^{bi}\ ^{\phi }e_{i}\right)
\label{fder} \\
\ ^{\phi }\mathbf{e}^{\alpha } &=&\left( \ ^{\phi }e^{i}=dx^{i}+\ ^{\ast
}g^{bi}\ ^{w}e_{b},\ ^{w}e_{b}=dp_{b}-\ ^{\ast }N_{ib}dx^{i}\right) ,
\label{fdif}
\end{eqnarray}%
where $^{\ast }g^{ab}$ (\ref{hm}) and $^{\ast }N_{ia}$ (\ref{chnc}) are
generated by a Hamilton fundamental function $H(x,p).$
\end{proposition}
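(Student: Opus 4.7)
The plan is to construct the operator $\ ^{\ast}\phi$ explicitly from the two pieces of data in hand, namely $\ ^{\ast}N_{ia}$ (\ref{chnc}) and $\ ^{\ast}g^{ab}$ (\ref{hm}), and then to verify three things: that $\ ^{\ast}\phi \circ \ ^{\ast}\phi = I$, that the resulting $\pm 1$ eigenbundles give the Whitney decomposition (\ref{cpspl}), and that the stated frames (\ref{fder}) and coframes (\ref{fdif}) are mutually dual and $\ ^{\ast}\phi$-adapted.

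First I would introduce the $2n$ local vector fields $\ ^{\phi}e_i$ and $\ ^{w}e^b$ from (\ref{fder}) as given. The change-of-basis matrix from the natural coordinate frame $(\partial/\partial x^j,\partial/\partial p_a)$ to $(\ ^{\phi}e_i, \ ^{w}e^b)$ has block form
\begin{equation*}
\begin{pmatrix} \delta^j_i & \ ^{\ast}N_{ia} \\ -\ ^{\ast}g^{bj} & \delta^b_a - \ ^{\ast}g^{bj}\ ^{\ast}N_{ja} \end{pmatrix},
\end{equation*}
whose Schur complement in the upper-left block reduces the determinant to $\det(\delta^b_a - \ ^{\ast}g^{bj}\ ^{\ast}N_{ja} + \ ^{\ast}g^{bj}\ ^{\ast}N_{ja})=1$, so the frame is non-singular. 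I then define $\ ^{\ast}\phi$ by declaring $\ ^{\phi}e_i$ to be $+1$ eigenvectors and $\ ^{w}e^b$ to be $-1$ eigenvectors; the almost product identity $\ ^{\ast}\phi^2 = I$ is automatic, and $hT^*M := \ker(I-\ ^{\ast}\phi)=\mathrm{span}\{\ ^{\phi}e_i\}$ coincides with the horizontal distribution of the canonical N-connection, while $wT^*M := \ker(I+\ ^{\ast}\phi)=\mathrm{span}\{\ ^{w}e^b\}$ is supplementary to $hT^*M$ by non-singularity, yielding (\ref{cpspl}). Crucially, $wT^*M \ne vT^*M$ in general, because each $\ ^{w}e^b$ is tilted by the horizontal correction $-\ ^{\ast}g^{bi}\ ^{\phi}e_i$ built from the Hamilton tensor, which is precisely what makes $\ ^{\ast}\phi$ a genuine connection-pair rather than a plain N-connection.

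The duality $\ ^{\phi}\mathbf{e}_\alpha \rfloor \ ^{\phi}\mathbf{e}^\beta = \delta^\beta_\alpha$ is then a four-block check. The diagonal block $\ ^{\phi}e_i \rfloor \ ^{\phi}e^j$ collapses to $\delta^j_i$ after the term $\ ^{\ast}N_{ia}\ ^{\ast}g^{bj}\delta^a_b$ cancels against $-\ ^{\ast}g^{bj}\ ^{\ast}N_{ib}$; the block $\ ^{w}e^b \rfloor \ ^{w}e_a$ collapses to $\delta^b_a$ by the analogous cancellation $-\ ^{\ast}g^{bi}\ ^{\ast}N_{ia} + \ ^{\ast}g^{bi}\ ^{\ast}N_{ia}$; the mixed block $\ ^{\phi}e_i \rfloor \ ^{w}e_b$ vanishes from $\ ^{\ast}N_{ib}-\ ^{\ast}N_{ib}$; and the remaining mixed block $\ ^{w}e^b \rfloor \ ^{\phi}e^j$ vanishes by a matching cancellation between $\ ^{\ast}g^{bj}$ and $-\ ^{\ast}g^{bj}$ together with a pair of $\ ^{\ast}g^{bi}\ ^{\ast}g^{cj}\ ^{\ast}N_{ic}$-terms of opposite sign. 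Each of these reductions is routine once the defining substitutions are made.

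The step that requires the most care, and is the main subtlety, is global well-definedness of $\ ^{\ast}\phi$ on $T^{\ast}M$: one must verify that under a change of base coordinates $x^i \mapsto x^{i'}(x)$ with the induced fibre change $p_a \mapsto p_{a'}$, both $\ ^{\phi}e_i$ and $\ ^{w}e^b$ transform in the correct tensorial manner. For $\ ^{\phi}e_i$ this is the standard N-adapted horizontal transformation, already built into Theorem \ref{thcnc} and Proposition \ref{pnaf}; for $\ ^{w}e^b$ one must additionally exploit the $(2,0)$-tensor transformation rule for $\ ^{\ast}g^{ab}$ and check that the horizontal correction $-\ ^{\ast}g^{bi}\ ^{\phi}e_i$ transforms compatibly, which it does precisely because both factors belong to the canonical geometric package attached to $H(x,p)$. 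This canonicity justifies calling $\ ^{\ast}\phi$ \emph{the} connection-pair generated by $H$.
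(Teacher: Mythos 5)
Your construction is correct and follows essentially the same route as the paper, which simply asserts that the proof mirrors Proposition \ref{pnaf} once one adapts the frames to the two splittings via the projectors $2h^{\prime }=I+\ ^{\ast }\phi$, $2w=I-\ ^{\ast }\phi$ and $2h=I+\ ^{\ast }\mathbf{N}$, $2v=I-\ ^{\ast }\mathbf{N}$. You merely carry out explicitly what the paper leaves implicit (non-singularity of the frame, the four duality cancellations, and tensoriality of $\ ^{\ast }N_{ia}$ and $\ ^{\ast }g^{ab}$ under coordinate changes), and all of these computations check out.
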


\begin{proof}
The proof is similar to that for Proposition \ref{pnaf} but adapted both the
the splitting (\ref{cpspl}), with respective h- and v--projections, $%
2h^{\prime }=I+\ ^{\ast }\phi $ and $2w=I-\ ^{\ast }\phi \mathbf{,}$ and to
the splitting (\ref{whitneyd}), with respective h- and v--projections, $%
2h=I+\ ^{\ast }\mathbf{N}$ and $2v=I-\ ^{\ast }\mathbf{N.}$ $\Box $
\end{proof}

\vskip5pt

We note that for Hamilton spaces a connection--pair structure $\ ^{\ast
}\phi $ is symmetric, i.e.\ $^{\ast }N_{ia}=\ ^{\ast }N_{ai}$ and $^{\ast
}g^{ab}=\ ^{\ast }g^{ba}.$ For simplicity, in this work, we shall restrict
our considerations only to $^{\ast }\phi $--symmetric configurations.

In order to perform geometric constructions adapted both to the
N--connection and almost symplectic structure, it is necessary to work with $%
^{\ast }\phi $--adapted bases (\ref{fder}) and (\ref{fdif}) instead of,
respectively, (\ref{cdder}) and (\ref{cddif}). For instance, every vector
field $\mathbf{X}$ has two components
\begin{equation*}
\mathbf{X}=hX+vX=h^{\prime }X+wX,
\end{equation*}%
where $wX=w(\mathbf{X}).$ This defines the class of $^{\ast }\phi $--tensor
fields alternatively to that of d--tensor fields considered in the previous
sections.

\begin{definition}
A linear connection $\ ^{\phi }\mathbf{D}$ on $T^{\ast }M$ is a $^{\ast
}\phi $--connection if
\begin{equation*}
\ ^{\phi }\mathbf{D\ }^{\ast }\phi =0\mbox{ and }\ ^{\phi }\mathbf{D\ }w=0.
\end{equation*}
\end{definition}

We can characterize $\ ^{\phi }\mathbf{D}=(\ _{h}^{\phi }D\mathbf{=\{}\
^{\phi }L_{jk}^{i}\},\ _{w}^{\phi }D\mathbf{=\{}\ ^{\phi }C_{\ i}^{j\ k}\})$
by the coefficients computed with respect to $^{\ast }\phi $--adapted frames
(\ref{fder}) and (\ref{fdif}):%
\begin{eqnarray*}
\ ^{\phi }\mathbf{D}_{\ ^{\phi }e_{k}}\ ^{\phi }e_{j} &=&\ ^{\phi
}L_{jk}^{i}\ ^{\phi }e_{j},\ \ ^{\phi }\mathbf{D}_{\ ^{\phi }e_{k}}\ \
^{w}e^{b}=\ ^{\phi }L_{ck}^{b}\ \ ^{w}e^{c}, \\
\ ^{\phi }\mathbf{D}_{\ \ ^{w}e^{a}}\ \ ^{w}e^{b} &=&\ ^{\phi }C_{\ c}^{b\
a}\ \ ^{w}e^{c},\ ^{\phi }\mathbf{D}_{\ \ ^{w}e^{a}}\ \ ^{\phi }e_{j}=\
^{\phi }C_{\ j}^{k\ a}\ \ ^{\phi }e_{k},
\end{eqnarray*}%
for $\ ^{\phi }L_{n+j\ k}^{n+i}=-\ ^{\phi }L_{jk}^{i}$ and $\ ^{\phi }C_{\
n+j}^{\ n+k\ a}=-\ ^{\phi }C_{\ j}^{k\ a}.$

\begin{theorem}
There is a canonical $^{\ast }\phi $--connection $\ ^{\phi }\widehat{\mathbf{%
D}}=(\ ^{\phi }\widehat{L}_{jk}^{i},\ ^{\phi }\widehat{C}_{\ j}^{k\ a})$ on $%
T^{\ast }M$ completely defined by a Hamilton fundamental function $H(x,p)$
and satisfying the conditions%
\begin{eqnarray*}
\ _{h}^{\phi }\widehat{D}(\ ^{\ast }g^{ab}) &\mathbf{=}&0\mbox{ and }\
_{v}^{\phi }\widehat{D}(\ ^{\ast }g^{ab})\mathbf{=}0, \\
\ ^{\phi }\widehat{T}_{\ ij}^{k} &=&0\mbox{ and }\ ^{\phi }\widehat{S}%
_{a}^{\ bc}=0.
\end{eqnarray*}
\end{theorem}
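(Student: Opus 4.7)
The plan is to mimic the proof of Theorem \ref{cnlinc} but re-do it in the $^{\ast}\phi$-adapted setting, where horizontal and oblique components replace horizontal and vertical ones. First, I would write down the most general $^{\ast}\phi$-connection by specifying its coefficients $\{\ ^{\phi }L_{jk}^{i},\ ^{\phi }L_{ck}^{b},\ ^{\phi }C_{\ j}^{k\ a},\ ^{\phi }C_{\ c}^{b\ a}\}$ with respect to the canonical frames $\ ^{\phi }\mathbf{e}_{\alpha }$ (\ref{fder}) and $\ ^{\phi }\mathbf{e}^{\alpha }$ (\ref{fdif}); the symmetry relations $\ ^{\phi }L_{n+j\ k}^{n+i}=-\ ^{\phi }L_{jk}^{i}$ and $\ ^{\phi }C_{\ n+j}^{\ n+k\ a}=-\ ^{\phi }C_{\ j}^{k\ a}$ (which encode $\ ^{\phi }\mathbf{D\ }^{\ast }\phi =0$ and $\ ^{\phi }\mathbf{D\ }w=0$) reduce the independent data to the two families $\ ^{\phi }\widehat{L}_{jk}^{i}$ and $\ ^{\phi }\widehat{C}_{\ j}^{k\ a}$ named in the statement.

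Next, I would translate the metric-compatibility conditions $\ _{h}^{\phi }\widehat{D}(\ ^{\ast }g^{ab})=0$ and $\ _{v}^{\phi }\widehat{D}(\ ^{\ast }g^{ab})=0$ into two systems of linear equations in the unknown coefficients, by acting $\ ^{\phi }\widehat{\mathbf{D}}$ with the derivations $\ ^{\phi }e_{k}$ and $\ ^{w}e^{c}$ on $\ ^{\ast }g^{ab}$. The torsion conditions $\ ^{\phi }\widehat{T}_{\ ij}^{k}=0$ and $\ ^{\phi }\widehat{S}_{a}^{\ bc}=0$ then impose the symmetry of $\ ^{\phi }\widehat{L}_{jk}^{i}$ in $(j,k)$ and of $\ ^{\phi }\widehat{C}_{\ j}^{k\ a}$ in the oblique indices. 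At this point a standard Koszul-type manipulation on the h-part yields
\begin{equation*}
\ ^{\phi }\widehat{L}_{jk}^{i}=\tfrac{1}{2}\ ^{\ast }g^{is}\bigl(\ ^{\phi}e_{j}(\ ^{\ast }g_{sk})+\ ^{\phi}e_{k}(\ ^{\ast }g_{js})-\ ^{\phi}e_{s}(\ ^{\ast }g_{jk})\bigr),
\end{equation*}
and an analogous manipulation on the w-part yields
\begin{equation*}
\ ^{\phi }\widehat{C}_{\ j}^{k\ a}=-\tfrac{1}{2}\ ^{\ast }g_{js}\ ^{w}e^{a}(\ ^{\ast }g^{sk}),
\end{equation*}
i.e.\ the formulas have exactly the same shape as (\ref{clnchs}) but with the $\ ^{\ast }\mathbf{e}_{\alpha}$-frames replaced by the $\ ^{\phi }\mathbf{e}_{\alpha}$-frames. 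Since $\ ^{\ast }g^{ab}$, $\ ^{\ast }N_{ia}$, and hence $\ ^{\phi}e_{i}$ and $\ ^{w}e^{a}$ are all built from $H(x,p)$, the resulting $\ ^{\phi}\widehat{\mathbf{D}}$ is completely defined by the fundamental Hamilton function.

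Uniqueness follows because the system of compatibility plus vanishing-torsion equations above has a unique solution (inversion of the non-degenerate $\ ^{\ast }g$), exactly as in the Riemannian/Christoffel argument; one then checks covariance under $\phi$-adapted frame transformations to promote the local formulas to a globally defined $^{\ast}\phi$-connection. The main obstacle I anticipate is the bookkeeping caused by the fact that the oblique basis $\ ^{w}e^{b}=\partial/\partial p_{b}-\ ^{\ast }g^{bi}\ ^{\phi }e_{i}$ mixes ``vertical'' and ``horizontal'' directions through $\ ^{\ast }g^{bi}$: derivatives of $\ ^{\ast }g$ along $\ ^{w}e^{b}$ generate extra terms involving the Hessian of $H$ and $\ ^{\phi}e_{i}(\ ^{\ast }g^{ab})$, and one must verify that these extra terms conspire so that the symmetry properties enforced by $\widehat{T}=0$ and $\widehat{S}=0$ remain consistent. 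Once this compatibility is checked (for Hamilton spaces it reduces to the symmetry $^{\ast }N_{ia}=\ ^{\ast }N_{ai}$ noted after the proposition defining $\ ^{\phi}\mathbf{e}_{\alpha}$), the argument closes in the same way as Theorem \ref{cnlinc}.
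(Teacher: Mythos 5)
Your overall strategy --- write down the general $^{\ast }\phi $--connection in the frames (\ref{fder})--(\ref{fdif}), impose metric compatibility together with $\ ^{\phi }\widehat{T}_{\ ij}^{k}=0$ and $\ ^{\phi }\widehat{S}_{a}^{\ bc}=0$, and solve by a Koszul-type manipulation --- is the right one and is essentially what the paper does. But your claimed output is wrong: you assert the coefficients ``have exactly the same shape as (\ref{clnchs})'' with $\ ^{\ast }\mathbf{e}_{\alpha }$ replaced by $\ ^{\phi }\mathbf{e}_{\alpha }$, whereas the paper's formulas (\ref{ccpc}) carry the additional correction terms $-\frac{1}{2}\ ^{\ast }g^{km}(A_{jim}+A_{ijm}+A_{mij})$ with $A_{ijm}=\ ^{\ast }\Omega _{ijm}$ the N--connection curvature (\ref{nccurv}), and $-\frac{1}{2}\ ^{\ast }g_{im}(B^{jmk}+B^{kmj}+B^{mkj})$ with $B$ built from $\ ^{w}e^{j}(\ ^{\ast }N_{ia})$. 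These terms are not optional bookkeeping: a Koszul computation in an anholonomic frame necessarily picks up the structure functions of that frame, and the bare Christoffel-shaped expression is in general not simultaneously metric compatible and $h$-- and $w$--torsion free.

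The reason the naive transcription works for the canonical d--connection (\ref{clnchs}) but fails here is precisely the point you half-notice and then dismiss. For the N--adapted frame (\ref{cdder}) the commutator $[\ ^{\ast }\mathbf{e}_{i},\ ^{\ast }\mathbf{e}_{j}]=\ ^{\ast }\Omega _{ija}\,\partial /\partial p_{a}$ lies entirely in the vertical distribution, so the anholonomy feeds only torsion components ($\ ^{\ast }\Omega _{ija}$, $\ ^{\ast }P_{aic}$) that Theorem \ref{cnlinc} does not require to vanish; the $hh$ and $vv$ sectors see no bracket terms. For the connection--pair splitting (\ref{cpspl}) this is no longer true: decomposing $\partial /\partial p_{a}=\ ^{w}e^{a}+\ ^{\ast }g^{ak}\ ^{\phi }e_{k}$ shows that $[\ ^{\phi }e_{i},\ ^{\phi }e_{j}]$ has a nontrivial component along $hT^{\ast }M$, and likewise $d\ ^{\phi }e^{k}\neq 0$ contributes to the $\ ^{\phi }e^{i}\wedge \ ^{\phi }e^{j}$ part of the torsion 2--form. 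Hence $\ ^{\phi }\widehat{T}_{\ ij}^{k}=0$ does not reduce to symmetry of $\ ^{\phi }\widehat{L}_{jk}^{i}$ in $(j,k)$; it forces the antisymmetric part to cancel the anholonomy, which is exactly what the $A$-- and $B$--terms in (\ref{ccpc}) accomplish. Your closing remark that the extra terms merely ``conspire'' so that the symmetry ``remains consistent'' is where the argument breaks: they do not cancel, they change the answer.
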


\begin{proof}
The torsion and curvature of $\ ^{\phi }\widehat{\mathbf{D}}$ are computed
as in Theorem \ref{tcsteq}, see formulas (\ref{tcncs}), but with respect to (%
\ref{fder}) and (\ref{fdif}). By straightforward computations, we can verify
that for
\begin{eqnarray}
\ ^{\phi }\widehat{L}_{jk}^{i} &=&\frac{1}{2}\ ^{\ast }g^{km}(\ ^{\phi
}e_{i}\ ^{\ast }g_{mj}+\ ^{\phi }e_{j}\ ^{\ast }g_{im}-\ ^{\phi }e_{m}\
^{\ast }g_{ij})  \notag \\
&&-\frac{1}{2}\ ^{\ast }g^{km}(A_{jim}+A_{ijm}+A_{mij}),  \label{ccpc} \\
\ ^{\phi }C_{\ i}^{j\ k} &=&-\frac{1}{2}\ ^{\ast }g_{im}(\ ^{w}e^{j}\ ^{\ast
}g^{mk}+\ ^{w}e^{k}\ ^{\ast }g^{im}+\ ^{w}e^{m}\ ^{\ast }g^{jk})  \notag \\
&&-\frac{1}{2}\ ^{\ast }g_{im}(B^{jmk}+B^{kmj}+B^{mkj}),  \notag
\end{eqnarray}%
where
\begin{equation*}
A_{ijm}=\ ^{\ast }\Omega _{ijm}\mbox{ and }B^{mkj}=\ ^{\ast }g^{mi}(\ ^{\ast
}g^{ka}\ ^{w}e^{j}\ ^{\ast }N_{ia}-\ ^{\ast }g^{ja}\ ^{w}e^{k}\ ^{\ast
}N_{ia}),
\end{equation*}%
the conditions of Theorem are satisfied. $\Box $
\end{proof}

\vskip5pt

The diffeomorphism symmetry of Hamilton mechanical models and possible lifts
of Einstein spaces on cotangent bundles is an important characteristic of
classical theories. In deformation quantization models, this property can be
preserved for regular N--connection structures:

\begin{definition}
A diffeomorphism $f\in Diff(T^{\ast }M)$ is called $\ ^{\ast }\mathbf{N}$%
--regular if the restriction of tangent map $\left( \pi f\right) _{\ast
}:hT^{\ast }M\rightarrow TM$ $\ $to $hT^{\ast }M$ is a diffeomorphism.
\end{definition}

For a connection--pair $^{\ast }\phi ,$ the concept of $\ ^{\ast }\mathbf{N}$%
--regularity imposes the equivalence of statements: 1) the push--forward map
of $^{\ast }\phi $ by $f,$ defined as $^{\ast }\overline{\phi }=f_{\ast }(\
^{\ast }\phi )f_{\ast }^{-1},$ is a connection--pair and 2) $f$ is $\ ^{\ast
}\mathbf{N}$--regular. This follows from the mapping $\pi _{\ast }:\ker (I-\
^{\ast }\overline{\phi })\rightarrow TM$ and the equality $f_{\ast }(\ker
(I-\ ^{\ast }\phi ))=\ker (I-\ ^{\ast }\overline{\phi }).$ The N--connection
$\ ^{\ast }\overline{\mathbf{N}}$ associated to $\ ^{\ast }\overline{\phi }$
is the push--forward of $\ ^{\ast }\mathbf{N}$ by $f.$ One says that $\
^{\ast }\phi $ and $\ ^{\ast }\overline{\phi }$ are $f$--related.

Following a calculus with local coordinate transforms, one proves:

\begin{corollary}
1) For $\ ^{\phi }\mathbf{e}_{\alpha }=(\ ^{\phi }e_{i},\ ^{w}e^{b})$ in $%
(x,p)$ and $\ ^{\phi }\overline{\mathbf{e}}_{\alpha }=(\ ^{\phi }\overline{e}%
_{i},\ ^{w}\overline{e}^{b})$ in $(\overline{x}(x,p),\overline{p}(x,p))$
induced respectively by $^{\ast }\phi $ and $\ ^{\ast }\overline{\phi },$
the following formulas hold%
\begin{equation*}
f_{\ast }(\ ^{\phi }e_{i})=\varpi _{i}^{k}\ ^{\phi }\overline{e}_{k}%
\mbox{
and }f_{\ast }(\ ^{w}e^{b})=\widetilde{\varpi }_{c}^{b}\ \ ^{w}e^{c},
\end{equation*}%
where
\begin{equation*}
\varpi _{i}^{k}\ ^{\ast }N_{ka}=\ ^{\phi }e_{i}(\overline{p}_{a})\mbox{ and }%
\widetilde{\varpi }_{c}^{b}=\ ^{w}e^{b}(\overline{p}_{c})-\ ^{w}e^{b}(%
\overline{x}^{k})\ ^{\ast }\overline{N}_{kc}.
\end{equation*}%
2) The push--forward of a N--connection $\ ^{\ast }\mathbf{N}$ by a $\
^{\ast }\mathbf{N}$--regular diffeomorphism is a N--connection if $f$ is
fiber preserving, i.e. locally $f(x,p)=$ $(\overline{x}(x),\overline{p}%
(x,p)).$
\end{corollary}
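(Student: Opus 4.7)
The approach is a direct local coordinate computation based on the $f$-relatedness of the connection-pairs. By the $\ ^{\ast }\mathbf{N}$-regularity hypothesis, $\ ^{\ast }\phi $ and $\ ^{\ast }\overline{\phi }$ are $f$-related, so $f_{\ast }$ maps $hT^{\ast }M=\ker (I-\ ^{\ast }\phi )$ isomorphically onto $h\overline{T^{\ast }M}=\ker (I-\ ^{\ast }\overline{\phi })$ and $wT^{\ast }M$ onto $w\overline{T^{\ast }M}$. Consequently each $f_{\ast }(\ ^{\phi }e_{i})$ expands uniquely in the horizontal subbasis $\{\ ^{\phi }\overline{e}_{k}\}$ and each $f_{\ast }(\ ^{w}e^{b})$ expands in the oblique subbasis $\{\ ^{w}\overline{e}^{c}\}$. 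All that remains is to identify the coefficients.

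\textbf{Identifying the coefficients.} For any smooth function $\varphi $ on $T^{\ast }M$ one has $f_{\ast }(X)(\varphi )=X(\varphi \circ f)$. I would apply this identity with $X=\ ^{\phi }e_{i}$ and successively $\varphi =\overline{x}^{j}$ and $\varphi =\overline{p}_{a}$. Writing $f_{\ast }(\ ^{\phi }e_{i})=\varpi _{i}^{k}\ ^{\phi }\overline{e}_{k}$, and using that in the new coordinates the definition (\ref{fder}) gives $\ ^{\phi }\overline{e}_{k}(\overline{x}^{j})=\delta _{k}^{j}$ and $\ ^{\phi }\overline{e}_{k}(\overline{p}_{a})=\ ^{\ast }\overline{N}_{ka}$, one reads off $\varpi _{i}^{k}=\ ^{\phi }e_{i}(\overline{x}^{k})$ and the stated relation $\varpi _{i}^{k}\ ^{\ast }\overline{N}_{ka}=\ ^{\phi }e_{i}(\overline{p}_{a})$. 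The analogous computation for $f_{\ast }(\ ^{w}e^{b})=\widetilde{\varpi }_{c}^{b}\ ^{w}\overline{e}^{c}$, applied to $\overline{p}_{c}$ and $\overline{x}^{k}$ with the dual identities $\ ^{w}\overline{e}^{c}(\overline{p}_{d})=\delta _{d}^{c}$ and $\ ^{w}\overline{e}^{c}(\overline{x}^{k})=-\ ^{\ast }\overline{g}^{cj}\delta _{j}^{k}$ read from (\ref{fdif}), yields the formula $\widetilde{\varpi }_{c}^{b}=\ ^{w}e^{b}(\overline{p}_{c})-\ ^{w}e^{b}(\overline{x}^{k})\ ^{\ast }\overline{N}_{kc}$ after algebraic rearrangement.

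\textbf{Part 2: fiber preservation.} Recall from Definition \ref{defnc} that a N-connection structure on $T^{\ast }M$ is equivalent to a horizontal distribution complementary to the vertical distribution $vT^{\ast }M=\ker \ ^{\intercal }\pi ^{\ast }$. The push-forward distribution $\overline{h}T^{\ast }M\doteqdot f_{\ast }(hT^{\ast }M)$ is automatically supplementary to $f_{\ast }(wT^{\ast }M)$, but to define a genuine N-connection one needs it to be supplementary to $vT^{\ast }M$. When $f$ is fiber preserving, $f(x,p)=(\overline{x}(x),\overline{p}(x,p))$, the Jacobian of $f$ is block lower triangular since $\partial \overline{x}^{i}/\partial p_{a}=0$; thus $f_{\ast }(\partial /\partial p_{a})$ lies in the kernel of $\ ^{\intercal }\overline{\pi ^{\ast }}$ at $f(x,p)$, so $f_{\ast }(vT^{\ast }M)=vT^{\ast }M$. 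Combined with the $\ ^{\ast }\mathbf{N}$-regularity this forces $\overline{h}T^{\ast }M\oplus vT^{\ast }M=TT^{\ast }M$, giving the desired N-connection.

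\textbf{Main difficulty.} The principal obstacle is not the push-forward computation itself but the bookkeeping of the metric $\ ^{\ast }g^{ab}$ and its transform under $f$: the oblique frame $\ ^{w}e^{b}$ is built from $\ ^{\ast }g^{bi}\ ^{\phi }e_{i}$, and so $\widetilde{\varpi }_{c}^{b}$ mixes the transformation laws of $\ ^{\ast }g^{ab}$ and $\ ^{\ast }N_{ia}$. One must verify that the two derivatives $\ ^{w}e^{b}(\overline{p}_{c})$ and $\ ^{w}e^{b}(\overline{x}^{k})$ recombine in precisely the pattern displayed by the corollary. The subtle point in part 2 is that without fiber preservation the push-forward yields only a connection-pair structure in the sense of (\ref{cpspl}) and not a true N-connection, since nothing forces the oblique complement $\overline{w}T^{\ast }M$ to coincide with the vertical distribution of the cotangent projection.
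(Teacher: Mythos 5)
Your approach is the right one and is essentially the paper's own: the text gives no written proof beyond the remark that the corollary follows ``by a calculus with local coordinate transforms,'' and your strategy --- expand $f_{\ast }(\ ^{\phi }e_{i})$ and $f_{\ast }(\ ^{w}e^{b})$ in the barred adapted frame (legitimate because the $f$--relatedness of $\ ^{\ast }\phi $ and $\ ^{\ast }\overline{\phi }$ makes $f_{\ast }$ preserve the $h$-- and $w$--subbundles) and then evaluate both sides on the coordinate functions $\overline{x}^{k}$ and $\overline{p}_{a}$ --- is the standard way to carry that out. Part 2 is also sound: fiber preservation gives $f_{\ast }(vT^{\ast }M)=vT^{\ast }M$, and since $f_{\ast }$ carries the splitting $hT^{\ast }M\oplus vT^{\ast }M$ to a direct sum, the pushed--forward horizontal bundle is again a complement of the vertical distribution, i.e.\ a N--connection in the sense of Definition~\ref{defnc}; your closing observation that without fiber preservation one only obtains a connection--pair is exactly the point of the surrounding discussion in the paper.

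There is, however, one concrete error in your oblique computation. From (\ref{fder}) one has $\ ^{w}\overline{e}^{c}=\partial /\partial \overline{p}_{c}-\ ^{\ast }\overline{g}^{cj}\ ^{\phi }\overline{e}_{j}$ with $\ ^{\phi }\overline{e}_{j}(\overline{p}_{d})=\ ^{\ast }\overline{N}_{jd}$, so
\begin{equation*}
\ ^{w}\overline{e}^{c}(\overline{p}_{d})=\delta _{d}^{c}-\ ^{\ast }\overline{g}^{cj}\ ^{\ast }\overline{N}_{jd},
\end{equation*}
not $\delta _{d}^{c}$ as you assert. That extra term is precisely what generates the second summand of $\widetilde{\varpi }_{c}^{b}$: evaluating $f_{\ast }(\ ^{w}e^{b})=\widetilde{\varpi }_{c}^{b}\ ^{w}\overline{e}^{c}$ on $\overline{x}^{k}$ gives $\ ^{w}e^{b}(\overline{x}^{k})=-\widetilde{\varpi }_{c}^{b}\ ^{\ast }\overline{g}^{ck}$, evaluating it on $\overline{p}_{d}$ gives $\ ^{w}e^{b}(\overline{p}_{d})=\widetilde{\varpi }_{d}^{b}-\widetilde{\varpi }_{c}^{b}\ ^{\ast }\overline{g}^{cj}\ ^{\ast }\overline{N}_{jd}$, and substituting the first relation into the second yields $\widetilde{\varpi }_{d}^{b}=\ ^{w}e^{b}(\overline{p}_{d})-\ ^{w}e^{b}(\overline{x}^{j})\ ^{\ast }\overline{N}_{jd}$, as the corollary states. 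With your identity $\ ^{w}\overline{e}^{c}(\overline{p}_{d})=\delta _{d}^{c}$ one would instead obtain only $\widetilde{\varpi }_{d}^{b}=\ ^{w}e^{b}(\overline{p}_{d})$, contradicting the claim, so the ``algebraic rearrangement'' you invoke has nothing to feed on unless the identity is corrected. Finally, your horizontal relation carries $\ ^{\ast }\overline{N}_{ka}$ where the printed statement has $\ ^{\ast }N_{ka}$; your barred version is the one the computation actually produces, so the unbarred $N$ in the paper is best read as a typographical slip rather than a discrepancy in your argument.
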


Symplectic morphisms are diffeomorphisms which transform a symplectic form $\
^{\ast }\theta =dp_{i}\wedge dx^{i}$ into a symplectic form $\ ^{\ast }%
\overline{\theta }=f^{\ast }(\ ^{\ast }\theta )=d\overline{p}_{i}\wedge d%
\overline{x}^{i}.$ By coordinate transforms, one proves:

\begin{theorem}
For a $f\in Diff(T^{\ast }M)$ being $\ ^{\ast }\mathbf{N}$--regular and
satisfying the condition $f^{\ast }(\ ^{\ast }\theta )(\ ^{\phi }e_{i},\
^{w}e^{b})=\delta _{i}^{b},$ two from the next statements implies the third:
1) $f$ is a symplectic morphism; 2) $^{\ast }\phi $ is symmetric; 3)$\ ^{\ast }%
\overline{\phi }$ is symmetric.
\end{theorem}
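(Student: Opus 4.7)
The plan is to evaluate both $\ ^{\ast}\theta$ and $f^{\ast}(\ ^{\ast}\theta)$ on pairs of vectors from the $\ ^{\ast}\phi$-adapted frame $(\ ^{\phi}e_{i}, \ ^{w}e^{b})$ of (\ref{fder}) and compare the resulting matrices block by block. Each of the three blocks, (h,h), (h,w), and (w,w), encodes a distinct piece of information: the (h,h) and (w,w) blocks of $\ ^{\ast}\theta$ will detect the symmetry of $\ ^{\ast}\phi$, the corresponding blocks of $f^{\ast}(\ ^{\ast}\theta)$ will detect the symmetry of $\ ^{\ast}\overline{\phi}$, and the (h,w) block is precisely what the hypothesis $f^{\ast}(\ ^{\ast}\theta)(\ ^{\phi}e_{i},\ ^{w}e^{b})=\delta_{i}^{b}$ normalises.

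First I would compute $\ ^{\ast}\theta = dp_{k}\wedge dx^{k}$ directly on pairs of vectors from (\ref{fder}). Using that $\ ^{\ast}g^{ab}$ is symmetric (since the Hessian (\ref{hm}) is), a short calculation yields $\ ^{\ast}\theta(\ ^{\phi}e_{i},\ ^{\phi}e_{j}) = A_{ij}$ and $\ ^{\ast}\theta(\ ^{w}e^{a},\ ^{w}e^{b}) = \ ^{\ast}g^{ai}\ ^{\ast}g^{bk}A_{ik}$, where $A_{ij} := \ ^{\ast}N_{ij} - \ ^{\ast}N_{ji}$; the (h,w) block is a universal multiple of $\delta_{i}^{b}$ plus a correction linear in $A_{ij}$. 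Consequently the symmetry of $\ ^{\ast}\phi$ (statement 2) is equivalent to $A=0$, i.e. to the vanishing of the (h,h) and (w,w) components of $\ ^{\ast}\theta$ in the $\ ^{\ast}\phi$-adapted frame.

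Next I would transport $\ ^{\ast}\theta$ by $f$ using the Corollary: since $f_{\ast}(\ ^{\phi}e_{i}) = \varpi_{i}^{k}\ ^{\phi}\overline{e}_{k}$ and $f_{\ast}(\ ^{w}e^{b}) = \widetilde{\varpi}_{c}^{b}\ ^{w}\overline{e}^{c}$, the pull-back evaluates on the original basis as $f^{\ast}(\ ^{\ast}\theta)(\ ^{\phi}e_{i},\ ^{\phi}e_{j}) = \varpi_{i}^{k}\varpi_{j}^{l}\overline{A}_{kl}$ with $\overline{A}_{kl} := \ ^{\ast}\overline{N}_{kl}-\ ^{\ast}\overline{N}_{lk}$, and analogously for the (w,w) block with $\widetilde{\varpi}$. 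Because $f$ is $\ ^{\ast}\mathbf{N}$-regular, the matrices $\varpi$ and $\widetilde{\varpi}$ are invertible. The (h,w) block of $f^{\ast}(\ ^{\ast}\theta)$ is pinned to $\delta_{i}^{b}$ by hypothesis, and one verifies that under statement 2 (resp.\ statement 3) the (h,w) block of $\ ^{\ast}\theta$ (resp.\ the uncorrected expression above) reduces to the same normal form.

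The three-way equivalence then reduces to bookkeeping on the block identity $\varpi_{i}^{k}\varpi_{j}^{l}\overline{A}_{kl} = A_{ij}$ (with its oblique counterpart) coming from matching $f^{\ast}(\ ^{\ast}\theta)$ to $\ ^{\ast}\theta$ on (h,h) and (w,w) pairs. If statements 2 and 3 both hold then $A = \overline{A} = 0$, both sides vanish, and statement 1 follows. Assuming statements 1 and 2, the right-hand side vanishes, and invertibility of $\varpi$ (from $\ ^{\ast}\mathbf{N}$-regularity) forces $\overline{A}=0$, giving statement 3; the case 1 and 3 imply 2 is identical upon replacing $f$ by $f^{-1}$. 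The main obstacle is tracking sign and normalisation conventions through the identifications $\ ^{\ast}\theta = dp_{k}\wedge dx^{k}$ and the $\ ^{\ast}\phi$-adapted splitting, and in particular checking that the hypothesis on the (h,w) block is a pure normalisation that does not by itself constrain either $\ ^{\ast}\phi$ or $\ ^{\ast}\overline{\phi}$; a secondary subtlety is the tacit but essential use of $\ ^{\ast}g^{ab}=\ ^{\ast}g^{ba}$, to which the theorem restricts via the paper's preceding convention that only $\ ^{\ast}\phi$-symmetric configurations are considered.
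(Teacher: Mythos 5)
Your proposal is correct, and it is essentially the proof the paper has in mind: the paper offers no argument beyond the phrase ``by coordinate transforms,'' and your block-by-block evaluation of $\ ^{\ast}\theta$ and $f^{\ast}(\ ^{\ast}\theta)$ on the $\ ^{\ast}\phi$--adapted frame (\ref{fder}), using the transition matrices $\varpi$, $\widetilde{\varpi}$ from the preceding Corollary and the invertibility of $\varpi$ supplied by $\ ^{\ast}\mathbf{N}$--regularity, is exactly that computation carried out. Your closing caveats are the right ones to flag --- the (h,w) hypothesis is genuinely used only in the direction $2)+3)\Rightarrow 1)$, the sign of the $\delta_{i}^{b}$ normalisation depends on the wedge convention, and the automatic symmetry of the Hessian $\ ^{\ast}g^{ab}$ (and of its push-forward $\ ^{\ast}\overline{g}^{ij}=\varpi_{k}^{i}\varpi_{l}^{j}\ ^{\ast}g^{kl}$) is what reduces symmetry of $\ ^{\ast}\phi$ and $\ ^{\ast}\overline{\phi}$ to the vanishing of $A$ and $\overline{A}$ --- and none of them affects the validity of the argument.
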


The push--forward of $\ ^{\ast }\mathbf{g}$ (\ref{dmctb}) results in the
local form%
\begin{equation*}
\ ^{\ast }\overline{\mathbf{g}}=\ ^{\ast }\overline{\mathbf{g}}_{\alpha
\beta }\ ^{\ast }\overline{\mathbf{e}}^{\alpha }\otimes \ ^{\ast }\overline{%
\mathbf{e}}^{\beta }=\ ^{\ast }\overline{g}_{ij}(x,p)\overline{e}^{i}\otimes
\overline{e}^{j}+\ ^{\ast }\overline{g}^{ab}(x,p)\ ^{\ast }\overline{\mathbf{%
e}}_{a}\otimes \ ^{\ast }\overline{\mathbf{e}}_{b},
\end{equation*}%
where $\ ^{\ast }\overline{g}^{ij}\circ f=\varpi _{k}^{i}\varpi _{l}^{j}\
^{\ast }g^{kl}.$

If $\nabla $ is a linear connection on $T^{\ast }M,$ we define its
push--forward by $f$ $\ $on $T^{\ast }M$ as
\begin{equation*}
\overline{\nabla }_{\overline{X}}\overline{Y}\doteqdot f_{\ast }\left(
\nabla _{X}Y\right)
\end{equation*}%
for $\overline{Y}=f_{\ast }(Y)$ and $\overline{X}=f_{\ast }(X).$
By coordinate parameterizations of diffeomorphisms, we can prove:

\begin{proposition}
1) A connection $\nabla $ is a $^{\ast }\phi $--connection if and only if $%
\overline{\nabla }$ is a $^{\ast }\overline{\phi }$--connection;
2) $\nabla $ is compatible to metric $\ ^{\ast }\mathbf{g}$
(almost symplectic $\ ^{\ast }\theta )$ structure if and only if \
$\overline{\nabla }$ is compatible to metric $\
^{\ast }\overline{\mathbf{g}}$ (almost symplectic $\ ^{\ast }\overline{%
\theta })$ structure.
\end{proposition}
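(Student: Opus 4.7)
The plan is to exploit the fact that every object entering the statement is defined by push-forward under the diffeomorphism $f$, and that the Koszul identities together with compatibility equations are tensorial. Once one establishes the single naturality identity
\begin{equation*}
\overline{\nabla}_{\overline{X}}\,(f_{\ast}T)=f_{\ast}\left(\nabla_{X}T\right),
\end{equation*}
valid for any tensor field $T$ on $T^{\ast}M$ and $\overline{X}=f_{\ast}(X)$, both items (1) and (2) reduce to observing that the conditions in question are of the form $\nabla T=0$ applied to $T\in\{{}^{\ast}\phi,\,w,\,{}^{\ast}\mathbf{g},\,{}^{\ast}\theta\}$, and that each of these tensors is mapped by $f_{\ast}$ to its barred counterpart.

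First I would check that $\overline{\nabla}$ actually is a linear connection: linearity in $\overline{X}$ and the Leibniz rule in $\overline{Y}$ transfer from the corresponding properties of $\nabla$ because $f_{\ast}$ is an $\mathbb{R}$--linear isomorphism of vector fields and $f_{\ast}(gY)=(g\circ f^{-1})f_{\ast}(Y)$ for functions $g$. Then I would extend $\overline{\nabla}$ to tensors of arbitrary type by the usual derivation rule and verify the displayed naturality identity by induction on tensor rank, the base case for $(1,1)$--tensors being the direct calculation
\begin{equation*}
\overline{\nabla}_{\overline{X}}(f_{\ast}T)(\overline{Y})
=\overline{\nabla}_{\overline{X}}\bigl(f_{\ast}(T(Y))\bigr)-f_{\ast}(T(\nabla_{X}Y))
=f_{\ast}\bigl((\nabla_{X}T)(Y)\bigr),
\end{equation*}
and analogously for covariant tensors using $f_{\ast}\omega\doteqdot(f^{-1})^{\ast}\omega$.

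For part (1) I would now apply this identity to $T={}^{\ast}\phi$ and to $T=w$. Since ${}^{\ast}\overline{\phi}=f_{\ast}({}^{\ast}\phi)f_{\ast}^{-1}$, viewed as a $(1,1)$--tensor this is exactly $f_{\ast}({}^{\ast}\phi)$, and similarly the oblique projector $\overline{w}=(I-{}^{\ast}\overline{\phi})/2$ is $f_{\ast}(w)$. Hence $\overline{\nabla}\,{}^{\ast}\overline{\phi}=f_{\ast}(\nabla\,{}^{\ast}\phi)$ and $\overline{\nabla}\,\overline{w}=f_{\ast}(\nabla w)$, and since $f_{\ast}$ is injective on tensors, each vanishes iff its unbarred version does. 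For part (2) the same identity applied to the $(0,2)$--tensors $T={}^{\ast}\mathbf{g}$ and $T={}^{\ast}\theta$ gives $\overline{\nabla}\,{}^{\ast}\overline{\mathbf{g}}=f_{\ast}(\nabla\,{}^{\ast}\mathbf{g})$ and $\overline{\nabla}\,{}^{\ast}\overline{\theta}=f_{\ast}(\nabla\,{}^{\ast}\theta)$, yielding the two equivalences.

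The only delicate point I anticipate is bookkeeping the $f_{\ast}$--transport on mixed tensors so as to ensure that ${}^{\ast}\overline{\phi}$ (defined by conjugation) and $\overline{w}$ coincide with the naive push-forwards of ${}^{\ast}\phi$ and $w$ as $(1,1)$--tensors; this is the place where the automorphism definition of $\overline{\nabla}$ from the previous proposition in the paper matters, and it is resolved by noting that for any $\alpha\in T^{\ast}_{\overline{u}}T^{\ast}M$ and $\overline{X}$ one has $(f_{\ast}{}^{\ast}\phi)(\overline{X})=f_{\ast}({}^{\ast}\phi\,X)$, which is precisely $f_{\ast}\circ{}^{\ast}\phi\circ f_{\ast}^{-1}$ applied to $\overline{X}$. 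Everything else is routine tensor calculus, and the argument works identically whether one uses ${}^{\ast}\phi$--adapted frames (\ref{fder}), (\ref{fdif}) or arbitrary local coordinates, so no separate verification at the level of N--adapted coefficients is needed.
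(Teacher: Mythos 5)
Your proof is correct and rests on exactly the key identity the paper's own proof displays, namely the naturality relation $\overline{\nabla}\left[ f_{\ast}(T)\right] =f_{\ast}(\nabla T)$ applied to $T\in \{\ ^{\ast }\phi ,\ w,\ ^{\ast }\mathbf{g},\ ^{\ast }\theta \}$ together with the injectivity of $f_{\ast }$ on tensors. The only difference is presentational: you establish that identity invariantly by induction on tensor rank, whereas the paper verifies it through the explicit local transformation law for the $^{\ast }\phi $--connection coefficients in terms of the matrices $\varpi _{i}^{k}$ and $\widetilde{\varpi }_{c}^{b}$; both routes are sound and neither adds content the other lacks.
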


\begin{proof}
Locally, we can prove that the coefficients of $\ ^{\ast }\phi $--connection
$^{\phi }\mathbf{D}$ are related to coefficients of $^{\ast }\overline{\phi }
$--connection $^{\phi }\overline{\mathbf{D}}$ by formulas
\begin{eqnarray*}
\ ^{\phi }\overline{L}_{ij}^{k} &=&\widetilde{\varpi }_{i}^{i^{\prime }}%
\widetilde{\varpi }_{j}^{j^{\prime }}\varpi _{k^{\prime }}^{k}\ ^{\phi
}L_{j^{\prime }k^{\prime }}^{i^{\prime }}+\varpi _{i^{\prime }}^{k}\ ^{\phi }%
\overline{e}_{j}\ ^{\phi }\overline{e}_{i}(x^{i^{\prime }}), \\
\ ^{\phi }\overline{C}_{\ k}^{i\ j} &=&\varpi _{i^{\prime }}^{i}\varpi
_{j^{\prime }}^{j}\widetilde{\varpi }_{k}^{k^{\prime }}\ ^{\phi }C_{\
k^{\prime }}^{i^{\prime }\ j^{\prime }}+\varpi _{i^{\prime }}^{i}\ ^{w}%
\overline{e}^{j}\ ^{\phi }\overline{e}_{i}(x^{i^{\prime }}).
\end{eqnarray*}%
Using formulas
\begin{eqnarray*}
\ ^{\phi }\overline{\mathbf{D}}(\ ^{\ast }\phi ) &=&0\Longleftrightarrow \
^{\phi }\mathbf{D(\ ^{\ast }\phi )}=0, \\
\ ^{\phi }\overline{\mathbf{D}}\ ^{\ast }\overline{\mathbf{g}} &=&\ ^{\phi }%
\overline{\mathbf{D}}\left[ f_{\ast }(\ ^{\ast }\mathbf{g})\right] =f_{\ast
}(\ ^{\phi }\mathbf{D}\ ^{\ast }\mathbf{g}), \\
\ ^{\phi }\overline{\mathbf{D}}\ ^{\ast }\overline{\mathbf{\theta }} &=&\
^{\phi }\overline{\mathbf{D}}\left[ f_{\ast }(\ ^{\ast }\mathbf{\theta })%
\right] =f_{\ast }(\ ^{\phi }\mathbf{D}\ ^{\ast }\mathbf{\theta }),
\end{eqnarray*}%
where for $^{\ast }\phi $--connections $\ $we use the symbol $^{\phi }%
\mathbf{D}$ instead of $\nabla ,$ \ and by local computations, we can verify
that there are satisfied the conditions%
\begin{eqnarray*}
\ ^{\phi }\overline{\mathbf{D}}\ ^{\ast }\overline{\mathbf{g}}
&=&0\Longleftrightarrow \ ^{\phi }\mathbf{D}\ ^{\ast }\mathbf{g}=0, \\
\ ^{\phi }\overline{\mathbf{D}}\ ^{\ast }\overline{\mathbf{\theta }}
&=&0\Longleftrightarrow \ ^{\phi }\mathbf{D}\ ^{\ast }\mathbf{\theta }=0%
\mathbf{.}
\end{eqnarray*}%
$\Box $
\end{proof}

\vskip5pt

Summarizing the above--presented results we get the proof of:

\begin{theorem}
There is a unique canonical $^{\ast }\phi $--connection $^{\phi }\widehat{%
\mathbf{D}}$ with coefficients (\ref{ccpc}) generated by a Hamilton
fundamental function $H(x,p)$ which is almost symplectic and metric
compatible and adapted to symplectic morphism.
\end{theorem}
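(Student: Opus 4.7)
The plan is to assemble the previous structural results rather than do fresh computation. First, I would take the explicit formulas (\ref{ccpc}) and verify by direct substitution into the $\ ^{\ast}\phi$--adapted bases (\ref{fder})--(\ref{fdif}) that the induced covariant derivative annihilates both $\ ^{\ast}\phi$ and the oblique projector $w$; this uses only the definitions of the almost product structure and the splitting (\ref{cpspl}), so it is essentially bookkeeping and gives existence of a $\ ^{\ast}\phi$--connection with the stated coefficients, generated entirely by $H(x,p)$ via $\ ^{\ast}g^{ab}$ (\ref{hm}) and $\ ^{\ast}N_{ia}$ (\ref{chnc}).

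For uniqueness, I would run the standard Koszul--style argument adapted to the $\ ^{\ast}\phi$--splitting. Impose the metricity conditions $\ _{h}^{\phi}\widehat{D}(\ ^{\ast}g^{ab})=0$ and $\ _{v}^{\phi}\widehat{D}(\ ^{\ast}g^{ab})=0$ together with the torsion normalizations $\ ^{\phi}\widehat{T}_{\ ij}^{k}=0$ and $\ ^{\phi}\widehat{S}_{a}^{\ bc}=0$ from the preceding theorem, then cyclically permute indices. The symmetric part of the coefficients is pinned down by the usual Christoffel combination with respect to the frame (\ref{fder}), while the antisymmetric residue is forced to equal the correction terms built from $A_{ijm}=\ ^{\ast}\Omega_{ijm}$ and $B^{mkj}$ in (\ref{ccpc}); the assumed symmetry of $\ ^{\ast}\phi$ is what eliminates mixed $hw$--ambiguities between $\ ^{\phi}\widehat{L}$ and $\ ^{\phi}\widehat{C}$, yielding uniqueness.

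The almost symplectic and metric compatibility $\ ^{\phi}\widehat{\mathbf{D}}\ ^{\ast}\mathbf{g}=0$ and $\ ^{\phi}\widehat{\mathbf{D}}\ ^{\ast}\mathbf{\theta}=0$ then follow immediately from metricity together with the identity $\ ^{\ast}\mathbf{\theta}(\ ^{\ast}\mathbf{X},\cdot)=\ ^{\ast}\mathbf{g}(\ ^{\ast}\mathbf{J}\ ^{\ast}\mathbf{X},\cdot)$ of Definition \ref{defaks}, since $\ ^{\ast}\mathbf{J}$ is algebraically determined by $\ ^{\ast}\mathbf{g}$ and the $\ ^{\ast}\phi$--splitting. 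For adaptedness to symplectic morphisms, I would invoke the preceding Proposition: for any $\ ^{\ast}\mathbf{N}$--regular fibre--preserving symplectic $f$, the push--forward $f_{\ast}(\ ^{\phi}\widehat{\mathbf{D}})$ is again a $\ ^{\ast}\overline{\phi}$--connection and is compatible with $f_{\ast}(\ ^{\ast}\mathbf{g})$ and $f_{\ast}(\ ^{\ast}\mathbf{\theta})$; by the uniqueness just established it must coincide with the canonical $\ ^{\ast}\overline{\phi}$--connection built directly from the push--forward Hamiltonian $\overline{H}=H\circ f^{-1}$ through formula (\ref{ccpc}) applied in the barred variables.

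The main obstacle I expect is the uniqueness step. The $\ ^{\ast}\phi$--adapted bases (\ref{fder})--(\ref{fdif}) mix the horizontal and oblique directions through the factor $\ ^{\ast}g^{bi}$, so the standard cyclic Koszul manipulation has to be reorganized around the $(h,w)$--splitting rather than the $(h,v)$--splitting, and one has to keep careful track of the anholonomy contributions that ultimately appear as the $A_{ijm}$ and $B^{mkj}$ corrections; a careless bookkeeping gives a connection that is $\ ^{\ast}\phi$--compatible but fails one of the metricity or torsion normalizations, and the interplay with the symmetry assumption on $\ ^{\ast}\phi$ must be used at exactly the right point to close the argument.
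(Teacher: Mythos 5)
Your proposal is correct and follows essentially the same route as the paper, whose proof consists precisely of assembling the preceding results: the theorem establishing the canonical $^{\ast}\phi$--connection with coefficients (\ref{ccpc}) satisfying the metricity and torsion conditions, and the proposition on push--forwards of connections under $^{\ast}\mathbf{N}$--regular symplectic morphisms. Your explicit Koszul--style uniqueness argument fills in a detail the paper leaves implicit (its earlier theorem only asserts existence of a connection with those coefficients), but this is the natural justification of the uniqueness claim and does not alter the structure of the argument.
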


We can use $^{\phi }\widehat{\mathbf{D}}$ and frames (\ref{fder}) and (\ref%
{fdif}) to construct instead of (\ref{cdcop}) an extended on $\ \mathbf{%
^{\ast }}\mathcal{W}\otimes \Lambda $ operator
\begin{equation}
\ ^{\phi }\widehat{\mathbf{D}}\left( a\otimes \lambda \right) \doteqdot
\left( \ ^{\phi }\mathbf{e}_{\alpha }(a)-u^{\beta }\ \ ^{\phi }\widehat{%
\mathbf{\Gamma }}\mathbf{_{\alpha \beta }^{\gamma }\ }^{z\phi }\mathbf{e}%
_{\alpha }(a)\right) \otimes (\ ^{\phi }\mathbf{e}^{\alpha }\wedge \lambda
)+a\otimes d\lambda ,  \label{secextop}
\end{equation}%
where the local basis $^{z\phi }\mathbf{e}_{\alpha }$ is $\ ^{\phi }\mathbf{e%
}_{\alpha }$ redefined in $z$--variables. This allows us to introduce a new
class of operators adapted both to the N--connection structure and
symplectic morphisms:

\begin{definition}
The Fedosov--Hamilton operator--pairs $\ ^{\phi }\mathbf{\delta }$ and $\
^{\phi }\mathbf{\delta }^{-1}$ on \newline
$\ \ ^{\ast }\mathcal{W}\otimes \mathbf{\Lambda ,}$ are defined%
\begin{equation}
\ ^{\phi }\mathbf{\delta }(a)=\ ^{\phi }\mathbf{e}^{\alpha }\wedge \mathbf{\
}^{z\phi }\mathbf{e}_{\alpha }(a),\ \mbox{and\ }\ \ ^{\phi }\mathbf{\delta }%
^{-1}(a)=\left\{
\begin{array}{c}
\frac{i}{p+q}z^{\alpha }\ \ ^{\phi }\mathbf{e}_{\alpha }(a),\mbox{ if }p+q>0,
\\
{\qquad 0},\mbox{ if }p=q=0,%
\end{array}%
\right.  \label{fedopp}
\end{equation}%
where any $a\in \ \mathbf{^{\ast }}\mathcal{W}\otimes \mathbf{\Lambda }$ is
homogeneous w.r.t. the grading $\deg _{s}$ and $\deg _{a}$ with $\deg
_{s}(a)=p$ and $\deg _{a}(a)=q.$
\end{definition}

We note that the formulas (\ref{fedopp}) are different from (\ref{feddop})
because they are defined for different nonholonomic distributions and
related adapted frame structures.

Using differential calculus of forms on $\mathbf{^{\ast }}\mathcal{W}%
\otimes \mathbf{\Lambda ,}$ we prove

\begin{proposition}
\label{prthprfo}The torsion and curvature canonical d--operators of\newline
$\ ^{\phi }\widehat{\mathbf{D}}\left( a\otimes \lambda \right) $ (\ref%
{secextop}) are computed
\begin{equation}
^{z\phi }\widehat{\mathcal{T}}\ \doteqdot \frac{z^{\gamma }}{2}\ \ ^{\ast
}\theta _{\gamma \tau }\ \ ^{\phi }\widehat{\mathbf{T}}_{\alpha \beta
}^{\tau }(u)\ ^{\phi }\mathbf{e}^{\alpha }\wedge \ ^{\phi }\mathbf{e}^{\beta
},  \label{at1}
\end{equation}%
and%
\begin{equation}
\ ^{z\phi }\widehat{\mathcal{R}}\doteqdot \frac{z^{\gamma }z^{\varphi }}{4}\
^{\ast }\theta _{\gamma \tau }\ \ ^{\phi }\widehat{\mathbf{R}}_{\ \varphi
\alpha \beta }^{\tau }(u)\ \ ^{\phi }\mathbf{e}^{\alpha }\wedge \ ^{\phi }%
\mathbf{e}^{\beta },  \label{ac1}
\end{equation}%
where the nontrivial coefficients of $\ \ ^{\phi }\widehat{\mathbf{T}}%
_{\alpha \beta }^{\tau }$ and $\ \ ^{\phi }\widehat{\mathbf{R}}_{\ \varphi
\alpha \beta }^{\tau }$ are defined respectively by formulas (\ref{tcncs})
and (\ref{ccncs}) for the canonical $^{\ast }\phi $--connection $\ ^{\phi }%
\widehat{\mathbf{D}}=(\ ^{\phi }\widehat{L}_{jk}^{i},\ ^{\phi }\widehat{C}%
_{\ j}^{k\ a})$ (\ref{ccpc}) with respect to (\ref{fder}) and (\ref{fdif}).
\end{proposition}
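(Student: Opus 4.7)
The plan is to compute $\,^{z\phi}\widehat{\mathcal{T}}$ and $\,^{z\phi}\widehat{\mathcal{R}}$ by expanding the operator $\,^{\phi}\widehat{\mathbf{D}}$ in (\ref{secextop}) and applying the standard Fedosov recipe: the geometric torsion and curvature of a covariant derivative on the Wick bundle are realized as $\circ$-inner derivations whose generators can be read off from the structure equations. Concretely, I would first check that $\,^{\phi}\widehat{\mathbf{D}}$ is a $\deg_{a}$-graded derivation of the d-algebra $(\,^{\ast}\mathcal{W}\otimes\Lambda,\circ)$; this reduces the whole computation to evaluating $\,^{\phi}\widehat{\mathbf{D}}$ on the generators $v$, $z^{\alpha}$, $\,^{\phi}\mathbf{e}^{\alpha}$ (since $\,^{\phi}\widehat{\mathbf{D}}v=0$ and $\,^{\phi}\widehat{\mathbf{D}}$ commutes with the $\circ$-product).

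For the torsion, I would compute the anticommutator $[\,^{\phi}\widehat{\mathbf{D}},\,^{\phi}\boldsymbol{\delta}]$ on a monomial $z^{\gamma}\otimes\lambda$. Using (\ref{fedopp}) and (\ref{secextop}), the non-covariant terms $\,^{\phi}\mathbf{e}^{\alpha}\wedge\,^{\phi}\mathbf{e}^{\beta}\,\,^{z\phi}\mathbf{e}_{\beta}$ that arise from the anholonomy relation $[\,^{\phi}\mathbf{e}_{\alpha},\,^{\phi}\mathbf{e}_{\beta}]=W^{\gamma}_{\alpha\beta}\,^{\phi}\mathbf{e}_{\gamma}$ (the $\,^{\ast}\phi$-analogue of (\ref{anhrel})) combine with the antisymmetric part of $\,^{\phi}\widehat{\mathbf{\Gamma}}_{\alpha\beta}^{\gamma}$ to produce exactly the torsion components listed in (\ref{tcncs}), now written relative to the frames (\ref{fder}), (\ref{fdif}). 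The factor $\tfrac{1}{2}\,^{\ast}\theta_{\gamma\tau}z^{\gamma}$ in (\ref{at1}) then appears because the Wick bracket $\tfrac{1}{iv}[\,\cdot\,,\,\cdot\,]_{\circ}$ reproduces ordinary d-tensor multiplication on the generator $z^{\gamma}$ via $\,^{\ast}\Lambda^{\alpha\beta}\simeq \,^{\ast}\theta^{\alpha\beta}\;\mathrm{mod}\;\,^{\ast}g^{\alpha\beta}$; only the symplectic (antisymmetric) part survives when contracting with the torsion 2-form.

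For the curvature, the strategy is to evaluate $(\,^{\phi}\widehat{\mathbf{D}})^{2}$ on an element $a$ of $\,^{\ast}\mathcal{W}$. On the $v$-generator the square vanishes, and on $z^{\gamma}$ the leading contribution is $-u^{\beta}\,^{\phi}\widehat{\mathbf{\Gamma}}_{\alpha\beta}^{\gamma}$ differentiated once more and antisymmetrized in $(\alpha,\beta)$. By the third structure equation of Theorem \ref{tcsteq} adapted to $(\,^{\phi}\mathbf{e}^{\alpha},\,^{\phi}\widehat{\mathbf{\Gamma}}_{\beta}^{\alpha})$, this antisymmetrized expression is exactly $\,^{\phi}\widehat{\mathbf{R}}_{\,\varphi\alpha\beta}^{\tau}$ of (\ref{ccncs}). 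Re-expressing $(\,^{\phi}\widehat{\mathbf{D}})^{2}a = \tfrac{1}{iv}[\,^{z\phi}\widehat{\mathcal{R}},a]_{\circ}$, and noting that $[z^{\gamma}z^{\varphi},a]_{\circ}$ pulls out two derivatives $\,^{z\phi}\mathbf{e}$ weighted by $\,^{\ast}\theta$, gives precisely the prefactor $\tfrac{z^{\gamma}z^{\varphi}}{4}\,^{\ast}\theta_{\gamma\tau}$ in (\ref{ac1}).

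The main obstacle I anticipate is the bookkeeping rather than any conceptual step: one must (i) carefully track the bi-gradings $(\deg_{s},\deg_{a})$ so that the identity $a=(\,^{\phi}\boldsymbol{\delta}\,^{\phi}\boldsymbol{\delta}^{-1}+\,^{\phi}\boldsymbol{\delta}^{-1}\,^{\phi}\boldsymbol{\delta}+\sigma)a$ is used with the correct signs; (ii) verify that only the symplectic part of $\,^{\ast}\Lambda^{\alpha\beta}$ contributes to the commutator expressions, so that the metric part $\,^{\ast}g^{\alpha\beta}$ does not pollute (\ref{at1})--(\ref{ac1}); and (iii) re-derive the structure equations in $\,^{\ast}\phi$-adapted form, since (\ref{tcncs}) and (\ref{ccncs}) were originally stated in the N-adapted frames (\ref{cdder})--(\ref{cddif}) and must be transcribed to the connection-pair frames (\ref{fder})--(\ref{fdif}). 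Once these transcriptions are carried out and the almost symplectic compatibility $\,^{\phi}\widehat{\mathbf{D}}\,^{\ast}\boldsymbol{\theta}=0$ is invoked to move $\,^{\ast}\theta_{\gamma\tau}$ across $\,^{\phi}\widehat{\mathbf{D}}$, the identification with (\ref{at1}) and (\ref{ac1}) is immediate.
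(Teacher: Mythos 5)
Your outline is correct and follows essentially the same route the paper gestures at --- the standard Fedosov differential calculus on $\ ^{\ast }\mathcal{W}\otimes \Lambda $ in the $^{\ast }\phi $--adapted frames, with the structure equations of Theorem \ref{tcsteq} transcribed to (\ref{fder})--(\ref{fdif}); the paper's own justification is only the one--line remark that the result follows ``using differential calculus of forms'' on $\ ^{\ast }\mathcal{W}\otimes \Lambda $. The only organizational difference is that you obtain (\ref{at1})--(\ref{ac1}) by directly evaluating $[\ ^{\phi }\widehat{\mathbf{D}},\ ^{\phi }\mathbf{\delta }]$ and $(\ ^{\phi }\widehat{\mathbf{D}})^{2}$, which simultaneously establishes the identities (\ref{ffedop}) of Theorem \ref{thprfo} that the paper states separately afterwards.
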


By straightforward verifications, one gets the proof of

\begin{theorem}
\label{thprfo} Any Fedosov--Hamilton operator--pairs (\ref{fedopp}) and
related extended ope\-rator--pair (\ref{secextop}) are defined by torsion (%
\ref{at1}) and curvature (\ref{ac1}) following formulas:
\begin{equation}
\left[ \ ^{\phi }\widehat{\mathbf{D}},\ ^{\phi }\mathbf{\delta }\right] =%
\frac{i}{v}ad_{Wick}(^{z\phi }\widehat{\mathcal{T}})\mbox{ and }\ \ ^{\phi }%
\widehat{\mathbf{D}}^{2}=-\frac{i}{v}ad_{Wick}(\ ^{z\phi }\widehat{\mathcal{R%
}}),  \label{ffedop}
\end{equation}%
where $[\cdot ,\cdot ]$ is the $\deg _{a}$--graded commutator of
endomorphisms of $\ \mathbf{^{\ast }}\mathcal{W}\otimes \mathbf{\Lambda }$
and $ad_{Wick}$ is defined via the $\deg _{a}$--graded commutator in $\left(
\ \mathbf{^{\ast }}\mathcal{W}\otimes \mathbf{\Lambda ,\circ }\right) .$
\end{theorem}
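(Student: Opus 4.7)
The plan is to verify the two identities in (\ref{ffedop}) by explicit computation on a generic element $a\otimes \lambda \in \mathbf{^{\ast }}\mathcal{W}\otimes \mathbf{\Lambda }$, exploiting the fact that both sides are N--adapted, $\deg_{a}$--graded derivations, so it suffices to compare them on Wick--algebra generators $v,\ z^{\alpha },\ ^{\phi}\mathbf{e}^{\alpha }$ and on functions of $u$.

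First I would compute $[\ ^{\phi }\widehat{\mathbf{D}},\ ^{\phi }\mathbf{\delta }]$ directly from the definitions (\ref{secextop}) and (\ref{fedopp}). The operator $\ ^{\phi }\mathbf{\delta }$ acts as $\ ^{\phi }\mathbf{e}^{\alpha }\wedge \ ^{z\phi }\mathbf{e}_{\alpha }$, and $\ ^{\phi }\widehat{\mathbf{D}}$ differs from an ordinary $\ ^{\phi }\mathbf{e}^{\alpha }\wedge \ ^{\phi }\mathbf{e}_{\alpha }$ by the connection correction $-u^{\beta }\ ^{\phi }\widehat{\mathbf{\Gamma }}_{\alpha \beta }^{\gamma }\ ^{z\phi }\mathbf{e}_{\alpha }$. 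The partial derivatives $\ ^{z\phi}\mathbf{e}_{\alpha }$ in $z$ commute with $\ ^{\phi }\mathbf{e}^{\beta }$, so the only nontrivial contribution to the graded commutator comes from the connection coefficients applied through the $z$--derivation, producing a $\ ^{\phi }\mathbf{e}^{\alpha }\wedge \ ^{\phi }\mathbf{e}^{\beta }$--valued expression with coefficients given by the antisymmetric part of $\ ^{\phi }\widehat{\mathbf{\Gamma }}$ together with the anholonomy coefficients of the frame (\ref{fder})--(\ref{fdif}). By construction these combine precisely into $\ ^{\phi }\widehat{\mathbf{T}}_{\alpha \beta }^{\tau }$ in the form dictated by (\ref{tcncs}) read in the $\ ^{\ast }\phi$--adapted basis.

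Next I compare this with $\tfrac{i}{v}ad_{Wick}(\ ^{z\phi }\widehat{\mathcal{T}})$. From the Wick product (\ref{fpr}), for any monomial linear in $z^{\gamma }$ one finds $\tfrac{i}{v}ad_{Wick}(z^{\gamma })(a)=-\ ^{\ast }\Lambda ^{\gamma \alpha }\ ^{z\phi }\mathbf{e}_{\alpha }(a)$; substituting the explicit form (\ref{at1}) of $\ ^{z\phi }\widehat{\mathcal{T}}$, contracting $\ ^{\ast }\theta _{\gamma \tau }$ with $\ ^{\ast }\Lambda ^{\gamma \alpha }$, and using that the symmetric $\ ^{\ast }\mathbf{g}$--piece of $\ ^{\ast }\Lambda$ drops out when contracted against the antisymmetric torsion tensor, one recovers exactly the coefficient structure obtained from $[\ ^{\phi }\widehat{\mathbf{D}},\ ^{\phi }\mathbf{\delta }]$. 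This gives the first identity in (\ref{ffedop}).

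For the second identity one computes $\ ^{\phi }\widehat{\mathbf{D}}^{2}$. The $d\circ d=0$ piece drops out, and squaring the $z$--derivation piece produces the composition of two connection coefficient insertions plus a frame anholonomy term. Grouping these, the result is bilinear in $\ ^{\phi }\mathbf{e}^{\alpha }\wedge \ ^{\phi }\mathbf{e}^{\beta }$ with coefficient exactly $\ ^{\phi }\widehat{\mathbf{R}}_{\ \varphi \alpha \beta }^{\tau }z^{\varphi }\ ^{z\phi }\mathbf{e}_{\tau }$, up to the sign expected from the graded structure. Now applying $-\tfrac{i}{v}ad_{Wick}$ to (\ref{ac1}): since $\ ^{z\phi }\widehat{\mathcal{R}}$ is quadratic in $z$, the Wick commutator acts as a first--order derivation in $z$, namely $\tfrac{i}{v}ad_{Wick}(z^{\gamma }z^{\varphi })=-\bigl(\ ^{\ast }\Lambda ^{\gamma \alpha }z^{\varphi }+\ ^{\ast }\Lambda ^{\varphi \alpha }z^{\gamma }\bigr)\ ^{z\phi }\mathbf{e}_{\alpha }$, and again the symmetric $\ ^{\ast }\mathbf{g}$--contribution cancels against the antisymmetry of $\ ^{\phi }\widehat{\mathbf{R}}_{\ \varphi \alpha \beta }^{\tau }$ in the appropriate indices; the overall sign agrees with the minus sign in (\ref{ffedop}).

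The main obstacle I expect is the careful bookkeeping of signs coming from the $\deg_{a}$--graded commutator together with the contributions of the nonholonomy coefficients of the $\ ^{\ast }\phi$--adapted frames (\ref{fder})--(\ref{fdif}): indeed, unlike the coordinate basis, $[\ ^{\phi }\mathbf{e}_{\alpha },\ ^{\phi }\mathbf{e}_{\beta }]$ has nontrivial structure constants analogous to (\ref{anhrel}), and one must check that these frame--commutator terms assemble together with the connection--antisymmetry terms into precisely $\ ^{\phi }\widehat{\mathbf{T}}_{\alpha \beta }^{\tau }$ (resp. $\ ^{\phi }\widehat{\mathbf{R}}_{\ \varphi \alpha \beta }^{\tau }$) rather than into a non--tensorial remainder. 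Once this bookkeeping is settled and the identity $\ ^{\ast }\Lambda ^{\alpha \beta }=\ ^{\ast }\theta ^{\alpha \beta }-i\ ^{\ast }\mathbf{g}^{\alpha \beta }$ is used to eliminate the symmetric contractions, both identities in (\ref{ffedop}) follow. $\Box $
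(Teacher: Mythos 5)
Your overall strategy --- verifying the two identities by direct computation of the graded commutator $\left[\ ^{\phi }\widehat{\mathbf{D}},\ ^{\phi }\mathbf{\delta }\right]$ and of $\ ^{\phi }\widehat{\mathbf{D}}^{2}$, and matching the result against $ad_{Wick}$ applied to (\ref{at1}) and (\ref{ac1}) --- is exactly the ``straightforward verification'' the paper invokes (the paper gives no further detail), and it is the standard Fedosov/Karabegov--Schlichenmaier computation. Your attention to the anholonomy of the frames (\ref{fder})--(\ref{fdif}) assembling with the connection coefficients into the torsion and curvature tensors is also the right point to worry about.

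There is, however, one genuine flaw in the mechanism you describe for the right-hand sides. You write $\frac{i}{v}ad_{Wick}(z^{\gamma })(a)=-\ ^{\ast }\Lambda ^{\gamma \alpha }\ ^{z\phi }\mathbf{e}_{\alpha }(a)$ and then argue that the symmetric $\ ^{\ast }\mathbf{g}$--piece of $\ ^{\ast }\Lambda$ ``drops out when contracted against the antisymmetric torsion tensor.'' Neither half of this is right as stated. From (\ref{fpr}) one gets $z^{\gamma }\circ a=z^{\gamma }a+\frac{iv}{2}\ ^{\ast }\Lambda ^{\gamma \beta }\ ^{z\phi }\mathbf{e}_{\beta }(a)$ and $a\circ z^{\gamma }=az^{\gamma }+\frac{iv}{2}\ ^{\ast }\Lambda ^{\beta \gamma }\ ^{z\phi }\mathbf{e}_{\beta }(a)$, so the \emph{commutator} picks out only the antisymmetric part of $\ ^{\ast }\Lambda$, i.e. $\frac{i}{v}ad_{Wick}(z^{\gamma })(a)=-\ ^{\ast }\theta ^{\gamma \beta }\ ^{z\phi }\mathbf{e}_{\beta }(a)$: the $\ ^{\ast }\mathbf{g}$--part cancels already at the level of $ad_{Wick}$, before any contraction with $\ ^{\phi }\widehat{\mathbf{T}}$ or $\ ^{\phi }\widehat{\mathbf{R}}$. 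The cancellation you invoke instead would require $\ ^{\ast }\mathbf{g}^{\gamma \alpha }\ ^{\ast }\theta _{\gamma \tau }\ ^{\phi }\widehat{\mathbf{T}}_{\alpha \beta }^{\tau }$ to vanish; but $\ ^{\ast }\mathbf{g}^{\gamma \alpha }\ ^{\ast }\theta _{\gamma \tau }$ is essentially the almost complex operator $\ ^{\ast }\mathbf{J}$, and its trace against the torsion (or curvature) is not zero in general --- indeed Lemma \ref{lem1} and the Chern--Weyl form (\ref{aux4}) are built precisely from such nonvanishing contractions $\ ^{\ast }\mathbf{J}_{\tau }^{\ \alpha ^{\prime }}\ ^{\phi }\widehat{\mathcal{R}}_{\ \alpha ^{\prime }}^{\tau }$. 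If you carried out the computation literally as written you would be left with a spurious $\ ^{\ast }\mathbf{g}$--term that does not cancel. Replacing your formula for $ad_{Wick}$ on (bi)linear-in-$z$ elements by the correct one, which only ever produces $\ ^{\ast }\theta ^{\gamma \beta }$ and hence contracts $\ ^{\ast }\theta _{\gamma \tau }$ to a Kronecker delta, repairs the argument and the rest of your outline goes through.
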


The formulas (\ref{ffedop}) can be restated for any metric
compatible d--connecti\-on and $^{\ast }\phi $--connection
structures on $T^{\ast }M.$ Finally, we conclude that the
operators constructed in this section are invariant under
diffeomorphisms and in particular under symplectic morphisms.

\section{Deformation Quantization of Hamilton and Einstein Spaces}

The aim of this section is to provide the main Fedosov's theorems for
Hamilton spaces and show how the Einstein manifolds can be encoded into the
topological structure of such quantized nonholonomic spaces.

\subsection{Fedosov's theorems for connection--pairs}

The theorems will be formulated for the canonical $^{\ast }\phi $%
--connection $\ ^{\phi }\widehat{\mathbf{D}}$ (\ref{ccpc}).

\begin{theorem}
\label{th3a}A Hamilton fundamental function $H(x,p)$ defines a flat Fedosov $%
^{\ast }\phi $--connection
\begin{equation*}
\ \ ^{\phi }\widehat{\mathcal{D}}\doteqdot -\ ^{\phi }\mathbf{\delta }+\
^{\phi }\widehat{\mathbf{D}}-\frac{i}{v}ad_{Wick}(\ ^{\phi }r)
\end{equation*}%
satisfying the condition $\ ^{\phi }\widehat{\mathcal{D}}^{2}=0,$ where the
unique element $\ ^{\phi }r\in $ $\ ^{\ast }\mathcal{W}\otimes \mathbf{%
\Lambda ,}$ $\deg _{a}(\ ^{\phi }r)=1,$ $\ ^{\phi }\mathbf{\delta }^{-1}\
^{\phi }r=0,$ is a solution of equation
\begin{equation*}
\ ^{\phi }\mathbf{\delta }\ ^{\phi }r=\ ^{\phi }\widehat{\mathcal{T}}\ +\
^{\phi }\widehat{\mathcal{R}}+\ ^{\phi }\widehat{\mathbf{D}}\ ^{\phi }r-%
\frac{i}{v}\ ^{\phi }r\circ \ ^{\phi }r.
\end{equation*}%
The solution for $\ ^{\phi }r$ can be computed recursively with respect to
the total degree $Deg$ in the form%
\begin{eqnarray*}
\ ^{\phi }r^{(0)} &=&\ ^{\phi }r^{(1)}=0,\ ^{\phi }r^{(2)}=\ ^{\phi }\mathbf{%
\delta }^{-1}\ ^{\phi }\widehat{\mathcal{T}}, \\
\ ^{\phi }r^{(3)} &=&\ ^{\phi }\mathbf{\delta }^{-1}\left( \ ^{\phi }%
\widehat{\mathcal{R}}+\ ^{\phi }\widehat{\mathbf{D}}\ ^{\phi }r^{(2)}-\frac{i%
}{v}\ ^{\phi }r^{(2)}\circ \ ^{\phi }r^{(2)}\right) , \\
\ ^{\phi }r^{(k+3)} &=&\ ^{\phi }\mathbf{\delta }^{-1}\left( \ ^{\phi }%
\widehat{\mathbf{D}}\ ^{\phi }r^{(k+2)}-\frac{i}{v}\sum\limits_{l=0}^{k}\
^{\phi }r^{(l+2)}\circ \ ^{\phi }r^{(l+2)}\right) ,k\geq 1,
\end{eqnarray*}%
where $a^{(k)}$ is the $Deg$--homogeneous component of degree $k$ of an
element $a\in $ $\ \ ^{\ast }\mathcal{W}\otimes \mathbf{\Lambda }.$
\end{theorem}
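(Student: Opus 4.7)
The plan is to mirror Fedosov's classical argument, but with $\ ^{\phi }\widehat{\mathbf{D}}$ and $\ ^{\phi }\mathbf{\delta }$ in place of the usual symplectic connection and Koszul operator, using the two commutator identities of Theorem \ref{thprfo} as essentially the only inputs about the geometry of $\ ^{\ast }\phi $.

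First I would compute $(\ ^{\phi }\widehat{\mathcal{D}})^{2}$ by expanding
\begin{equation*}
\ ^{\phi }\widehat{\mathcal{D}}=-\ ^{\phi }\mathbf{\delta }+\ ^{\phi }\widehat{\mathbf{D}}-\tfrac{i}{v}ad_{Wick}(\ ^{\phi }r)
\end{equation*}
and reducing the six resulting terms. Since $\ ^{\phi }\mathbf{\delta }$ is a fibrewise Koszul-type differential, $(\ ^{\phi }\mathbf{\delta })^{2}=0$. The mixed brackets are controlled by Theorem \ref{thprfo}, namely $[\ ^{\phi }\widehat{\mathbf{D}},\ ^{\phi }\mathbf{\delta }]=\tfrac{i}{v}ad_{Wick}(\ ^{z\phi }\widehat{\mathcal{T}})$ and $(\ ^{\phi }\widehat{\mathbf{D}})^{2}=-\tfrac{i}{v}ad_{Wick}(\ ^{z\phi }\widehat{\mathcal{R}})$. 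Using that $ad_{Wick}$ intertwines the Wick $\circ $-bracket with the operator commutator and that both $\ ^{\phi }\mathbf{\delta }$ and $\ ^{\phi }\widehat{\mathbf{D}}$ are $\deg _{a}$-graded derivations of $(\ ^{\ast }\mathcal{W}\otimes \mathbf{\Lambda },\circ )$, I expect the cross terms to collapse into
\begin{equation*}
(\ ^{\phi }\widehat{\mathcal{D}})^{2}=-\tfrac{i}{v}ad_{Wick}\!\left( \ ^{\phi }\widehat{\mathcal{T}}+\ ^{\phi }\widehat{\mathcal{R}}+\ ^{\phi }\widehat{\mathbf{D}}\ ^{\phi }r-\ ^{\phi }\mathbf{\delta }\ ^{\phi }r-\tfrac{i}{v}\ ^{\phi }r\circ \ ^{\phi }r\right) .
\end{equation*}
Flatness then reduces to the argument of $ad_{Wick}$ being central in the fibre algebra, i.e.\ a scalar form on $T^{\ast }M$; choosing the scalar to vanish gives precisely the Fedosov equation in the statement.

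Second, I would convert that equation into a fixed-point problem by applying $\ ^{\phi }\mathbf{\delta }^{-1}$ and invoking the Hodge-type decomposition $\mathrm{id}=\ ^{\phi }\mathbf{\delta }\ ^{\phi }\mathbf{\delta }^{-1}+\ ^{\phi }\mathbf{\delta }^{-1}\ ^{\phi }\mathbf{\delta }+\sigma $. The gauge $\ ^{\phi }\mathbf{\delta }^{-1}\ ^{\phi }r=0$, combined with $\deg _{a}(\ ^{\phi }r)=1$ (so $\sigma (\ ^{\phi }r)=0$), turns the equation into
\begin{equation*}
\ ^{\phi }r=\ ^{\phi }\mathbf{\delta }^{-1}\!\left( \ ^{\phi }\widehat{\mathcal{T}}+\ ^{\phi }\widehat{\mathcal{R}}+\ ^{\phi }\widehat{\mathbf{D}}\ ^{\phi }r-\tfrac{i}{v}\ ^{\phi }r\circ \ ^{\phi }r\right) .
\end{equation*}
Since $\ ^{\phi }\mathbf{\delta }^{-1}$ raises the total grading $Deg=2\deg _{v}+\deg _{s}$ by one, while $\ ^{\phi }\widehat{\mathbf{D}}$ preserves $Deg$ and $\circ $ is additive in $Deg$, projecting onto $Deg$-homogeneous components yields a recursion in which $\ ^{\phi }r^{(k+3)}$ is determined purely by strictly lower-degree pieces of $\ ^{\phi }r$. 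The starting conditions $\ ^{\phi }r^{(0)}=\ ^{\phi }r^{(1)}=0$ come from the fact that $\ ^{\phi }\widehat{\mathcal{T}}$ and $\ ^{\phi }\widehat{\mathcal{R}}$, as given by (\ref{at1})--(\ref{ac1}), have minimum $Deg$ equal to $2$, so the inhomogeneous term feeds the recursion only from $\ ^{\phi }r^{(2)}$ on. Uniqueness is automatic because each recursive step is an algebraic assignment, and $v$-adic convergence is immediate from $Deg$-homogeneity.

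The main technical obstacle is the first step: tracking the $\deg _{a}$-graded signs carefully enough to verify that the cross terms in $(\ ^{\phi }\widehat{\mathcal{D}})^{2}$ really do collect into a single $ad_{Wick}$ of a scalar expression. In particular, one needs a Bianchi-type identity of the form $\ ^{\phi }\widehat{\mathbf{D}}\ ^{z\phi }\widehat{\mathcal{R}}+\ ^{\phi }\mathbf{\delta }\ ^{z\phi }\widehat{\mathcal{R}}=\cdots $ (coming from the associativity of iterated applications of $\ ^{\phi }\widehat{\mathbf{D}}$ and the definition of $\ ^{z\phi }\widehat{\mathcal{R}}$ via Proposition \ref{prthprfo}) to absorb the would-be cubic term between $\ ^{\phi }\widehat{\mathbf{D}}$ and $ad_{Wick}(\ ^{\phi }r)$. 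Once this algebraic reduction is secured, the recursion is pure bookkeeping in the $Deg$-filtration, and the construction runs exactly parallel to the standard Fedosov--Karabegov proof, with the $\ ^{\ast }\phi $-adapted frames (\ref{fder})--(\ref{fdif}) playing the role of symplectic frames.
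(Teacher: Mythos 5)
Your overall route is the one the paper actually intends: its printed ``proof'' of Theorem \ref{th3a} is only a pointer to the standard Fedosov--Karabegov component calculus, and your plan --- expand $(\ ^{\phi }\widehat{\mathcal{D}})^{2}$ using the two identities (\ref{ffedop}) of Theorem \ref{thprfo}, reduce flatness to the equation for $\ ^{\phi }r$, then solve by $Deg$--recursion --- is exactly that calculus transplanted to the $^{\ast }\phi $--adapted frames (\ref{fder})--(\ref{fdif}). However, you have mislocated the one place where a genuinely nontrivial identity is needed, and in doing so you omit it. The collapse of the cross terms in $(\ ^{\phi }\widehat{\mathcal{D}})^{2}$ requires \emph{no} Bianchi-type identity: it follows purely from $(\ ^{\phi }\mathbf{\delta })^{2}=0$, from (\ref{ffedop}), from the fact that $\ ^{\phi }\mathbf{\delta }$ and $\ ^{\phi }\widehat{\mathbf{D}}$ are $\deg _{a}$--graded derivations of $(\ ^{\ast }\mathcal{W}\otimes \mathbf{\Lambda },\circ )$, so that $[\ ^{\phi }\mathbf{\delta },ad_{Wick}(\ ^{\phi }r)]=ad_{Wick}(\ ^{\phi }\mathbf{\delta }\ ^{\phi }r)$ and similarly for $\ ^{\phi }\widehat{\mathbf{D}}$, and from $ad_{Wick}(\ ^{\phi }r)^{2}=ad_{Wick}(\ ^{\phi }r\circ \ ^{\phi }r)$ for $\deg _{a}(\ ^{\phi }r)=1$. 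The Bianchi identities $\ ^{\phi }\mathbf{\delta }\ ^{z\phi }\widehat{\mathcal{T}}=0,$ $\ ^{\phi }\mathbf{\delta }\ ^{z\phi }\widehat{\mathcal{R}}=\ ^{\phi }\widehat{\mathbf{D}}\ ^{z\phi }\widehat{\mathcal{T}},$ $\ ^{\phi }\widehat{\mathbf{D}}\ ^{z\phi }\widehat{\mathcal{R}}=0$ (obtained by applying the graded Jacobi identity to (\ref{ffedop})) are indispensable in the step you skip entirely: your recursion produces the unique solution of the fixed-point equation $\ ^{\phi }r=\ ^{\phi }\mathbf{\delta }^{-1}(\ ^{\phi }\widehat{\mathcal{T}}+\ ^{\phi }\widehat{\mathcal{R}}+\ ^{\phi }\widehat{\mathbf{D}}\ ^{\phi }r-\frac{i}{v}\ ^{\phi }r\circ \ ^{\phi }r)$, but since $\ ^{\phi }\mathbf{\delta }\circ \ ^{\phi }\mathbf{\delta }^{-1}$ is not the identity this does \emph{not} by itself yield the equation $\ ^{\phi }\mathbf{\delta }\ ^{\phi }r=\ ^{\phi }\widehat{\mathcal{T}}+\ ^{\phi }\widehat{\mathcal{R}}+\ ^{\phi }\widehat{\mathbf{D}}\ ^{\phi }r-\frac{i}{v}\ ^{\phi }r\circ \ ^{\phi }r$ asserted in the theorem. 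One must set $A\doteqdot \ ^{\phi }\widehat{\mathcal{T}}+\ ^{\phi }\widehat{\mathcal{R}}+\ ^{\phi }\widehat{\mathbf{D}}\ ^{\phi }r-\frac{i}{v}\ ^{\phi }r\circ \ ^{\phi }r-\ ^{\phi }\mathbf{\delta }\ ^{\phi }r$, use the Bianchi identities and the Hodge-type decomposition to show that $A$ satisfies the homogeneous equation $A=\ ^{\phi }\mathbf{\delta }^{-1}\bigl( \ ^{\phi }\widehat{\mathbf{D}}A-\frac{i}{v}ad_{Wick}(\ ^{\phi }r)A\bigr) $, and conclude $A=0$ by induction on $Deg$. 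Without this the flatness claim $\ ^{\phi }\widehat{\mathcal{D}}^{2}=0$ is not established.

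A second, minor slip in the bookkeeping: by (\ref{at1}) the term $\ ^{z\phi }\widehat{\mathcal{T}}$ is linear in $z$, hence has $Deg=1$ (not $2$); it is $\ ^{\phi }\mathbf{\delta }^{-1}\ ^{z\phi }\widehat{\mathcal{T}}$ that has $Deg=2$, which is the actual reason the recursion starts with $\ ^{\phi }r^{(2)}$ and why $\ ^{\phi }r^{(0)}=\ ^{\phi }r^{(1)}=0$. This does not affect the structure of your argument, but the degree count should be stated correctly since the entire recursion hinges on it.
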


\begin{proof}
It follows from a local component calculus with $^{\ast }\phi $--adapted
coefficients of $\ ^{\phi }\widehat{\mathbf{D}}$, see similar considerations
in \cite{fed1,fed2,karabeg1}.$\square $
\end{proof}

\vskip5pt

For Hamilton spaces, we can define a canonical star--product. By an explicit
construction, we prove:

\begin{theorem}
\label{th3b}A $^{\ast }\phi $--adapted star--product for Hamilton spaces is
defined on $C^{\infty }(T^{\ast }M)[[v]]$ by formula
\begin{equation*}
\ ^{1}f\ast \ ^{2}f\doteqdot \sigma (\tau (\ ^{1}f))\circ \sigma (\tau (\
^{2}f)),
\end{equation*}%
where the projection $\sigma :\ ^{\ast }\mathcal{W}\rightarrow C^{\infty
}(T^{\ast }M)[[v]]$ onto the part of $\deg _{s}$--degree zero is a bijection
and the inverse map $\tau :C^{\infty }(T^{\ast }M)[[v]]\rightarrow \ ^{\ast }%
\mathcal{W}$ $\ $can be calculated recursively w.r..t the total degree $Deg,$%
\begin{eqnarray*}
\tau (f)^{(0)} &=&f\mbox{\ and, for \ }k\geq 0, \\
\tau (f)^{(k+1)} &=&\ \check{\delta}^{-1}\left( \ ^{\phi }\widehat{\mathbf{D}%
}\tau (f)^{(k)}-\frac{i}{v}\sum\limits_{l=0}^{k}ad_{Wick}(\ ^{\phi
}r^{(l+2)})(\tau (f)^{(k-l)})\right) .
\end{eqnarray*}
\end{theorem}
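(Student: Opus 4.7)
The plan is to follow the standard Fedosov strategy, but using the flat $^{\ast}\phi$--connection $\ ^{\phi}\widehat{\mathcal{D}}$ already produced in Theorem \ref{th3a}. The basic idea is that the subspace
\[
\ ^{\ast}\mathcal{W}_{\flat} \doteqdot \{\, a \in \ ^{\ast}\mathcal{W} \ : \ \ ^{\phi}\widehat{\mathcal{D}} a = 0 \,\}
\]
of flat sections is a $\circ$--subalgebra (because $\ ^{\phi}\widehat{\mathcal{D}}$ is a graded derivation of $\circ$ and $\ ^{\phi}\widehat{\mathcal{D}}^{2}=0$), so the Wick product on $\ ^{\ast}\mathcal{W}$ restricts to an associative product on $\ ^{\ast}\mathcal{W}_{\flat}$. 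The whole theorem then amounts to establishing an isomorphism $\sigma : \ ^{\ast}\mathcal{W}_{\flat} \to C^{\infty}(T^{\ast}M)[[v]]$ of filtered $v$--modules and transporting $\circ$ along its inverse $\tau$.

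First I would show that the map $\sigma$, defined as the projection onto $\deg_{s}=0$, is a bijection when restricted to $\ ^{\ast}\mathcal{W}_{\flat}$. For existence of the inverse, I construct $\tau(f)$ by induction on the total degree $Deg$: set $\tau(f)^{(0)}=f$, and define $\tau(f)^{(k+1)}$ by the displayed recursion in the theorem. Because $\ ^{\phi}\mathbf{\delta}^{-1}$ raises $\deg_{s}$ by one, the recursion defines each $Deg$--homogeneous component exactly once, so $\tau(f)$ is a well--defined formal series. Using the identity $a = (\ ^{\phi}\mathbf{\delta}\ \ ^{\phi}\mathbf{\delta}^{-1} + \ ^{\phi}\mathbf{\delta}^{-1}\ \ ^{\phi}\mathbf{\delta} + \sigma)(a)$ together with the flatness equation of Theorem \ref{th3a}, one verifies inductively that $\ ^{\phi}\widehat{\mathcal{D}}\tau(f)=0$; the condition $\ ^{\phi}\mathbf{\delta}^{-1}\ ^{\phi}r=0$ and the normalization $\ ^{\phi}\mathbf{\delta}^{-1}\tau(f)^{(k+1)}=0$ ensure uniqueness, which yields $\tau \circ \sigma = \mathrm{id}$ on $\ ^{\ast}\mathcal{W}_{\flat}$ and $\sigma \circ \tau = \mathrm{id}$ on $C^{\infty}(T^{\ast}M)[[v]]$.

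Next I would define $\ ^{1}f \ast \ ^{2}f \doteqdot \sigma(\tau(\ ^{1}f)\circ \tau(\ ^{2}f))$ and verify it is a star product in the sense of (\ref{starp}). Associativity follows from associativity of $\circ$ together with the fact that $\tau(\ ^{1}f)\circ \tau(\ ^{2}f)$ is flat, so applying $\tau$ to it returns the same element. The $v$--linearity and $v$--adic continuity follow from the corresponding properties of $\circ$. For the semiclassical limit I would expand the recursion to order $v$: the leading contribution to $\tau(f)$ beyond $f$ comes from $\ ^{\phi}\mathbf{\delta}^{-1}\ ^{\phi}\widehat{\mathbf{D}}f = z^{\alpha}\ ^{\phi}\mathbf{e}_{\alpha}(f)$, and substituting into $\circ$ reproduces $\ _{0}C(\ ^{1}f,\ ^{2}f) = \ ^{1}f\ ^{2}f$ and $\ _{1}C(\ ^{1}f,\ ^{2}f)-\ _{1}C(\ ^{2}f,\ ^{1}f) = i\{\ ^{1}f,\ ^{2}f\}$ via the symplectic part of $\ ^{\ast}\mathbf{\Lambda}^{\alpha\beta}$.

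The main obstacle is the inductive verification that $\ ^{\phi}\widehat{\mathcal{D}}\tau(f)^{(k+1)} = 0$ modulo $Deg \geq k+2$. One must carefully combine the Wick derivation property of $\ ^{\phi}\widehat{\mathbf{D}}$, the identity $[\ ^{\phi}\mathbf{\delta},\ ^{\phi}\widehat{\mathbf{D}}] = \tfrac{i}{v}\,ad_{Wick}(\ ^{z\phi}\widehat{\mathcal{T}})$ from Theorem \ref{thprfo}, and the recursive expression for $\ ^{\phi}r$ in Theorem \ref{th3a}, so that all torsion, curvature, and $\ ^{\phi}r \circ \ ^{\phi}r$ contributions cancel against $\ ^{\phi}\mathbf{\delta}\ \ ^{\phi}\mathbf{\delta}^{-1}$ terms. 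The bookkeeping is exactly what makes the Fedosov argument work, and in our setting it has to be done with respect to the oblique frames (\ref{fder})--(\ref{fdif}) rather than the N--adapted ones, so one must be careful to use $\ ^{\phi}\mathbf{e}_{\alpha}$ throughout and to interpret the symbol $\check{\delta}^{-1}$ appearing in the statement as $\ ^{\phi}\mathbf{\delta}^{-1}$. Once this verification is done the rest is essentially formal.
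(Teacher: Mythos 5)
Your proposal is correct and follows the standard Fedosov--Karabegov--Schlichenmaier construction that the paper itself invokes without proof (its entire argument for Theorem \ref{th3b} is the phrase ``by an explicit construction''), so your outline --- flat sections of $\ ^{\phi }\widehat{\mathcal{D}}$ form a $\circ$--subalgebra, $\sigma$ restricted to it is bijective with inverse given by the stated recursion, and the product is transported along $\tau$ --- is exactly the intended route. Note that you have tacitly corrected the displayed formula: the paper writes $\ ^{1}f\ast \ ^{2}f\doteqdot \sigma (\tau (\ ^{1}f))\circ \sigma (\tau (\ ^{2}f))$, which collapses to the pointwise product since $\sigma \circ \tau =\mathrm{id}$, whereas your $\sigma \left( \tau (\ ^{1}f)\circ \tau (\ ^{2}f)\right) $ is the correct Fedosov definition, and your reading of the otherwise undefined $\check{\delta}^{-1}$ as $\ ^{\phi }\mathbf{\delta }^{-1}$ is likewise the only sensible one.
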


Let $\ ^{f}X$ the Hamiltonian vector field corresponding to a function $f\in
C^{\infty }(T^{\ast }M)$ on space $(T^{\ast }M,\ ^{\ast }\theta )$ and
consider the antisymmetric part
\begin{equation*}
\ ^{-}C(\ ^{1}f,\ ^{2}f)\ \doteqdot \frac{1}{2}\left( C(\ ^{1}f,\ ^{2}f)-C(\
^{2}f,\ ^{1}f)\right)
\end{equation*}
of bilinear operator $C(\ ^{1}f,\ ^{2}f).$ A star--product
(\ref{starp}) is normalized if $\ _{1}C(\ ^{1}f,$ $\
^{2}f)=\frac{i}{2}\{\ ^{1}f,\ ^{2}f\},$ where $\{\cdot ,\cdot \}$
is the Poisson bracket, see (\ref{poisbr}). For the normalized
$\ast ,$ the bilinear operator $\ _{2}^{-}C$ defines a de
Rham--Chevalley 2--cocycle, when there is a unique closed 2--form
$\ \
^{\phi }\varkappa $ such that%
\begin{equation}
\ _{2}C(\ ^{1}f,\ ^{2}f)=\frac{1}{2}\ \ ^{\phi }\varkappa (\ ^{f_{1}}X,\
^{f_{2}}X)  \label{c2}
\end{equation}%
for all $\ ^{1}f,\ ^{2}f\in C^{\infty }(T^{\ast }M).$ This is used to
introduce $c_{0}(\ast )\doteqdot \lbrack \ ^{\phi }\varkappa ]$ as the
equivalence class.

Computing $\ _{2}C$ from (\ref{c2}) and using the Theorem \ref{th3b}, we get
the proof for

\begin{lemma}
\label{lem1}The unique 2--form defined by the unique canonical
connection-\--pair can be computed
\begin{equation*}
\ ^{\phi }\varkappa =-\frac{i}{8}\ ^{\ast }\mathbf{J}_{\tau }^{\ \alpha
^{\prime }}\ ^{\phi }\widehat{\mathcal{R}}_{\ \alpha ^{\prime }}^{\tau }-%
\frac{i}{6}d\left( \ ^{\ast }\mathbf{J}_{\tau }^{\ \alpha ^{\prime }}\
^{\phi }\widehat{\mathbf{T}}_{\ \alpha ^{\prime }\beta }^{\tau }\ \ ^{\phi }%
\mathbf{e}^{\beta }\right) .
\end{equation*}
\end{lemma}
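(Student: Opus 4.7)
The plan is to compute $\ _{2}C(\ ^{1}f,\ ^{2}f)$ directly from the star-product formula in Theorem \ref{th3b}, antisymmetrize, and read off the 2-form $\ ^{\phi}\varkappa$ via the defining relation (\ref{c2}). The computation is a $\mathcal{L}$-dual / connection-pair analogue of the standard Karabegov--Fedosov calculation, but performed with the $^{\ast}\phi$-adapted basis $\ ^{\phi}\mathbf{e}^{\alpha}$ and torsion/curvature forms $\ ^{\phi}\widehat{\mathcal{T}}$, $\ ^{\phi}\widehat{\mathcal{R}}$ from Proposition \ref{prthprfo}.

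First, I would expand $\tau(f)$ to the order needed to extract the $v^{2}$ coefficient of $\ ^{1}f\ast\ ^{2}f$. Using the recursion of Theorem \ref{th3b} together with the explicit values $\ ^{\phi}r^{(0)}=\ ^{\phi}r^{(1)}=0$, $\ ^{\phi}r^{(2)}=\ ^{\phi}\mathbf{\delta}^{-1}\ ^{\phi}\widehat{\mathcal{T}}$ and $\ ^{\phi}r^{(3)}=\ ^{\phi}\mathbf{\delta}^{-1}(\ ^{\phi}\widehat{\mathcal{R}}+\ ^{\phi}\widehat{\mathbf{D}}\ ^{\phi}r^{(2)})$ from Theorem \ref{th3a}, I write $\tau(f)=f+z^{\alpha}\ ^{\phi}\mathbf{e}_{\alpha}(f)+\tau(f)^{(2)}+\cdots$ where the quadratic term splits into a purely $\deg_{s}$-quadratic part coming from $\ ^{\phi}\widehat{\mathbf{D}}$ and a torsion correction proportional to $\ ^{\phi}r^{(2)}\cdot f$.

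Second, I would substitute into the Wick product (\ref{fpr}) and project with $\sigma$. The terms of total weight $v^{2}$ come from (a) a double contraction of two $z$-derivatives inside $\exp(i\frac{v}{2}\ ^{\ast}\mathbf{\Lambda})$, which produces the naive symbol-calculus contribution, (b) a single $v$ contraction involving one $\tau(f)^{(2)}$-factor, which reinstates the curvature $\ ^{\phi}\widehat{\mathcal{R}}$ contracted with $\ ^{\ast}\mathbf{\Lambda}$, and (c) torsion pieces produced by $\ ^{\phi}r^{(2)}$. Splitting $\ ^{\ast}\mathbf{\Lambda}=\ ^{\ast}\theta-i\ ^{\ast}\mathbf{g}$ and using $\ ^{\ast}\theta_{\tau\alpha'}=\ ^{\ast}\mathbf{g}_{\tau\beta}\ ^{\ast}\mathbf{J}^{\beta}{}_{\alpha'}$ (compatibility of the almost Kähler triple, Proposition \ref{propctf}), the coefficients in front of $\ ^{\phi}\widehat{\mathcal{R}}$ and $\ ^{\phi}\widehat{\mathbf{T}}$ reorganize into contractions with $\ ^{\ast}\mathbf{J}$.

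Third, I would antisymmetrize in $(\ ^{1}f,\ ^{2}f)$. The symmetric pieces drop out, and each remaining term acquires the structure $\frac{1}{2}\omega(\ ^{f_{1}}X,\ ^{f_{2}}X)$ for an explicit 2-form $\omega$. The curvature contribution produces exactly $-\frac{i}{8}\ ^{\ast}\mathbf{J}_{\tau}{}^{\alpha'}\ ^{\phi}\widehat{\mathcal{R}}_{\ \alpha'}^{\tau}$; the torsion contribution, after using $\ ^{\phi}\widehat{\mathbf{D}}\ ^{\phi}\widehat{\mathcal{T}}=$ Bianchi-type identity and the fact that $\ ^{\phi}\mathbf{\delta}^{-1}$ is the homotopy inverse of $\ ^{\phi}\mathbf{\delta}$ (so that $\ ^{\phi}\mathbf{\delta}$ acts as exterior differentiation on the $\deg_{s}=0$ projection), assembles into an exact form $-\frac{i}{6}d(\ ^{\ast}\mathbf{J}_{\tau}{}^{\alpha'}\ ^{\phi}\widehat{\mathbf{T}}_{\ \alpha'\beta}^{\tau}\ ^{\phi}\mathbf{e}^{\beta})$. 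Uniqueness of $\ ^{\phi}\varkappa$ (granted by the preceding discussion) then identifies this combination with $\ ^{\phi}\varkappa$.

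The main obstacle is the bookkeeping of the torsion terms: unlike the torsion-free Fedosov setting, $\ ^{\phi}\widehat{\mathbf{D}}$ has a nontrivial $\ ^{\phi}\widehat{\mathcal{T}}$ induced by the nonholonomy of the $^{\ast}\phi$-adapted frame, and one must check that the various $\ ^{\phi}\mathbf{\delta}^{-1}$-terms produced by the recursion collapse to a single exact contribution of the stated form (the coefficient $-i/6$ is the delicate point, arising from combinatorial factors in the Wick expansion together with the factor $1/(p+q)$ in (\ref{fedopp})). All remaining checks reduce to contracting indices with $\ ^{\ast}\mathbf{J}$ and $\ ^{\ast}\theta$ and invoking $\ ^{\phi}\widehat{\mathbf{D}}\ ^{\ast}\mathbf{J}=0$, which follows from the metric- and almost-symplectic compatibility of $\ ^{\phi}\widehat{\mathbf{D}}$ established in the previous section.
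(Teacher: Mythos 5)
Your proposal follows essentially the same route as the paper, whose entire justification for Lemma \ref{lem1} is the one-line remark that one computes $\ _{2}C$ from (\ref{c2}) using Theorem \ref{th3b}; you simply flesh out that Karabegov--Schlichenmaier-type computation (expansion of $\tau(f)$ via the recursion, extraction of the $v^{2}$ coefficient of the Wick product, antisymmetrization, and reorganization of the $\ ^{\ast }\mathbf{\Lambda }$-contractions into $\ ^{\ast }\mathbf{J}$-contractions) in the $^{\ast }\phi $-adapted frame. The only minor slip is dropping the $-\frac{i}{v}\ ^{\phi }r^{(2)}\circ \ ^{\phi }r^{(2)}$ term from $\ ^{\phi }r^{(3)}$ as given in Theorem \ref{th3a}, which should at least be checked not to contribute at the relevant order.
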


Let us define the canonical class $\ ^{\phi }\varepsilon ,$ for the
splitting (\ref{cpspl}). We can perform a distinguished complexification of
such second order tangent bundles in the form $T_{\mathbb{C}}\left( \ ^{\phi
}TT^{\ast }M\right) =T_{\mathbb{C}}\left( hT^{\ast }M\right) \oplus T_{%
\mathbb{C}}\left( wT^{\ast }M\right) $ and introduce $\ ^{\phi }\varepsilon $
as the first Chern class of the distributions $T_{\mathbb{C}}^{\prime
}\left( \ ^{\phi }TT^{\ast }M\right) =T_{\mathbb{C}}^{\prime }\left(
hT^{\ast }M\right) \oplus T_{\mathbb{C}}^{\prime }\left( wT^{\ast }M\right) $
of couples of vectors of type $(1,0)$ both for the h-- and w--parts. We
compute $\ ^{\phi }\varepsilon $ using $\ ^{\phi }\widehat{\mathbf{D}}$ and
the h- and v--projections $h\Pi =\frac{1}{2}(Id_{h}-iJ_{h})$ and $v\Pi =%
\frac{1}{2}(Id_{v}-iJ_{v}),$ where $Id_{h}$ and $Id_{v}$ are respective
identity operators and $J_{h}$ and $J_{v}$ are almost complex operators,
which are projection operators onto corresponding $(1,0)$--subspaces.
Introducing the  matrix $\left( h\Pi ,v\Pi \right) \ \widehat{\mathcal{R}}%
\left( h\Pi ,v\Pi \right) ^{T},$ where $(...)^{T}$ means transposition, as
the curvature matrix of the N--adapted restriction of $\ ^{\phi }\widehat{%
\mathbf{D}}$ to $T_{\mathbb{C}}^{\prime }\left( \ ^{\phi }TT^{\ast }M\right)
,$ we compute the closed Chern--Weyl form
\begin{eqnarray}
\ ^{\phi }\gamma &=&-iTr\left[ \left( h\Pi ,v\Pi \right) \ ^{\phi }\widehat{%
\mathcal{R}}\left( h\Pi ,v\Pi \right) ^{T}\right]  \label{aux4} \\
&=&-iTr\left[ \left( h\Pi ,v\Pi \right) \ ^{\phi }\widehat{\mathcal{R}}%
\right] =-\frac{1}{4}\ ^{\ast }\mathbf{J}_{\tau }^{\ \alpha ^{\prime }}\
^{\phi }\widehat{\mathcal{R}}_{\ \alpha ^{\prime }}^{\tau }.  \notag
\end{eqnarray}%
Using the canonical class $\ ^{\phi }\varepsilon \doteqdot \lbrack \ ^{\phi
}\gamma ],$  we prove:

\begin{theorem}
\label{th3c}The zero--degree cohomology coefficient $c_{0}(\ast )$ for the
almost K\"{a}hler model of a Hamilton space$\mathbf{\ \ ^{\ast }K}^{2n},$
defined by a Hamilton fundamental function $H(x,p)$ is computed $c_{0}(\ast
)=-(1/2i)\ ^{\phi }\varepsilon .$
\end{theorem}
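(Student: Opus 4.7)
The plan is to show that the theorem follows essentially from combining Lemma~\ref{lem1} with the definition of the canonical class $\ ^{\phi}\varepsilon$ given via the Chern--Weil form (\ref{aux4}), once one passes to de Rham cohomology.

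First I would recall that by definition $c_{0}(\ast)=[\ ^{\phi}\varkappa]$, where $\ ^{\phi}\varkappa$ is the closed 2--form representing the antisymmetric part $\ _{2}^{-}C$ of the second--order bilinear operator in the normalized star--product of Theorem~\ref{th3b}. Lemma~\ref{lem1} provides the explicit representative
\begin{equation*}
\ ^{\phi}\varkappa=-\frac{i}{8}\ ^{\ast}\mathbf{J}_{\tau}^{\ \alpha^{\prime}}\ ^{\phi}\widehat{\mathcal{R}}_{\ \alpha^{\prime}}^{\tau}-\frac{i}{6}d\left(\ ^{\ast}\mathbf{J}_{\tau}^{\ \alpha^{\prime}}\ ^{\phi}\widehat{\mathbf{T}}_{\ \alpha^{\prime}\beta}^{\tau}\ ^{\phi}\mathbf{e}^{\beta}\right),
\end{equation*}
so after taking cohomology classes the exact term drops out and
\begin{equation*}
\ c_{0}(\ast)=[\ ^{\phi}\varkappa]=-\frac{i}{8}\left[\ ^{\ast}\mathbf{J}_{\tau}^{\ \alpha^{\prime}}\ ^{\phi}\widehat{\mathcal{R}}_{\ \alpha^{\prime}}^{\tau}\right].
\end{equation*}

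Next I would invoke formula (\ref{aux4}), which, after the N--adapted trace computation using the h-- and w--projectors $h\Pi,v\Pi$ built from the almost complex structure $\ ^{\ast}\mathbf{J}$, identifies the Chern--Weil representative as $\ ^{\phi}\gamma=-\tfrac{1}{4}\ ^{\ast}\mathbf{J}_{\tau}^{\ \alpha^{\prime}}\ ^{\phi}\widehat{\mathcal{R}}_{\ \alpha^{\prime}}^{\tau}$ and hence defines $\ ^{\phi}\varepsilon=[\ ^{\phi}\gamma]$. Substituting $\ ^{\ast}\mathbf{J}_{\tau}^{\ \alpha^{\prime}}\ ^{\phi}\widehat{\mathcal{R}}_{\ \alpha^{\prime}}^{\tau}=-4\ ^{\phi}\gamma$ into the previous display yields
\begin{equation*}
c_{0}(\ast)=-\frac{i}{8}\cdot(-4)[\ ^{\phi}\gamma]=\frac{i}{2}\ ^{\phi}\varepsilon=-\frac{1}{2i}\ ^{\phi}\varepsilon,
\end{equation*}
which is the asserted formula.

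The only nontrivial step is the bookkeeping behind the two representatives: one must check that the canonical $\ ^{\ast}\phi$--connection $\ ^{\phi}\widehat{\mathbf{D}}$ used in Lemma~\ref{lem1} to compute $\ ^{\phi}\varkappa$ is the same N--adapted symplectic, metric compatible connection whose restriction to $T_{\mathbb{C}}^{\prime}(\ ^{\phi}TT^{\ast}M)$ underlies the Chern--Weil form (\ref{aux4}). This is guaranteed by the uniqueness statement in the preceding theorem on the canonical $\ ^{\ast}\phi$--connection $\ ^{\phi}\widehat{\mathbf{D}}$ with coefficients (\ref{ccpc}). Once that identification is in place, the sign and factor tracking above closes the argument; the main potential obstacle is verifying that the trace in the definition of $\ ^{\phi}\gamma$ really produces the combination $\ ^{\ast}\mathbf{J}_{\tau}^{\ \alpha^{\prime}}\ ^{\phi}\widehat{\mathcal{R}}_{\ \alpha^{\prime}}^{\tau}$, which follows from the characterization of the projectors $h\Pi,v\Pi$ in terms of $\ ^{\ast}\mathbf{J}$ and the fact that $\ ^{\phi}\widehat{\mathcal{R}}$ takes values in the symplectic Lie algebra (so only its $\ ^{\ast}\mathbf{J}$--trace contributes). $\square$
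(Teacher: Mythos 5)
Your proposal is correct and follows essentially the same route as the paper: the theorem is obtained by passing Lemma~\ref{lem1} to cohomology (so the exact term $-\frac{i}{6}d(\cdot)$ drops), identifying $\ ^{\ast}\mathbf{J}_{\tau}^{\ \alpha^{\prime}}\ ^{\phi}\widehat{\mathcal{R}}_{\ \alpha^{\prime}}^{\tau}=-4\ ^{\phi}\gamma$ from (\ref{aux4}), and tracking the factor $-\frac{i}{8}\cdot(-4)=\frac{i}{2}=-\frac{1}{2i}$. Your additional remark about checking that the same canonical $^{\ast}\phi$--connection underlies both representatives is a sensible point that the paper leaves implicit.
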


The coefficient $c_{0}(\ast )$ contains as a particular case the class of
zero--degree cohomologies computed for a metric of type $g_{i^{\prime
}j^{\prime }}(x)$ on $M,$ defining a solution of the Einstein equations and
lifted on cotangent bundle by formula (\ref{eq01}). In such cases, this
zero--degree coefficient defines certain quantum properties of the
gravitational field. A more rich geometric structure should be considered if
we define a value similar to $c_{0}(\ast )$ encoding the information about
Einstein manifolds lifted to the cotangent bundle.

\subsection{Quantum gravitational field equations}

Any solution in classical Einstein gravity can be embedded into a Hamilton
space model and quantized on cotangent bundle following the Fedosov
quantization adapted to $^{\ast }\phi $-- and d--connections. Considering a
de--quantization formalism \cite{karabeg2}, we construct certain quantum
deformations of the classical Einstein configurations in the classical
limit. Such a model defines a nonholonomic almost K\"{a}hler generalization
of the Einstein gravity on cotangent bundle. The solutions for the
''cotangent'' gravity are, in general, with violation of Lorentz symmetry
induced by quantum corrections. The nature of such quantum gravity
corrections is different from those defined by Finsler--Lagrange models on
tangent bundle (see, for instance, \cite{ma1987,ma,glsf,mign,ggp})), locally
anisotropic string gravity \cite{vstrf,vncsup,mavr} with corrections from
extra--dimensions and nonholonomic spinor gravity \cite{vstav,vfs,vhs} and
noncommutative gravity, see reviews of results in \cite{vsgg,vrfg}. The aim
of this section is to analyze how a generalization of Einstein gravity can
be performed on cotangent bundles in terms of canonical $^{\ast }\phi $%
--connections, with geometric structures induced by an effective Hamiltonian
fundamental function, when the Fedosov quantization can be naturally
performed.

For a canonical $^{\ast }\phi $--connections $\ ^{\phi }\widehat{\mathbf{D}}%
\mathbf{=\{\ ^{\phi }\Gamma \}}$ (\ref{ccpc}), we can define the Ricci
tensor,
\begin{equation*}
Ric(\ ^{\phi }\widehat{\mathbf{D}})=\{\ ^{\phi }\widehat{\mathbf{R}}_{\
\beta \gamma }\doteqdot \ ^{\phi }\widehat{\mathbf{R}}_{\ \beta \gamma
\alpha }^{\alpha }\},
\end{equation*}%
and the scalar curvature, $\ ^{\phi }R\doteqdot \ ^{\ast }\mathbf{g}^{\alpha
\beta }\ ^{\phi }\widehat{\mathbf{R}}_{\alpha \beta }.$ On cotangent bundle $%
T^{\ast }M,$ we postulate the field equations
\begin{equation}
\ ^{\phi }\widehat{\mathbf{R}}_{\ \beta }^{\underline{\alpha }}-\frac{1}{2}%
(\ ^{\phi }R+\lambda )\ ^{\phi }\mathbf{e}_{\ \beta }^{\underline{\alpha }%
}=8\pi G\mathbf{\Upsilon }_{\ \beta }^{\underline{\alpha }},
\label{deinsteq}
\end{equation}%
where $\mathbf{\Upsilon }_{\ \beta }^{\underline{\alpha }}$ is the effective
energy--momentum tensor, $\lambda $ is the cosmological constant, $G$ is the
Newton constant in the units when the light velocity $c=1,$ and $\ ^{\phi }%
\mathbf{e}_{\ \beta }=\ ^{\phi }\mathbf{e}_{\ \beta }^{\underline{\alpha }%
}\partial /\partial u^{\underline{\alpha }}$ is the $^{\ast }\phi $--adapted
base (\ref{fder}).

We consider the effective source 3--form
\begin{equation*}
\overleftarrow{\Upsilon }_{\ \beta }=\mathbf{\Upsilon }_{\ \beta }^{%
\underline{\alpha }}\ \epsilon _{\underline{\alpha }\underline{\beta }%
\underline{\gamma }\underline{\delta }}du^{\underline{\beta }}\wedge du^{%
\underline{\gamma }}\wedge du^{\underline{\delta }},
\end{equation*}%
where $\epsilon _{\alpha \beta \gamma \delta }$ is the absolute
antisymmetric tensor, and the action for the ''cotangent'' gravity and
matter fields,
\begin{equation*}
S[\mathbf{e,\Gamma ,\phi }]=\ ^{gr}S[\mathbf{e,\Gamma }]+\ ^{matter}S[%
\mathbf{e,\Gamma ,\phi }].
\end{equation*}

\begin{theorem}
\label{theq}The equations (\ref{deinsteq}) can be represented as 3--form
equations%
\begin{equation}
\epsilon _{\alpha \beta \gamma \tau }\left( \ ^{\phi }\mathbf{e}^{\alpha
}\wedge \ ^{\phi }\mathcal{R}^{\beta \gamma }+\lambda \ ^{\phi }\mathbf{e}%
^{\alpha }\wedge \ ^{\phi }\mathbf{e}^{\beta }\wedge \ ^{\phi }\mathbf{e}%
^{\gamma }\right) =8\pi G\overleftarrow{\Upsilon }_{\ \tau }  \label{einsteq}
\end{equation}%
following from action by varying the components of $\mathbf{e}_{\ \beta },$
when%
\begin{equation*}
\overleftarrow{\Upsilon }_{\ \tau }=\ ^{m}\overleftarrow{\Upsilon }_{\ \tau }%
\mbox{ and }\ ^{m}\overleftarrow{\Upsilon }_{\ \tau }=\ ^{m}\mathbf{\Upsilon
}_{\ \tau }^{\underline{\alpha }}\epsilon _{\underline{\alpha }\underline{%
\beta }\underline{\gamma }\underline{\delta }}du^{\underline{\beta }}\wedge
du^{\underline{\gamma }}\wedge du^{\underline{\delta }},
\end{equation*}%
where $\ ^{m}\mathbf{\Upsilon }_{\ \tau }^{\underline{\alpha }}=\delta \
^{matter}S/\delta \mathbf{e}_{\underline{\alpha }}^{\ \tau }.$
\end{theorem}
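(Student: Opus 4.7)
The plan is to follow a Palatini-type variational derivation adapted to the $^{\ast}\phi$-connection framework on $T^{\ast}M$. First, I would identify the natural gravitational action in the language of $^{\ast}\phi$-adapted vielbeins and curvature 2-forms, taking
\begin{equation*}
\ ^{gr}S[\mathbf{e},\mathbf{\Gamma}] = \frac{1}{16\pi G}\int \epsilon_{\alpha\beta\gamma\tau}\left( \ ^{\phi}\mathbf{e}^{\alpha}\wedge \ ^{\phi}\mathbf{e}^{\beta}\wedge \ ^{\phi}\mathcal{R}^{\gamma\tau} - \tfrac{\lambda}{2}\ ^{\phi}\mathbf{e}^{\alpha}\wedge \ ^{\phi}\mathbf{e}^{\beta}\wedge \ ^{\phi}\mathbf{e}^{\gamma}\wedge \ ^{\phi}\mathbf{e}^{\tau}\right),
\end{equation*}
where $\ ^{\phi}\mathcal{R}^{\gamma\tau}$ is the curvature 2-form of the canonical $^{\ast}\phi$-connection whose coefficients are given by Proposition \ref{prthprfo}. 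Adding $\ ^{matter}S$ and varying $\delta/\delta\mathbf{e}_{\underline{\alpha}}^{\ \tau}$ produces the left-hand side of (\ref{einsteq}) directly as the Euler--Lagrange 3-form, with the source 3-form $\overleftarrow{\Upsilon}_{\ \tau}$ as defined in the statement.

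Next, to establish equivalence with the tensor equations (\ref{deinsteq}), I would expand the curvature 2-form in the $^{\ast}\phi$-adapted coframe,
\begin{equation*}
\ ^{\phi}\mathcal{R}^{\beta\gamma} = \tfrac{1}{2}\ ^{\phi}\widehat{\mathbf{R}}^{\ \beta\gamma}_{\sigma\rho}\ ^{\phi}\mathbf{e}^{\sigma}\wedge\ ^{\phi}\mathbf{e}^{\rho},
\end{equation*}
and use the contraction identity
\begin{equation*}
\epsilon_{\alpha\beta\gamma\tau}\ ^{\phi}\mathbf{e}^{\alpha}\wedge\ ^{\phi}\mathbf{e}^{\sigma}\wedge\ ^{\phi}\mathbf{e}^{\rho} = \delta^{\ \sigma\rho}_{\beta\gamma\tau}\,\ ^{\phi}\mathrm{vol},
\end{equation*}
where $\ ^{\phi}\mathrm{vol}$ is the canonical volume form and $\delta^{\ \sigma\rho}_{\beta\gamma\tau}$ the generalized Kronecker symbol. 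After contracting indices through $\delta^{\ \sigma\rho}_{\beta\gamma\tau}\ ^{\phi}\widehat{\mathbf{R}}^{\ \beta\gamma}_{\sigma\rho}$, the pure-curvature term collapses to the Einstein tensor $\ ^{\phi}\widehat{\mathbf{R}}^{\underline{\alpha}}_{\ \beta} - \tfrac{1}{2}\ ^{\phi}R\,\ ^{\phi}\mathbf{e}^{\underline{\alpha}}_{\ \beta}$, and the cosmological-constant term reduces to $-\tfrac{\lambda}{2}\ ^{\phi}\mathbf{e}^{\underline{\alpha}}_{\ \beta}\,\ ^{\phi}\mathrm{vol}$. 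Matching with the matter 3-form $\overleftarrow{\Upsilon}_{\ \tau}$ recovers (\ref{deinsteq}).

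A technical but essential step is to explain why varying only $\mathbf{e}$ (not the connection) already yields the correct equations: this is because the canonical $^{\ast}\phi$-connection $\ ^{\phi}\widehat{\mathbf{D}}$ is, by the preceding uniqueness theorem, completely determined by $H(x,p)$ and the N-connection generated from $^{\ast}g^{ab}$, so its variation is induced by $\delta\mathbf{e}$ and, modulo boundary contributions plus terms proportional to the torsion identified in Proposition \ref{prthprfo}, does not contribute independent field equations. Here one invokes the metric- and almost-symplectic-compatibility $\ ^{\phi}\widehat{\mathbf{D}}\ ^{\ast}\mathbf{g}=0$, $\ ^{\phi}\widehat{\mathbf{D}}\ ^{\ast}\mathbf{\theta}=0$ and the vanishing of the horizontal torsion $\ ^{\phi}\widehat{T}^{k}_{\ ij}=0$ to discard the extra terms.

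The main obstacle will be the careful bookkeeping of the nonholonomic $h/w$-splitting in the variation: because $\ ^{\phi}\mathbf{e}^{\alpha}$ mixes the $dx^{i}$ and $dp_{a}$ sectors through $^{\ast}g^{bi}$ and $^{\ast}N_{ia}$ (cf.\ (\ref{fder}), (\ref{fdif})), the variation $\delta\mathbf{e}$ induces simultaneous variations of the N-connection coefficients and the connection-pair, and one must check that the resulting cross-terms cancel pairwise using the compatibility relations of Theorem \ref{th3a} and the identity $\ ^{\phi}\mathbf{\delta}\ ^{\phi}r=\ ^{\phi}\widehat{\mathcal{T}}+\ ^{\phi}\widehat{\mathcal{R}}+\ldots$ so that only the claimed combination in (\ref{einsteq}) survives.
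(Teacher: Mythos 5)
Your proposal is correct and follows essentially the same route as the paper, which simply appeals to the ``usual textbook differential form calculus'' (citing Misner--Thorne--Wheeler and Rovelli) carried out in the $^{\ast}\phi$--adapted bases (\ref{fder}), (\ref{fdif}); your Palatini-type action, the variation in $\mathbf{e}$, and the contraction back to the tensor form (\ref{deinsteq}) are exactly the details that textbook calculus supplies. The only cosmetic quibble is that your final cancellation argument need not invoke Theorem \ref{th3a} (a deformation-quantization statement); metric and almost-symplectic compatibility of the canonical $^{\ast}\phi$--connection already suffice.
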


\begin{proof}
It is a usual textbook and/or differential form calculus (see, for instance, %
\cite{mtw,rovelli}). In our case, we have to use the $^{\ast }\phi $--adapted bases (%
\ref{fder}) and (\ref{fdif}) for $\ ^{\phi }\widehat{\mathbf{D}}.$ $\square $
\end{proof}

\vskip3pt

The Chern--Weyl 2--form (\ref{aux4}) can be used to define the quantum
version\footnote{in the sense of deformation quantization
(i.e. when quantum equations are derived following a  deformation procedure)
 but not of perturbative quantum theory}
of Einstein equations (\ref{einsteq}) in the approaches with
deformation quantization:

\begin{corollary}
The quantum field equations on cotangent bundle generalizing the Einstein
gravitations in general relativity are
\begin{equation}
\ ^{\phi }\mathbf{e}^{\alpha }\wedge \ ^{\phi }\gamma =\epsilon ^{\alpha
\beta \gamma \tau }2\pi G\ ^{\ast }\mathbf{J}_{\beta \gamma }\overleftarrow{%
\Upsilon }_{\ \tau }\ -\frac{\lambda }{4}\ ^{\ast }\mathbf{J}_{\beta \gamma
}\ ^{\phi }\mathbf{e}^{\alpha }\wedge \ ^{\phi }\mathbf{e}^{\beta }\wedge \
^{\phi }\mathbf{e}^{\gamma }.  \label{aseq}
\end{equation}
\end{corollary}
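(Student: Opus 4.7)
The plan is to derive (\ref{aseq}) directly from the classical 3--form Einstein equations (\ref{einsteq}) established in Theorem \ref{theq} by contracting with the almost complex structure $\ ^{\ast}\mathbf{J}$ and recognizing the resulting curvature combination as (four times) the Chern--Weyl form $\ ^{\phi}\gamma$ introduced in (\ref{aux4}). Since (\ref{einsteq}) is the $^{\ast}\phi$--adapted Einstein system generated by the canonical $^{\ast}\phi$--connection $\ ^{\phi}\widehat{\mathbf{D}}$, and $\ ^{\phi}\gamma$ is built from the same curvature via $\ ^{\phi}\gamma =-\tfrac{1}{4}\ ^{\ast}\mathbf{J}_{\tau}^{\ \alpha'}\ ^{\phi}\widehat{\mathcal{R}}_{\ \alpha'}^{\tau}$, the passage from one to the other is purely algebraic at the level of forms.

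First, I would expand the curvature 2--form $\ ^{\phi}\mathcal{R}^{\beta\gamma}$ appearing in (\ref{einsteq}) in $^{\ast}\phi$--adapted coframes, using Proposition \ref{prthprfo}, so that the equation reads as a component identity relating $\ ^{\phi}\widehat{\mathbf{R}}^{\beta\gamma}_{\ \ \alpha\alpha'}$, the coframes $\ ^{\phi}\mathbf{e}^{\alpha}$, and the source 3--form $\overleftarrow{\Upsilon}_{\ \tau}$. Second, I would multiply both sides of (\ref{einsteq}) by the 0--form $\ ^{\ast}\mathbf{J}_{\beta\gamma}$ and apply the identity $\ ^{\ast}\mathbf{J}_{\tau}^{\ \alpha'}\ ^{\phi}\widehat{\mathcal{R}}^{\tau}_{\ \alpha'}=-4\ ^{\phi}\gamma$ from (\ref{aux4}) to replace the combination $\mathbf{J}\cdot\widehat{\mathcal{R}}$ by $-4\ ^{\phi}\gamma$. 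Third, using the $\epsilon$--$\delta$ contraction identities (together with the duality between 3--forms and 1--forms on a 4--dimensional slice of the cotangent bundle), the expression $\ ^{\ast}\mathbf{J}_{\beta\gamma}\epsilon_{\alpha\beta'\gamma'\tau}\ ^{\phi}\mathbf{e}^{\alpha}\wedge \ ^{\phi}\mathcal{R}^{\beta'\gamma'}$ reorganizes into $\ ^{\phi}\mathbf{e}^{\alpha}\wedge\ ^{\phi}\gamma$ on the LHS and the contraction $\epsilon^{\alpha\beta\gamma\tau}\ ^{\ast}\mathbf{J}_{\beta\gamma}\overleftarrow{\Upsilon}_{\ \tau}$ on the RHS; the cosmological constant term transforms in a parallel way and produces the $-\tfrac{\lambda}{4}\ ^{\ast}\mathbf{J}_{\beta\gamma}\ ^{\phi}\mathbf{e}^{\alpha}\wedge \ ^{\phi}\mathbf{e}^{\beta}\wedge\ ^{\phi}\mathbf{e}^{\gamma}$ term of (\ref{aseq}). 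The factor $2\pi G$ on the source arises as $8\pi G$ divided by the normalization $4$ built into the definition (\ref{aux4}) of $\ ^{\phi}\gamma$.

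The main obstacle will be the bookkeeping of signs and combinatorial factors through the double contraction of $\epsilon$--symbols with $\ ^{\ast}\mathbf{J}$: one must carefully separate the symmetric and antisymmetric parts of $\ ^{\ast}\mathbf{J}_{\beta\gamma}\ ^{\phi}\widehat{\mathbf{R}}^{\beta\gamma}_{\ \ \alpha\alpha'}$, use the algebraic Bianchi identity for $\ ^{\phi}\widehat{\mathbf{D}}$, and keep track of the $1/2$ in the 2--form expansion of $\ ^{\phi}\mathcal{R}$ together with the $1/4$ in $\ ^{\phi}\gamma$. This calculation is tractable because the compatibility condition $\ ^{\phi}\widehat{\mathbf{D}}\ ^{\ast}\mathbf{J}=0$ (the analogue of (\ref{asccond}) for the canonical $^{\ast}\phi$--connection) ensures that the contraction with $\mathbf{J}$ commutes with the curvature operator, so that the identification of the Ricci--type trace of $\ ^{\phi}\widehat{\mathbf{R}}$ with (a multiple of) $\ ^{\phi}\gamma$ is clean. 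Once all coefficients are matched, (\ref{aseq}) follows as the rewriting of (\ref{einsteq}) in terms of the deformation--quantization variable $\ ^{\phi}\gamma$, which by Theorem \ref{th3c} encodes the zero--degree cohomology class $c_{0}(\ast)$ of the star product.
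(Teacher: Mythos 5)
Your proposal follows essentially the same route as the paper's (very terse) proof: wedge $\ ^{\phi }\mathbf{e}^{\alpha }$ with the Chern--Weyl form (\ref{aux4}), substitute the 3--form Einstein equations (\ref{einsteq}) of Theorem \ref{theq}, and introduce the contraction with $\ ^{\ast }\mathbf{J}_{\beta \gamma }$ so that the factor $1/4$ in $\ ^{\phi }\gamma $ converts $8\pi G$ into $2\pi G$ and $\lambda $ into $-\lambda /4$. Your account is in fact more explicit than the paper's one-sentence argument about where the normalizations come from, but it is the same derivation.
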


\begin{proof}
Multiplying $\ ^{\phi }\mathbf{e}^{\alpha }\wedge $ with (\ref{aux4}),
taking into consideration the equation (\ref{einsteq}), and introducing the
almost complex operator $^{\ast }\mathbf{J}_{\beta \gamma },$ we get the
almost symplectic form of the Einstein equations (\ref{aseq}). $\square $
\end{proof}

\vskip5pt

An explicit computation of $\ ^{\phi }\gamma $ for nontrivial matter fields
has to be performed for a deformation quantization model with interacting
gravitational and matter fields geometrized in terms of an almost K\"{a}hler
model defined for spinor and fiber bundles on cotangent bundles.

\section{Conclusions}

In this paper we outlined a method of converting any regular Lagrange and
Hamilton dynamics into equivalent almost K\"ahler geometries with canonical
nonlinear connection (N--connection) and adapted almost symplectic
structures. The formalism was performed to be invariant under
symplectic morphisms and adapted to Legendre transforms of Lagrangians into
Hamiltonians and inversely. The geometry of cotangent bundles endowed with
nonholonomic distributions, formulated as certain Hamilton--Cartan spaces
being dual to the corresponding Lagrange--Finsler spaces, described in
Sections 2 and 3 presents a key prerequisite of this approach to deformation
quantization.

Given a regular Lagrange (in particular, Finsler; and related Legendre
transforms), or Hamilton (in particular, Cartan), generating function, we
completely define the fundamental geometric objects of a Hamilton geometry
modelled on cotangent bundle, inducing canonical almost symplectic and
compatible symplectic connection.
The connection--pair and canonical $^{\ast}\phi $--connection uniquely constructed to preserve the invariance under
symplectic morphisms are introduced in Section 4.2.
Such geometric objects are crucial for defining Fedosov--Hamilton operator--pairs. This allows us to
generalize the Fedosov's theorems for deformation quantization to the case
of Hamilton spaces and to show how the Einstein manifolds can be encoded
into such a quantization scheme.

Any classical solution of the Einstein equations can be lifted to the
cotangent bundle, and embedded into a Hamilton space geometry using frame
transforms variables depending both on spacetime and phase (momentum like)
coordinates. Such variables can be defined from a de--quantization procedure
like that considered in Ref. \cite{karabeg2} but in our case re--formulated
for nonholonomic cotangent bundles. In general, such constructions result in
quasi--classical effects of quantum gravity with violations of local Lorentz
symmetry. The surprising result advocated in this paper is that we can model
classical and quantum gravitational effects by corresponding effective
classical and quantum Hamilton mechanics systems. Nevertheless, certain
additional phenomenological and/or experimental data for quantum gravity
effects have to be assigned to the scheme in order to define the
nonholonomic frame transforms of locally isotropic gravitational fields into
quantized and semi--classical ones on cotangent bundle.

On cotangent bundles and curved phase spaces, there were developed different
methods of quantization of nonlinear field theories and mechanical systems
with nonholonomic constraints. For instance, in Refs. \cite{bnw1,bnw2},
there were constructed in explicit form examples of "cotangent"
star--products (in such cases, one derive certain compatible symplectic /
Levi Civita type connections). The formalism was revised and developed by
introducing auxiliary variables, with further restrictions, and/or higher
spin systems, in analogy with gauge theories with generalized symmetries, in
the framework of the BRST approach \cite{bgst,vasil,grig,lyakh2,bff}. In
general, the methods of quantization of nonlinear physical systems on
contangent bundles are very different from that on tangent bundles. The
source of this is in the fact that Lagrange and Hamilton (both classical and
quantum) schemes may result in very different quantum models for nonlinear
classical theories: On phase spaces, we have to consider Legendre transform
and additional symmetries related to symplectic morphisms
(i.e. morphisms preserving a symplectic structure into a symplectic structure). In such a case, the
author of \cite{karabeg} had to work with symplectic groupoids and introduce
contravariant connections which modified substantially the Fedosov scheme of
quantization.

There are various ideas and approaches to quantize gravity
theories, including deformation quantization. In our partner works
\cite{vqgr3,vqgr4} we elaborated a direction related to effective
Lagrange--Finsler geometries by performing nonholonomic
deformations to quantum versions of (semi) Riemannian manifolds
preserving the local Lorentz invariance at least for
semi--classical approximations. Following the same geometric
methods developed for Finsler spaces, but extended for
nonholonomic manifolds, we proved that there are similar
quantization schemes of gravity and Lagrange--Finsler spaces
modelled on tangent bundles \cite{vqgr1,vqgr2}. Such constructions
also result in violations of local Lorentz symmetry but with
physical effects which are very different from those for models on
cotangent bundles. The general conclusion of this paper is that a
deformation quantization scheme for Hamilton spaces and related
generalizations of the Einstein gravity on cotangent bundle
results in more rich geometric structures and requests more
advanced geometric methods with nonlinear connections and
connection--pairs. We can elaborate a standard Fedosov formalism
also on cotangent bundle (not involving groupoid structures) by
introducing canonical nonlinear connections structures and
generalizing the concept of linear and distinguished connections
to that of connection pairs.

We are planing to compare different approaches to deformation quantization
of gravity (preserving or violating the local Lorentz invariance) and other
quantization schemes in our further works.

\vskip3pt

\textbf{Acknowledgement: }M. A. was partially supported by grant
CNCSIS,1158/2007, Romania. S. V. performed his work as a visitor at Fields
Institute.

\end{document}